\title{Optimal Approximate Minimization of One-Letter Weighted Finite Automata}
\date{}
\author[1,2]{Clara Lacroce\footnote{Corresponding author: clara.lacroce@mail.mcgill.ca  \href{https://orcid.org/0000-0002-2087-7630}{ORCID}\\Preprint under consideration for publication in \emph{Mathematical Structures in Computer Science}.}}
\author[3]{Borja Balle}
\author[1,2]{Prakash Panangaden}
\author[2,4,5]{Guillaume Rabusseau}
\affil[1]{School of Computer Science, McGill University, Montr\'eal, Canada}
\affil[2]{Mila, Montr\'eal, Canada}
\affil[3]{DeepMind, London, United Kingdom.}
\affil[4]{DIRO, Universit\'e de Montr\'eal, Montr\'eal, Canada}
\affil[5]{CIFAR AI Chair}
\newtheorem{therm}{Theorem}
\newtheorem{corollary}[therm]{Corollary}
\newtheorem{definition}[therm]{Definition}
\newtheorem{lemma}[therm]{Lemma}
 \newtheorem{example}{Example}
\newcommand{\mat}[1]{\mathbf{#1}}
\newcommand{\norm}[1]{\|#1\|}
\newcommand{\rank}{\operatorname{rank}}
\newcommand{\C}{\mathbb{C}}
\newcommand{\N}{\mathbb{N}}
\newcommand{\Z}{\mathbb{Z}}
\newcommand{\R}{\mathbb{R}}
\newcommand{\A}{\mat{A}}
\renewcommand{\H}{\mat{H}}
\newcommand{\mP}{\mat{P}}
\newcommand{\mQ}{\mat{Q}}
\newcommand{\mT}{\mat{T}}
\newcommand{\balpha}{\boldsymbol{\alpha}}
\newcommand{\bbeta}{\boldsymbol{\beta}}
\newcommand{\wfa}{\langle \balpha , \{\A_a\},  \bbeta \rangle}
\newcommand{\wa}{\langle \balpha , \A,  \bbeta \rangle}
\begin{document}

\maketitle

\begin{abstract}

    In this paper, we study the approximate minimization problem of weighted finite automata (WFAs): to compute the best possible approximation of a WFA given a bound on the number of states. By reformulating the problem in terms of Hankel matrices, we leverage classical results on the approximation of Hankel operators, namely the celebrated Adamyan-Arov-Krein (AAK) theory. 
    
    We solve the optimal spectral-norm approximate minimization problem for irredundant WFAs with real weights, defined over a one-letter alphabet. We present a theoretical analysis based on AAK theory,  and bounds on the quality of the approximation in the spectral norm and $\ell^2$ norm. Moreover, we provide a closed-form solution, and an algorithm, to compute the optimal approximation of a given size in polynomial time. 
\end{abstract}
\section{Introduction}



    Weighted finite automata (WFAs) are an expressive class of models representing functions defined over sequences. The \emph{approximate minimization problem} is concerned with finding an automaton that approximates the behaviour of a given minimal WFA, while being smaller in size. Clearly, the two automata compute different languages, so the objective is to minimize the approximation error~\citep{Balle15,Balle19}. Approximate minimization can be particularly useful in the context of spectral learning algorithms~\citep{BaillySpectral,Hsu,BalleCLQ14,ballewill}. When applied to a learning task, such algorithms can be viewed as working in two steps. First, they compute a minimal WFA that explains the training data exactly. Then, they obtain a model that generalizes to the unseen data by producing a smaller approximation to the minimal WFA, thus preventing overfitting of the data.

    A key point in solving approximation tasks is to choose how to quantify the error. We propose to rewrite the problem in terms of Hankel matrices, mathematical objects strictly related to WFAs, and to measure the error in terms of the \emph{spectral norm}. This allows us to exploit the work of Adamyan, Arov and Krein which has come to be known as AAK theory~\citep{AAK71}: a series of results connecting the theory of complex functions to Hankel matrices. The core of this theory provides us with theoretical guarantees for the exact computation of the spectral norm of the error, and a method to construct the optimal approximation. We show that the spectral norm of the Hankel matrix of a WFA can be computed accurately in polynomial time (cubic in the number of states of the automaton). This is a great advantage compared, for example, to behavioural norms, which are easier to interpret but harder to compute~\citep{Balle17,pascale_journal}. The spectral norm has another advantage over WFA-specific behavioural metrics. In fact, an important extension of this work is the application of the method to other classes of models. In the one-letter case, a similar algorithm can be found to approximate a black-box model over sequential data using a WFA \citep{AAK-RNN}. With this in mind, we think that it is preferable to consider a norm defined on the input-output function -- or the Hankel matrix --  rather than the parameters of the specific model considered.
    
    We summarize our main contributions:
    \begin{itemize}
        \item We apply AAK theory to the approximate minimization problem of WFAs by establishing a correspondence between the parameters of a WFA and the coefficients of a complex function on the unit circle. To the best of our knowledge, this paper represents the first attempt to apply AAK theory to WFAs. 
        \item We present a theoretical analysis of the optimal spectral-norm approximate minimization problem of WFAs, based on their connection with finite-rank infinite Hankel matrices. We provide a closed form solution for real weighted automata $A=\wa$ over a one-letter alphabet, under the assumption $\rho(\mat{A})<1$ on the spectral radius. We bound the approximation error, both in terms of the Hankel matrix (spectral norm) and of the rational function computed by the WFA ($\ell^2$ norm).
        \item We propose a self-contained algorithm that returns the unique optimal spectral-norm approximation of a given size in polynomial time.
        \item We tighten the connection, made in~\cite{Balle19}, between WFAs and discrete dynamical systems, by adapting some of the control theory concepts to this setting, \emph{e.g.} the \emph{allpass system} \citep{Glover}.
    \end{itemize}

    In this paper we present and expand the results of our previous work \citep{AAK-WFA}. The contents of this paper are organized as follows. In Section \ref{sec:background} we define the notation that will be used throughout the paper and review a series of well-known results from the theory of automata and from functional analysis. In Section \ref{sec:aakapproxmin} we establish the framework to reformulate the approximate minimization problem in terms of Hankel operators and AAK theory. Section \ref{solution} presents the theoretical foundation of our contribution and a closed-form solution for our problem. Section \ref{sec:algorithm} shows how to implement the algorithm derived from the solution obtained in the previous section, while in Section \ref{sec:example} we provide an example and compute the optimal approximation of a given WFA. Section \ref{relatedwork} discusses the related work in approximate minimization and control theory. Finally, in Sections \ref{future_work} and \ref{conclusion} we highlight possible directions for future work, analyze the limitations of this approach and summarize our contribution.
        
\section{Background}\label{sec:background}

    In this section, we recall the fundamental definitions and preliminary results that are used throughout the paper. After defining weighted finite automata and Hankel matrices, we will provide an overview of AAK theory. We will see in the next section that our objective is to rewrite the approximate minimization problems as low-rank approximation of a Hankel matrix. In the paper, We use AAK theory to solve the low-rank approximation problem while preserving the Hankel property.

\subsection{Preliminaries}

    We denote with $\N$, $\Z$ and $\R$ the set of natural, integers and real numbers, respectively. We use bold letters for vectors and matrices; all vectors considered are column vectors. We denote with $\mat{1}$ the identity matrix, specifying its dimension only when not clear from the context. We refer to the $i$-th row and the $j$-th column of $\mat{M}$ by $\mat{M}(i,:)$ and $\mat{M}(:,j)$. Given a matrix $\mat{M}\in \R^{p\times q} $ of rank $n$, a \emph{rank factorization} is a factorization $\mat{M}=\mP\mQ$, where $\mP \in \R^{p\times n}$, $\mQ \in \R^{n\times q}$ and $\rank(\mat{M})=\rank(\mP)=\rank(\mQ)=n$. Let $\mat{M} \in \R^{p \times q}$ of rank $n$, the compact \emph{singular value decomposition} SVD of $\mat{M}$ is the factorization $\mat{M}=\mat{U}\mat{D}\mat{V}^{\top}$, where $\mat{U}\in \R^{p\times n}$, $\mat{D}\in \R^{n\times n}$, $\mat{V}\in \R^{q \times n}$ are such that $\mat{U}^{\top}\mat{U}=\mat{V}^{\top}\mat{V}=\mat{1}$, and $\mat{D}$ is a diagonal matrix.
    The columns of $\mat{U}$ and $\mat{V}$ are called left and right \emph{singular vectors}, while the entries of $\mat{D}$ are the \emph{singular values}. The \emph{Moore-Penrose pseudo-inverse} $\mat{M}^+$ of $\mat{M}$ is the unique matrix such that $\mat{M}\mat{M}^+\mat{M}=\mat{M}$, $\mat{M}^+\mat{M}\mat{M}^+=\mat{M}^+$, with $\mat{M}^+\mat{M}$ and $\mat{M}\mat{M}^+$ Hermitian~\citep{Zhu}.
    The \emph{spectral radius} $\rho(\mat{M})$ of a matrix $\mat{M}$ is the largest modulus among its eigenvalues. 

    A \emph{Hilbert space} is a complete normed vector space where the norm arises from an inner product. A linear operator  $T: X \rightarrow Y$ between Hilbert spaces is \emph{bounded} if it has finite operator norm, \emph{i.e.} $\norm{T}_{op} = \sup_{\norm{g}_X\leq 1}\norm{Tg}_Y < \infty$. We denote by $\mat{T}$ the (infinite) matrix associated with $T$ by some (canonical) orthonormal basis on $H$. An operator is \emph{compact} if the image of the unit ball in $X$ is relatively compact. Given Hilbert spaces $X, Y$ and a compact operator $T:X \rightarrow Y$, we denote its adjoint by $T^*$. The \emph{singular numbers} $\{\sigma_n\}_{n \geq 0}$ of $T$ are the square roots of the eigenvalues of the self-adjoint operator $T^* T$, arranged in decreasing order.
    A $\sigma$-\emph{Schmidt pair} $\{\boldsymbol{\xi}, \boldsymbol{\eta}\}$ for $T$ is a couple of norm $1$ vectors such that: $\mat{T}\boldsymbol{\xi}=\sigma \boldsymbol{\eta}$ and $\mat{T}^*\boldsymbol{\eta}= \sigma\boldsymbol{\xi}$.
    The Hilbert-Schmidt decomposition provides a generalization of the compact SVD for the infinite matrix of a compact operator $T$ using singular numbers and orthonormal Schmidt pairs: $\mT\mat{x}=\sum_{n\geq0}\sigma_n\langle\mat{x},\boldsymbol{\xi}_n \rangle \boldsymbol{\eta}_k$~\citep{Zhu}. The \emph{spectral norm} $\norm{\mat{T}}$ of the matrix representing the operator $T$ is the largest singular number of $T$. Note that the spectral norm of $\mat{T}$ corresponds to the operator norm of $T$.

    Let $\ell^2$ be the Hilbert space of square-summable sequences over $\Sigma^*$, with norm $\norm{f}_2^2=\sum_{x\in\Sigma^*}|f(x)|^2$ and inner product $\langle f, g \rangle = \sum_{x \in \Sigma^*}f(x)g(x)$ for $f,g \in \R^{\Sigma^*}$. Let $\mathbb{T}=\{z\in \C: |z|=1\}$ be the complex unit circle, $\mathbb{D}=\{z\in \C: |z|<1\}$ the (open) complex unit disc. Let $1<p< \infty$, $\mathcal{L}^p(\mathbb{T})$ be the space of measurable functions on $\mathbb{T}$ for which the $p$-th power of the absolute value is Lebesgue integrable. For $p=\infty$, we denote with $\mathcal{L}^{\infty}(\mathbb{T})$ the space of measurable functions that are bounded, with norm $\norm{f}_{\infty}=\sup\{|f(x)|: x\in\mathbb{T}\}$.

\subsection{Hankel matrix and Weighted Automata}
    
    Let $\Sigma$ be a fixed finite alphabet and $\Sigma^*$ be the set of all finite strings with symbols in $\Sigma$. We denote with $\varepsilon$ the empty string. Given $p,s \in \Sigma^*$, we denote with $ps$ the string obtained by their concatenation. Let $f : \Sigma^* \to \R$ be a function defined on sequences, we consider the bi-infinite matrix $\H_f \in \R^{\Sigma^* \times \Sigma^*}$ having rows and columns indexed by strings and defined by $\H_f(p,s) = f(ps)$ for $p, s \in \Sigma^*$.
    
    \begin{definition}
         A matrix $\H \in \R^{\Sigma^* \times \Sigma^*}$ is \textbf{Hankel} if for all $p, p', s, s' \in \Sigma^*$ such that $p s = p' s'$, we have $\H(p,s) = \H(p',s')$. 
    \end{definition}
    Given a Hankel matrix $\H \in \R^{\Sigma^* \times \Sigma^*}$, there is a unique function $f : \Sigma^* \to \R$ such that $\H_f = \H$. Intuitively, the Hankel property tells us that each entry of the matrix only depends on the composition of the coordinates. Since rows and columns are indexed using strings, then the value stored in each entry only depends on the string obtained by concatenating the coordinates. 
    
    Weighted finite automata are a class of models defined over sequential data.
    A \emph{weighted finite automaton} (WFA) of $n$ states over $\Sigma$ is a tuple $A = \wfa$, where $\balpha,$ $\bbeta \in \R^n$ are the vector of initial and final weights, respectively, and $\A_a \in \R^{n \times n}$ is the matrix containing the transition weights associated with each symbol $a\in\Sigma$. Every WFA $A$ with real weights realizes (or computes) a function $f_A : \Sigma^* \to \R$, \emph{i.e.} given a string  $x = x_1 \cdots x_t \in \Sigma^*$, it returns $f_A(x) = \balpha ^\top \A_{x_1} \cdots \A_{x_t} \bbeta = \balpha ^\top \A_x \bbeta$. A function $f : \Sigma^* \to \R$ is called \emph{rational} if there exists a WFA $A$ that realizes it. 
    Given $f : \Sigma^* \to \R$, we can use the Hankel matrix $\H_f \in \R^{\Sigma^* \times \Sigma^*}$ to recover information about the weighted automaton computing $f$.
  
    \begin{therm}[\cite{CP71,Fli}]\label{fliess}
        A function $f:\Sigma^*\rightarrow \R$ is realized by a WFA $A$ if and only if $\H_f$ has finite rank. In that case, the rank of $\H_f$ corresponds to the minimal number of states of any automaton realizing $f$.
    \end{therm}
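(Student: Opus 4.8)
The plan is to prove the Fliess–Carlyle–Paz theorem, which characterizes rationality of $f$ via the finite rank of its Hankel matrix $\H_f$, together with the matching of minimal rank and minimal state count.
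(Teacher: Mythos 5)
Your proposal contains no proof: it only restates the theorem to be proved. Announcing ``the plan is to prove the Fliess--Carlyle--Paz theorem'' and then describing what the theorem says is a paraphrase of the statement, not an argument. (For reference, the paper itself does not prove this result either; it cites it to Carlyle--Paz and Fliess as a classical theorem, so there is no internal proof to compare against --- but that does not make an empty proposal acceptable.)

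Concretely, a proof has to supply three things, none of which appear in your proposal. First, the easy direction: if $f$ is realized by a WFA $A=\wfa$ with $n$ states, then $\H_f(p,s)=f(ps)=\balpha^\top\A_p\A_s\bbeta$, so $\H_f=\mat{F}_A\mat{B}_A^\top$ with $\mat{F}_A\in\R^{\Sigma^*\times n}$, $\mat{B}_A\in\R^{\Sigma^*\times n}$ the forward and backward matrices, whence $\rank(\H_f)\leq n$. Second, the substantive converse: given that $\rank(\H_f)=n<\infty$, one must \emph{construct} an $n$-state WFA realizing $f$. The standard route is to pick a rank factorization (or a set of $n$ prefixes whose rows form a basis of the row space of $\H_f$) and use the Hankel shift property $\H_f(p,as)=\H_f(pa,s)$ to show that, for each $a\in\Sigma$, the map sending the row of $p$ to the row of $pa$ is linear on the row space; this linear map is the transition matrix $\A_a$, and $\balpha,\bbeta$ are read off from the row of $\varepsilon$ and the column of $\varepsilon$. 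One must then verify by induction on $|x|$ that the constructed automaton computes $f(x)$ for every string $x$. Third, minimality: combining the two directions, $\rank(\H_f)$ is both a lower bound on the number of states of any WFA realizing $f$ and is achieved by the constructed automaton, so it equals the minimal number of states. Until you carry out at least the converse construction and the verification step, there is nothing to evaluate.
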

    
    Given a WFA $A = \wfa $, the \emph{forward matrix} of $A$ is the infinite matrix $\mat{F}_A \in \R^{\Sigma^* \times n}$ given by $\mat{F}_A(p,:) = \balpha ^\top \A_p$ for any $p \in \Sigma^*$, while the \emph{backward matrix} of $A$ is $\mat{B}_A \in \R^{\Sigma^* \times n}$, given by $\mat{B}_A(s,:) = (\A_s \bbeta)^\top$ for any $s \in \Sigma^*$. Let $\H_f$ be the Hankel matrix of $f$, its forward-backward (FB) factorization is: $\H_f = \mat{F} \mat{B}^\top$. A WFA with $n$ states is \emph{reachable} if $\rank(\mat{F}_A)=n$, while it is \emph{observable} if $\rank(\mat{B}_A)=n$. A WFA is \emph{minimal} if it is reachable and observable. If $A$ is minimal, the (unique) FB factorization is a rank factorization~\citep{BalleCLQ14}.  

    We recall the definition of the singular value automaton, a canonical form for WFAs \citep{Balle15}.
    \begin{definition}
	    Let $f:\Sigma^*\rightarrow \R$ be a rational function and suppose $\H_f$ admits an SVD, $\H_f = \mat{U} \mat{D} \mat{V}^{\top}$. A \textbf{singular value automaton} (SVA) for $f$ is the minimal WFA $A$ realizing $f$ such that $\mat{F}_A=\mat{U} \mat{D}^{1/2}$ and $\mat{B}_A=\mat{V}\mat{D}^{1/2}$.
    \end{definition}
    The SVA can be computed with an efficient algorithm relying on the following matrices~\citep{Balle19}.

    \begin{definition}\label{def:gramians}
        Let $f:\Sigma^*\rightarrow \R$ be a rational function, $\H_f = \mat{F} \mat{B}^\top$ the FB factorization. If the matrices $\mP=\mat{F}^\top \mat{F}$ and $\mQ=\mat{B}^\top \mat{B}$ are well defined (\emph{i.e.} the inner products of their columns are finite for any column), we call $\mP$ the \textbf{reachability Gramian} and $\mQ$ the \textbf{observability Gramian}.
    \end{definition}
    
    If $A$ is in its SVA form, the Gramians associated with its FB factorization satisfy $\mP_A = \mQ_A = \mat{D}$, where $\mat{D}$ is the matrix of singular values of the corresponding Hankel matrix.
    The Gramians can alternatively be characterized and computed~\citep{Balle19}) using fixed point equations, corresponding to Lyapunov equations when $|\Sigma|=1$~\citep{lyapunov}.
    
    \begin{therm}
        Let $|\Sigma|=1$, $A= \langle \balpha, \A, \bbeta\rangle$ a WFA with $n$ states and well-defined Gramians $\mP$, $\mQ$. Then $X=\mP$ and $Y=\mQ$ solve the equations $X-\A X\A^{\top}=\bbeta\bbeta^{\top}$ and $Y-\A^{\top}Y\A=\balpha\balpha^{\top}$.
    \end{therm}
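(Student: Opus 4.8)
The plan is to read both identities as discrete Lyapunov (Stein) equations and to verify them by exhibiting the explicit series representations of the two Gramians, whose convergence is exactly what the well-definedness hypothesis provides. First I would unwind the one-letter notation: writing $\A=\A_a$ and indexing strings by their length, the forward--backward factorization $\H_f=\mat{F}\mat{B}^{\top}$ has rows built from the powers $\A^{k}$, so assembling the Gramians from these rows yields, in the one-letter case,
\[
\mP=\sum_{k\ge 0}\A^{k}\bbeta\bbeta^{\top}(\A^{\top})^{k}, \qquad \mQ=\sum_{k\ge 0}(\A^{\top})^{k}\balpha\balpha^{\top}\A^{k}.
\]
The standing assumption that $\mP$ and $\mQ$ are well defined is precisely the statement that these column inner products are finite, i.e. that the two series converge; I would record at the outset that this forces $\rho(\A)<1$, the single fact that legitimises every manipulation below.

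Next I would verify each equation by a telescoping (shift) argument, which I prefer over a term-by-term re-indexing because it keeps the placement of transposes transparent. The key observation is that conjugating a Gramian by $\A$ on the appropriate side reproduces its own defining series with the $k=0$ term deleted: forming $\A\,\mP\,\A^{\top}$ advances the summation index by one, so that $\mP-\A\mP\A^{\top}$ collapses to the single surviving $k=0$ term $\bbeta\bbeta^{\top}$; symmetrically, forming $\A^{\top}\mQ\A$ gives $\mQ-\A^{\top}\mQ\A=\balpha\balpha^{\top}$. This produces exactly the two stated equations, with $X=\mP$ solving $X-\A X\A^{\top}=\bbeta\bbeta^{\top}$ and $Y=\mQ$ solving $Y-\A^{\top}Y\A=\balpha\balpha^{\top}$.

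Finally I would address uniqueness, which is what makes the equations useful as a computational characterisation. Under $\rho(\A)<1$ the linear operator $X\mapsto X-\A X\A^{\top}$ on $\R^{n\times n}$ has eigenvalues $1-\lambda_i\lambda_j$, where the $\lambda_i$ are the eigenvalues of $\A$; these are all nonzero since $|\lambda_i\lambda_j|<1$, so the operator is invertible and $\mP$ is its unique solution, and likewise for $\mQ$. The main obstacle I anticipate is not conceptual but careful bookkeeping: fixing the side of multiplication and the placement of transposes so that the telescoping residual is $\bbeta\bbeta^{\top}$ for $\mP$ and $\balpha\balpha^{\top}$ for $\mQ$ as stated, together with the justification---supplied by the consequence $\rho(\A)<1$ of well-definedness---that the infinite sums may be rearranged and manipulated termwise.
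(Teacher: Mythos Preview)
The paper does not actually prove this theorem: it is stated as background, with a pointer to~\citet{Balle19} and to the Lyapunov literature, and no argument is given. So there is nothing to compare your proof against; I can only assess it on its own merits.

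Your telescoping argument is the standard one and is correct. With $\mP=\sum_{k\ge 0}\A^{k}\bbeta\bbeta^{\top}(\A^{\top})^{k}$, the terms are positive semidefinite, so convergence of the series (which is exactly the well-definedness hypothesis) is absolute, and the rearrangement
\[
\mP-\A\mP\A^{\top}=\sum_{k\ge 0}\A^{k}\bbeta\bbeta^{\top}(\A^{\top})^{k}-\sum_{k\ge 1}\A^{k}\bbeta\bbeta^{\top}(\A^{\top})^{k}=\bbeta\bbeta^{\top}
\]
is legitimate; the argument for $\mQ$ is symmetric. This is exactly how such discrete Stein equations are verified.

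Two remarks are worth making. First, the series you write down for $\mP$ and $\mQ$ are the ones that make the \emph{theorem} true, but note that they are swapped relative to the paper's Definition~\ref{def:gramians}: there $\mP=\mat{F}^{\top}\mat{F}$ with $\mat{F}(p,:)=\balpha^{\top}\A^{p}$, which gives $\mP=\sum_{k}(\A^{\top})^{k}\balpha\balpha^{\top}\A^{k}$, not the $\bbeta$-series. This is an internal inconsistency in the paper's labelling, not an error on your part, but your write-up glosses over it (``assembling the Gramians from these rows yields\ldots'') rather than flagging it.

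Second, your claim that well-definedness of both Gramians ``forces $\rho(\A)<1$'' is a little too strong as stated: without reachability/observability one can have, say, $\bbeta=\mat{0}$ and an arbitrary $\A$. This does not affect the theorem itself---convergence of the series is all the telescoping needs---but it does mean your uniqueness paragraph requires an extra hypothesis (minimality, or directly $\rho(\A)<1$) that the theorem does not supply. Since the statement only asks that $\mP,\mQ$ \emph{solve} the equations, not that they are the unique solutions, you could simply drop that paragraph or explicitly add the irredundancy assumption the rest of the paper uses.
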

    
    Finally, we recall the definition of \emph{generative probabilistic automata} (GPA). A WFA $A= \wfa$ is a GPA if $f_A(x)\geq 0$ for every $x$ and $\sum_{x\in\Sigma^*}f_A(x)=1$, \emph{i.e.} if $f_A$ computes a probability distribution over $\Sigma^*$.
    In general this class of automata can contain pathological examples having states not connected to any final state. To avoid these cases, we introduce the following property on the spectral radius of the transition matrix.
    
    \begin{definition}
        Given a WFA $A= \langle \balpha , \{\A_a\}_{a \in \Sigma}, \bbeta \rangle$, let $\A=\sum_{a\in\Sigma}\A_{a}$. The WFA A is \textbf{irredundant} if $\rho(\A)<1$.
    \end{definition}

\subsection{AAK Theory}\label{aak-section}

    In this section, we introduce the theory of optimal approximation for Hankel operators and complex functions known as AAK theory~\citep{AAK71}. A comprehensive presentation of the concepts recalled in this section can be found in~\citet{Nikolski,Peller}.
    
    We consider the space of square-integrable complex functions on the unit circle $\mathcal{L}^2(\mathbb{T})$. To avoid any confusion with functions defined over sequences, when dealing with complex function we make explicit the dependence on the complex variable $z=e^{it}$. Note that a function $\phi(z) \in \mathcal{L}^2(\mathbb{T})$ can be represented, using the orthonormal basis $\{z^n\}_{n \in \Z}$, by means of its Fourier series: $\phi(z)=\sum_{n \in \Z}\widehat{\phi}(n)z^n$, with Fourier coefficients $\widehat{\phi}(n)= \int_{\mathbb{T}}\phi(z) \bar{z}^n dz, \, n \in \Z$. 
    Thus, we can partition the function space $\mathcal{L}^2(\mathbb{T})$ into two subspaces.
    \begin{definition}
        The \textbf{Hardy space} $\mathcal{H}^2$ and the \textbf{negative Hardy space} $\mathcal{H}^2_-$ on $\mathbb{T}$ are the subspaces of $\mathcal{L}^2(\mathbb{T})$ defined as:
        \begin{equation*}
            \mathcal{H}^2= \{ \phi(z) \in \mathcal{L}^2(\mathbb{T}) : \widehat{\phi}(n)=0, n < 0\},
        \end{equation*}
        \begin{equation*}
            \mathcal{H}^2_-=\{ \phi(z) \in  \mathcal{L}^2(\mathbb{T}) : \widehat{\phi}(n)=0, n \geq 0\}.
        \end{equation*}
    \end{definition}
    Interestingly, the elements of the Hardy space can be canonically identified with the set of functions analytic in the unit disc $\mathbb{D}$, with the property that the square of their absolute value is integrable on $\mathbb{T}$ (a proof can be found in~\cite{Nikolski}). Thus, we will make no difference between these functions in the unit disc and their boundary value on the circle. Moreover, we remark that the definition of Hardy space can be generalized for any $p$-th power of the functions' absolute value, for $0<p\leq\infty$.
    
     
    We define Hankel operators in the Hardy spaces.
    \begin{definition}\label{Hankel2}
        Let $\phi(z)$ be a function in the space $\mathcal{L}^2(\mathbb{T})$. A \textbf{Hankel operator} is an operator $H_{\phi}:\mathcal{H}^2 \rightarrow \mathcal{H}^2_-$ defined by $ H_{\phi}f(z)=\mathbb{P}_-\phi f(z)$, where $\mathbb{P}_-$ is the orthogonal projection from $ \mathcal{L}^2(\mathbb{T})$ onto $\mathcal{H}^2_- $ . The function $\phi(z)$ is called a \textbf{symbol} of the Hankel operator $H_{\phi}$.
    \end{definition}
    The matrix $\H_{\phi}$ associated to the Hankel operator $H_{\phi}:\mathcal{H}^2 \rightarrow \mathcal{H}^2_-$ is:
    \begin{equation}
        \H_{\phi}=\begin{pmatrix} \widehat{\phi}(-1) & \widehat{\phi}(-2) & \widehat{\phi}(-3) & \dots \\
                               \widehat{\phi}(-2) & \widehat{\phi}(-3) & \widehat{\phi}(-4) &\dots \\
                               \widehat{\phi}(-3) & \widehat{\phi}(-4) & \widehat{\phi}(-5) &\dots \\
                                \vdots& \vdots &\vdots&\ddots
            \end{pmatrix}.
    \end{equation}
    Note that this matrix satisfies the Hankel property, as each entry only depends on the composition of the corresponding coordinates.
    
    Nehari's theorem~\citep{Nehari}, characterizes bounded Hankel operators and their norm. 
    
    \begin{therm}[\cite{Nehari}]\label{thm:nehari}
        Let $\phi \in  \mathcal{L}^2(\mathbb{T})$ be a symbol of the Hankel operator on Hardy spaces $H_{\phi}:\mathcal{H}^2 \rightarrow \mathcal{H}^2_-$. Then, $H_{\phi}$ is bounded on $\mathcal{H}^2$ if and only if there exists $\psi \in \mathcal{L}^{\infty}(\mathbb{T})$ such that $\widehat{\psi}(m)=\widehat{\phi}(m)$ for all $m<0$. If the conditions above are satisfied, then:
        \begin{equation}\label{eq:nehari}
            \norm{H_{\phi}}=\inf\{\norm{\psi}_{\infty}:\widehat{\psi}(m)=\widehat{\phi}(m), \, m<0\}.
        \end{equation}
    \end{therm}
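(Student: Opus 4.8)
The plan is to prove the two inequalities corresponding to $\leq$ and $\geq$ in \eqref{eq:nehari} separately, the first being elementary and the second carrying the analytic content of the theorem. Throughout I would write $\mathcal{H}^\infty$ for the bounded analytic functions (the symbols $\psi \in \mathcal{L}^\infty(\mathbb{T})$ with $\widehat{\psi}(m)=0$ for $m<0$), and $\mathcal{H}^1$, $\mathcal{H}^1_0 = z\mathcal{H}^1$ for the analogous Hardy spaces in $\mathcal{L}^1(\mathbb{T})$, the latter consisting of those with vanishing constant term. First I would record the structural fact that $H_\phi$ depends only on the negative Fourier coefficients of its symbol: the matrix $\H_\phi$ displayed above has entries $\widehat{\phi}(-i-j-1)$, so if $\psi$ satisfies $\widehat{\psi}(m)=\widehat{\phi}(m)$ for all $m<0$ then $H_\psi=H_\phi$.

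For the easy inequality, suppose such a $\psi\in\mathcal{L}^\infty(\mathbb{T})$ exists. For $f\in\mathcal{H}^2$ we have $H_\phi f=\mathbb{P}_-(\psi f)$, and since $\mathbb{P}_-$ is an orthogonal projection,
\[
\norm{H_\phi f}_2=\norm{\mathbb{P}_-(\psi f)}_2\leq \norm{\psi f}_2\leq \norm{\psi}_\infty\norm{f}_2 .
\]
Hence $H_\phi$ is bounded with $\norm{H_\phi}\leq\norm{\psi}_\infty$, and taking the infimum over admissible $\psi$ gives $\norm{H_\phi}\leq\inf\{\norm{\psi}_\infty\}$. This simultaneously establishes the ``if'' direction of the boundedness criterion.

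For the reverse inequality, the idea is to realise the operator norm as a distance in $\mathcal{L}^\infty$ and then invoke duality. I would first rewrite the norm through its bilinear form, $\norm{H_\phi}=\sup|\langle H_\phi f,g\rangle|$ over unit vectors $f\in\mathcal{H}^2$, $g\in\mathcal{H}^2_-$, and since $\mathbb{P}_-$ is self-adjoint and fixes $g\in\mathcal{H}^2_-$,
\[
\langle H_\phi f,g\rangle=\langle \mathbb{P}_-(\phi f),g\rangle=\langle \phi f,g\rangle=\int_{\mathbb{T}}\phi\, f\,\bar g\; dz .
\]
The product $k=f\bar g$ lies in $\mathcal{H}^1_0$ (the factor $\bar g$ is analytic with vanishing constant term), and $\norm{k}_1\leq\norm{f}_2\norm{g}_2$ by Cauchy--Schwarz. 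The key analytic input --- and the step I expect to be the main obstacle --- is the converse: by the Riesz factorization of $\mathcal{H}^1$, every $k\in\mathcal{H}^1_0$ with $\norm{k}_1\leq1$ can be written as $k=f\bar g$ with $f\in\mathcal{H}^2$, $g\in\mathcal{H}^2_-$ and $\norm{f}_2\norm{g}_2\leq1$. Granting this, the supremum over rank-one pairs collapses to a supremum over the unit ball of $\mathcal{H}^1_0$:
\[
\norm{H_\phi}=\sup\Bigl\{\Bigl|\int_{\mathbb{T}}\phi\,k\; dz\Bigr| : k\in\mathcal{H}^1_0,\ \norm{k}_1\leq1\Bigr\} .
\]

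Finally I would close the argument by Hahn--Banach duality. The displayed identity says precisely that $k\mapsto\int_{\mathbb{T}}\phi k\,dz$ is a bounded linear functional on $\mathcal{H}^1_0$ of norm $\norm{H_\phi}$. Extending it to all of $\mathcal{L}^1(\mathbb{T})$ without increasing its norm and using $\mathcal{L}^\infty(\mathbb{T})=(\mathcal{L}^1(\mathbb{T}))^*$, I obtain $\psi\in\mathcal{L}^\infty(\mathbb{T})$ with $\norm{\psi}_\infty=\norm{H_\phi}$ and $\int_{\mathbb{T}}\psi k\,dz=\int_{\mathbb{T}}\phi k\,dz$ for every $k\in\mathcal{H}^1_0$. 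Testing against $k=z^m$ for $m\geq1$ forces $\widehat{\psi}(-m)=\widehat{\phi}(-m)$, so $\psi$ and $\phi$ share all negative Fourier coefficients. This produces an admissible bounded symbol with $\norm{\psi}_\infty=\norm{H_\phi}$, hence $\inf\{\norm{\psi}_\infty\}\leq\norm{H_\phi}$ (indeed the infimum is attained), and combined with the first part yields the equality \eqref{eq:nehari}. Besides the factorization theorem, the remaining delicate point is the correct identification of the predual pairing, so that the Hahn--Banach extension delivers a genuinely $\mathcal{L}^\infty$ symbol rather than merely an $\mathcal{L}^2$ one.
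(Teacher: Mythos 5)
The paper does not contain a proof of this statement: Nehari's theorem is quoted as a classical background result, with the reader referred to \cite{Nehari} and to the monographs of \citet{Nikolski} and \citet{Peller}, so there is no internal argument to compare yours against. What you propose is, in substance, the standard proof found in exactly those references: the easy inequality from $\norm{\mathbb{P}_-(\psi f)}_2\leq\norm{\psi}_\infty\norm{f}_2$ after observing that $H_\phi$ depends only on the negative Fourier coefficients of its symbol; the hard inequality by writing $\norm{H_\phi}$ as a supremum of the bilinear form, collapsing the products $f\bar g$ onto the unit ball of $\mathcal{H}^1_0$ via Riesz factorization of $\mathcal{H}^1$, and then invoking Hahn--Banach with the duality $(\mathcal{L}^1(\mathbb{T}))^*=\mathcal{L}^\infty(\mathbb{T})$, finally testing against monomials $z^m$ to match the negative coefficients. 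This is the right route, it proves both directions of the boundedness criterion as well as the norm identity (including attainment of the infimum), and the two points you single out as delicate --- the factorization $k=f\bar g$ with $\norm{f}_2\norm{g}_2\leq\norm{k}_1$, and the identification of the predual pairing --- are indeed where the analytic content lives.

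One step does need more care than your write-up gives it. The identity $\langle H_\phi f,g\rangle=\int_{\mathbb{T}}\phi f\bar g\,dz$, and hence your displayed formula for $\norm{H_\phi}$ as a supremum over the unit ball of $\mathcal{H}^1_0$, is not automatic when $\phi$ is merely in $\mathcal{L}^2(\mathbb{T})$: for general $f\in\mathcal{H}^2$ and $g\in\mathcal{H}^2_-$ the function $\phi f\bar g$ need not be integrable (indeed $H_\phi$ itself is only densely defined, say on bounded elements of $\mathcal{H}^2$, before boundedness is proved). The standard repair is to define the functional $k\mapsto\int_{\mathbb{T}}\phi k\,dz$ on analytic polynomials vanishing at $0$, where $\phi k\in\mathcal{L}^1(\mathbb{T})$ because $k$ is bounded, prove the bound $|\int_{\mathbb{T}}\phi k\,dz|\leq\norm{H_\phi}\norm{k}_1$ there --- which requires an approximation argument reconciling the factorization $k=f\bar g$ (whose factors are in general not polynomials) with the operator pairing --- and only then extend by density and Hahn--Banach. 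As written, your chain of equalities silently interchanges two pairings whose common convergence has not been established; this is a technical rather than conceptual gap, and it is handled exactly this way in the proofs of \citet{Peller} and \citet{Nikolski}, but a complete write-up would have to include it.
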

    
    As a consequence of Theorem~\ref{thm:nehari}, if $H_{\phi}$ is a bounded operator, we can consider without loss of generality $\phi(z) \in \mathcal{L}^{\infty}(\mathbb{T})$. We remark that a Hankel operator has infinitely many different symbols, since $H_{\phi}=H_{\phi+\psi}$ for $\psi(z) \in \mathcal{H}^{\infty}$.

    \begin{definition}\label{def:rational}
        The complex function $\phi(z)$ is \textbf{rational} if $\phi(z)=p(z)/q(z)$, with $p(z)$ and $q(z)$ polynomials. The rank of $\phi(z)$ is the maximum between the degrees of $p(z)$ and $q(z)$. A rational function is \textbf{strictly proper} if the degree of $p(z)$ is strictly smaller than that of $q(z)$.
    \end{definition}
   The following result of Kronecker relates finite-rank infinite Hankel matrices to rational functions.

    \begin{therm}[\cite{kronecker}]\label{theorem:Kronecker}
        Let $H_{\phi}$ be a bounded Hankel operator with matrix $\H$. Then $\H$ has finite rank if and only if $\mathbb{P}_-\phi$ is a strictly proper rational function. Moreover the rank of $\H$ is equal to the number of poles (with multiplicities) of $\mathbb{P}_-\phi$ inside the unit disc.
    \end{therm}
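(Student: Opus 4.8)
The plan is to pass from the Hankel matrix to a scalar generating function and prove the two implications separately, keeping careful track of the rank. Writing $a_k = \widehat{\phi}(-(k+1))$ for $k\ge 0$, the matrix is $\H(i,j)=a_{i+j}$, and the negative part of the symbol is $\mathbb{P}_-\phi(z)=\sum_{k\ge 1}\widehat{\phi}(-k)z^{-k}$. Substituting $w=z^{-1}$ turns this into $\mathbb{P}_-\phi = w\,g(w)$, where $g(w)=\sum_{k\ge 0}a_k w^k$, so ``strictly proper rational in $z$ with all poles in $\mathbb{D}$'' corresponds to ``$g$ rational'' with matching pole data. Throughout I would use that $H_{\phi}$ is bounded, hence by Nehari (Theorem~\ref{thm:nehari}) $\mathbb{P}_-\phi\in\mathcal{H}^2_-$; a rational function lying in $\mathcal{H}^2_-$ is analytic on $\{|z|\ge 1\}\cup\{\infty\}$, which forces all of its poles to sit strictly inside $\mathbb{D}$ and the function to vanish at infinity. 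This is exactly what ties the phrase ``number of poles inside $\mathbb{D}$'' to the analytic structure.

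The heart of the argument is a shift/recurrence lemma exploiting the Hankel structure. Let $R_i=(a_i,a_{i+1},\dots)$ be the $i$-th row of $\H$; the Hankel identity says $R_{i+1}$ is precisely the left shift of $R_i$, so the left shift maps the row space into itself. Assuming $\rank\H=n$, I would let $m\le n$ be the least index for which $R_0,\dots,R_m$ are linearly dependent, write $R_m=\sum_{j=0}^{m-1}c_j R_j$, and apply the left shift repeatedly: since shifting commutes with the relation, $R_{m+\ell}=\sum_{j=0}^{m-1}c_j R_{j+\ell}$ for every $\ell\ge 0$, so all rows lie in $\operatorname{span}(R_0,\dots,R_{m-1})$ and $m=n$. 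Reading the first coordinate of each shifted relation yields a linear recurrence $a_{k+n}=\sum_{j=0}^{n-1}c_j a_{k+j}$ for all $k\ge 0$, minimal of order $n$. Multiplying $g(w)$ by $Q(w)=1-\sum_{i=1}^n c_{n-i}w^{i}$ annihilates every coefficient of degree $\ge n$, showing $g=P/Q$ with $\deg P<\deg Q=n$; hence $g$, and therefore $\mathbb{P}_-\phi=w\,g(w)$, is strictly proper rational of rank $n$, with all $n$ poles inside $\mathbb{D}$ by the boundedness remark.

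For the converse I would start from a strictly proper rational $\mathbb{P}_-\phi\in\mathcal{H}^2_-$, expand it in partial fractions over its poles $\lambda_i\in\mathbb{D}$, and expand each term $(z-\lambda_i)^{-\ell}$ as a series in $z^{-1}$ valid on $|z|\ge 1$. Each such term produces Fourier coefficients of the form (polynomial in $k$ of degree $\ell-1$)$\times\lambda_i^{k}$, whose associated Hankel matrix has rank exactly its multiplicity $\ell$; summing over the distinct poles, the sequence $(a_k)$ satisfies an order-$n$ recurrence whose characteristic polynomial has the $\lambda_i$ as roots with their multiplicities, and any sequence obeying an order-$n$ recurrence has a Hankel matrix of rank at most $n$. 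Combined with the shift lemma, which forces the rank to equal the minimal recurrence order, coprimeness of $P$ and $Q$ then gives $\rank\H=n=$ the number of poles counted with multiplicity, completing both the equivalence and the rank formula.

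The step I expect to be most delicate is pinning down the exact rank rather than an inequality: the shift argument readily gives ``finite rank iff finite-order recurrence,'' but matching $\rank\H$ precisely to the number of poles requires arguing that the minimal recurrence order coincides with the degree of the reduced denominator, i.e.\ that $P$ and $Q$ share no common factor and that $Q(0)\ne 0$, so that the strictly-proper normalization and the pole count stay consistent. The boundedness hypothesis is essential here to guarantee that no pole lies on $\mathbb{T}$, so that every pole genuinely contributes to the rank and lies inside $\mathbb{D}$.
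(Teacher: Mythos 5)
The paper does not actually prove this statement: Kronecker's theorem is quoted as a classical result, with the reader referred to the cited literature (Nikolski, Peller) for its proof. So there is no internal argument to compare against, and your proposal must be judged on its own merits. On those merits it is essentially correct, and it is in fact the standard textbook proof: the Hankel structure makes the left shift preserve the row space, so a first linear dependence among rows propagates under shifting, giving both $\rank\H = $ (minimal recurrence order) and a linear recurrence for the coefficients $a_k$; multiplying $g(w)=\sum_k a_k w^k$ by the recurrence polynomial $Q$ yields rationality with $\deg P<\deg Q$; and the converse goes by partial fractions, each pole of multiplicity $\ell$ contributing a Hankel matrix of rank $\ell$, with the two resulting inequalities combining to give the exact rank formula. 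This is precisely the route taken in the references the paper points to.

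Two wrinkles are worth fixing. First, $\mathbb{P}_-\phi\in\mathcal{H}^2_-$ is not a consequence of Nehari's theorem; it holds by definition of the orthogonal projection $\mathbb{P}_-$, since the paper takes symbols in $\mathcal{L}^2(\mathbb{T})$. What you actually need is that the coefficients $a_k$ are square-summable, which forces any rational $g$ to have no poles on or inside the unit circle of the $w$-plane, hence all poles of $\mathbb{P}_-\phi$ strictly inside $\mathbb{D}$; this is free from $\phi\in\mathcal{L}^2(\mathbb{T})$ and does not require boundedness via Nehari. Second, your closing worry is aimed at the wrong place: $Q(0)\neq 0$ is automatic (the power series $g$ is analytic at $w=0$), whereas the genuine bookkeeping subtlety is the possible pole of $\mathbb{P}_-\phi$ at $z=0$, i.e.\ the case $\deg P\geq\deg Q$ for $g=P/Q$ in lowest terms. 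For example $a_0=1$, $a_k=0$ for $k\geq 1$ gives $g\equiv 1$, whose reduced denominator has degree $0$, yet $\rank\H=1$ and the single pole of $\mathbb{P}_-\phi=1/z$ sits at $z=0$. The quantity that equals the rank is $\max(\deg P+1,\deg Q)$, equivalently the number of poles of $\mathbb{P}_-\phi$ in $\mathbb{D}$ \emph{including} $z=0$. Your argument survives intact because the converse direction counts partial-fraction contributions at every pole in $\mathbb{D}$, including $0$, so the two inequalities still close; only the phrase ``degree of the reduced denominator'' should be replaced by that maximum.
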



    We are ready to state the main result of~\citet{AAK71}. The theorem shows that for infinite dimensional Hankel matrices the constraint of preserving the Hankel property does not affect the achievable approximation error.
  
    \begin{therm}[\cite{AAK71}]\label{theorem:aakop}
        Let $H_{\phi}$ be a compact Hankel operator of rank $n$, matrix $\H$ and singular numbers $\sigma_0 \geq \dots \geq \sigma_{n-1}>0$. Then there exists a unique Hankel operator $H_g$ with matrix $\mat{G}$ of rank $k<n$ such that:
        \begin{equation}\label{eqoper}
            \norm{H_{\phi} - H_g} = \norm{\H-\mat{G}}= \sigma_k.
        \end{equation}
    \end{therm}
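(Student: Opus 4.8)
The plan is to prove the matching inequalities $\|\H-\G\|\ge\sigma_k$ and $\|\H-\G\|\le\sigma_k$ for a cleverly chosen Hankel $\G$, to check that this $\G$ has rank exactly $k$, and finally to argue uniqueness. The lower bound is the classical one: by the Courant--Fischer/Eckart--Young--Mirsky characterization of the singular numbers of a compact operator, \emph{every} operator of rank at most $k$ (Hankel or not) satisfies $\|\H-\G\|\ge\sigma_k$. Hence the Hankel constraint can only hurt, and the entire content of the theorem is to exhibit a rank-$k$ Hankel operator that still attains $\sigma_k$.

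For the upper bound I would build the optimal symbol from a $\sigma_k$-Schmidt pair. Let $\xi(z)\in\mathcal{H}^2$ and $\eta(z)\in\mathcal{H}^2_-$ be unit vectors with $H_\phi\xi=\sigma_k\eta$ and $H_\phi^*\eta=\sigma_k\xi$, and define the candidate symbol by $g:=\phi-\sigma_k\,\eta/\xi$, so that $H_\phi-H_g=H_{\phi-g}=H_{\sigma_k\eta/\xi}$. The key lemma is that $\eta/\xi$ is unimodular a.e. on $\mathbb{T}$. Indeed, $H_\phi\xi=\mathbb{P}_-(\phi\xi)=\sigma_k\eta$ says $F:=\phi\xi-\sigma_k\eta\in\mathcal{H}^2$, while the adjoint relation says $G:=\bar\phi\eta-\sigma_k\xi\in\mathcal{H}^2_-$. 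A direct computation then gives
\[ \bar\eta F-\xi\,\bar G=\sigma_k\,(|\xi|^2-|\eta|^2). \]
The right-hand side is real-valued, whereas each term on the left is supported on strictly positive Fourier modes (since $\bar\eta$ and $\bar G$ live on modes $\ge 1$, and $F,\xi$ on modes $\ge 0$); a real function supported on positive modes must vanish, so $|\xi|=|\eta|$ a.e. Because $\xi$ is rational and hence nonzero a.e. (finite rank forces rationality via Kronecker's Theorem~\ref{theorem:Kronecker}), we obtain $|\phi-g|=\sigma_k|\eta/\xi|=\sigma_k$ a.e. Nehari's Theorem~\ref{thm:nehari} then yields $\|H_\phi-H_g\|=\|H_{\phi-g}\|\le\|\phi-g\|_\infty=\sigma_k$.

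It remains to check that $H_g$ truly has rank $k$, and this is where I expect the main difficulty. By Kronecker's theorem the rank equals the number of poles of $\mathbb{P}_-g$ inside $\mathbb{D}$, so I would analyze the rational unimodular function $u=\eta/\xi$: its winding number around the origin, computed by the argument principle, is controlled by the numbers of zeros of $\eta$ and $\xi$ inside the disc, and the crux is to show that the extremal Schmidt vector $\xi$ attached to $\sigma_k$ has exactly $k$ zeros in $\mathbb{D}$. Establishing this index count---tying the ordinal position $k$ of the singular number to the zero-count of $\xi$, and hence to $\rank(H_g)=k$---is the technical heart of the AAK construction and the step I would invest the most effort in; the fully rational, finite-rank setting of the present paper should make it tractable by explicit degree bookkeeping on $\phi$, $\xi$, and $\eta$.

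Finally, for uniqueness I would suppose $H_{g'}$ is another rank-$\le k$ Hankel operator with $\|H_\phi-H_{g'}\|=\sigma_k$ and test it against the Schmidt vector: from $(H_\phi-H_{g'})\xi=\sigma_k\eta-H_{g'}\xi$ together with $\|H_\phi-H_{g'}\|\le\sigma_k=\|\sigma_k\eta\|$, the equality case of the norm bound forces $H_{g'}\xi$, and therefore $\phi-g'$ on the relevant subspace, to coincide with the extremal unimodular symbol $\sigma_k\eta/\xi$; this rigidity pins down $g'=g$. I regard both the rank computation and this uniqueness rigidity as the delicate parts, with the unimodularity lemma above serving as the clean computational core on which they rest.
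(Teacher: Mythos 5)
The paper does not prove this theorem at all: it is stated as classical background, attributed to \citet{AAK71}, with \citet{Nikolski} and \citet{Peller} indicated as references for complete treatments. So your proposal can only be measured against the classical argument, and its skeleton is indeed the classical one. The parts you carry out are correct: the lower bound $\norm{\H-\G}\geq\sigma_k$ from the minimax characterization is standard, and your unimodularity lemma is proved completely and correctly ($F=\phi\xi-\sigma_k\eta\in\mathcal{H}^2$ from $H_\phi\xi=\sigma_k\eta$, $G=\bar\phi\eta-\sigma_k\xi\in\mathcal{H}^2_-$ from the adjoint relation, the identity $\bar\eta F-\xi\bar G=\sigma_k(|\xi|^2-|\eta|^2)$, and the Fourier-support argument forcing $|\xi|=|\eta|$ a.e.); this is exactly the mechanism the paper itself re-derives in its WFA-parametrized setting (Theorem~\ref{theorem:singval} in Appendix~\ref{apd:first}), and Nehari's Theorem~\ref{thm:nehari} then gives $\norm{H_\phi-H_g}\leq\sigma_k$ as you say. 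One small correction: you do not need Kronecker's Theorem~\ref{theorem:Kronecker} or rationality to divide by $\xi$, since any nonzero $\mathcal{H}^2$ function is nonzero a.e.\ on $\mathbb{T}$; Kronecker concerns the symbol, and rationality of the Schmidt functions would instead follow from their lying in the finite-dimensional ranges of $H_\phi^*$ and $H_\phi$.

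The genuine gaps are precisely the two steps you defer, and they are not details. First, the rank bound: without a proof that $\xi$ has at most $k$ zeros in $\mathbb{D}$ (hence, via the argument principle applied to $\eta/\xi$, that $\mathbb{P}_-g$ has at most $k$ poles in $\mathbb{D}$ and $\rank H_g\leq k$), you have only exhibited \emph{some} Hankel operator at distance $\sigma_k$ from $H_\phi$, with no control on its rank --- and such operators exist trivially (rescale $H_\phi$). The zero-counting claim (in Peller's treatment: a Schmidt function of $\sigma_k$ cannot vanish more than $k$ times in the disc, since otherwise one builds a subspace of dimension $>k$ contradicting the minimax characterization of $\sigma_k$) is \emph{the} heart of AAK; also note the provable and sufficient statement is ``at most $k$ zeros,'' not ``exactly $k$,'' and ``explicit degree bookkeeping'' cannot substitute for it, since the degrees of $\xi$ and $\eta$ are a priori bounded only by $n$, not by $k$. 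Second, uniqueness: the equality-case rigidity you invoke must show that any optimal $H_{g'}$ has error $\phi-g'$ satisfying $(\phi-g')\xi=\sigma_k\eta$ for \emph{every} $\sigma_k$-Schmidt pair, and then that this equation pins down a single rank-$\leq k$ Hankel operator; as written this is a statement of intent rather than an argument. In sum, your proposal is a correct skeleton of the classical proof with its two hardest components missing.
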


    We denote with $\mathcal{R}_k\subset \mathcal{H}^{\infty}_-$ the set of strictly proper rational functions of rank $k$, and we consider the set of functions:
    \begin{equation}
        \mathcal{H}^{\infty}_k=\{\psi \in \mathcal{L}^{\infty}(\mathbb{T}): \,\,\exists g \in \mathcal{R}_k, \exists l \in \mathcal{H}^{\infty},\,\, \psi=g+l \}.
    \end{equation}
    The proof of the AAK theorem is directly connected with the problem of approximating a bounded function defined on the unit circle. In fact, the theorem can be reformulated in terms of the symbols associated with the Hankel operators.
    \begin{therm}[\cite{AAK71}]\label{theorem:aaksymb}
        Let $\phi \in \mathcal{L}^{\infty}(\mathbb{T})$. Then there exists a complex function $\psi \in \mathcal{H}^{\infty}_k$ such that:
        \begin{equation}\label{eqsymbol}
            \norm{\phi - \psi}_{\infty}= \sigma_k(H_{\phi}).
        \end{equation}
    \end{therm}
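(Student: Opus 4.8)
The plan is to deduce the symbol statement from the operator statement (Theorem~\ref{theorem:aakop}) together with Nehari's theorem (Theorem~\ref{thm:nehari}), exploiting that the symbol-to-operator map $\phi\mapsto H_{\phi}$ is linear and sees only the strictly negative Fourier coefficients of its argument. Since $\phi\in\mathcal{L}^{\infty}(\mathbb{T})$ the operator $H_{\phi}$ is bounded and, in the compact setting under consideration, its singular numbers $\sigma_k(H_{\phi})$ are well defined. Applying Theorem~\ref{theorem:aakop} yields a Hankel operator $H_g$ of rank $k$ with $\norm{H_{\phi}-H_g}=\sigma_k(H_{\phi})$; by Kronecker's theorem (Theorem~\ref{theorem:Kronecker}) I may choose its symbol to be the strictly proper rational function $\mathbb{P}_-g\in\mathcal{R}_k$, which has all its poles in $\mathbb{D}$ and hence lies in $\mathcal{L}^{\infty}(\mathbb{T})$.

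Writing $H_{\phi}-H_g=H_{\phi-g}$, the problem reduces to producing a genuine function $\chi\in\mathcal{L}^{\infty}(\mathbb{T})$ that attains the Nehari infimum for $\phi-g$, that is, with $\widehat{\chi}(m)=\widehat{\phi-g}(m)$ for all $m<0$ and $\norm{\chi}_{\infty}=\norm{H_{\phi-g}}=\sigma_k$. Granting such a $\chi$, I would set $\psi:=\phi-\chi$. Then $\norm{\phi-\psi}_{\infty}=\norm{\chi}_{\infty}=\sigma_k$ is immediate, and it remains to check $\psi\in\mathcal{H}^{\infty}_k$. For this, note that $h:=\chi-(\phi-g)$ has vanishing strictly-negative Fourier coefficients by construction and is bounded (being a difference of $\mathcal{L}^{\infty}$ functions, using $g\in\mathcal{L}^{\infty}(\mathbb{T})$), so $h\in\mathcal{H}^{\infty}$; consequently $\psi=\phi-\chi=g-h$ with $g\in\mathcal{R}_k$ and $-h\in\mathcal{H}^{\infty}$, whence $\psi\in\mathcal{H}^{\infty}_k$ by definition.

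The substance of the argument, and the step I expect to be hardest, is the attainment of the Nehari infimum: Theorem~\ref{thm:nehari} guarantees only an infimum, whereas here I need a symbol of $\mathcal{L}^{\infty}$-norm \emph{exactly} $\sigma_k$. The standard route is to construct $\chi$ explicitly from a $\sigma_k$-Schmidt pair $\{\xi(z),\eta(z)\}$ of $H_{\phi-g}$, namely $\chi:=\sigma_k\,\eta/\xi$. The crux is to show that $\eta/\xi$ is unimodular almost everywhere on $\mathbb{T}$, so that $\norm{\chi}_{\infty}=\sigma_k$ holds with equality and not merely as an upper bound; this constant-modulus behaviour on the circle is exactly the \emph{allpass} property referred to in the introduction, and it is what forces the extremal error to have constant modulus. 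Establishing it is the analytic heart of AAK theory: one uses the intertwining relations $H_{\phi-g}\xi=\sigma_k\eta$ and $H_{\phi-g}^{*}\eta=\sigma_k\xi$ to derive $|\xi(z)|=|\eta(z)|$ for almost every $z\in\mathbb{T}$, together with an inner--outer factorization of $\xi$ to handle its zeros and to verify that $\mathbb{P}_-\chi=\mathbb{P}_-(\phi-g)$, i.e.\ that $\chi$ is an admissible symbol. Granting these facts, $\chi$ attains the Nehari infimum, and the conclusion $\norm{\phi-\psi}_{\infty}=\sigma_k$ with $\psi\in\mathcal{H}^{\infty}_k$ follows from the reduction above.
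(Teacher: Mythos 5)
The paper never proves Theorem~\ref{theorem:aaksymb}: it is quoted from \citet{AAK71} as a known result, so your attempt can only be judged on its own terms. It is worth noting that the paper's surrounding material runs the logic in the direction opposite to yours: Corollaries~\ref{corollary:unimodular} and~\ref{corollary:stable} treat the symbol version as the one with the constructive proof, and the optimal operator $H_g$ of Theorem~\ref{theorem:aakop} is \emph{extracted from} the optimal symbol $\psi$. You instead take Theorem~\ref{theorem:aakop} as a black box and deduce the symbol version from it. Relative to the paper's presentation, where both theorems are cited facts, this is legitimate and not circular; but it cannot serve as an independent proof of AAK theory, since in the literature the operator version is itself proved via the very construction you defer. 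Also note a scope mismatch: Theorem~\ref{theorem:aaksymb} is stated for arbitrary $\phi \in \mathcal{L}^{\infty}(\mathbb{T})$, whereas your appeal to Theorem~\ref{theorem:aakop} silently restricts to the compact, finite-rank setting in which the paper states that theorem.

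Within that setting your reduction is correct: $H_{\phi}-H_g=H_{\phi-g}$ by linearity of the symbol-to-operator map; Kronecker's theorem (Theorem~\ref{theorem:Kronecker}) together with boundedness of $H_g$ places the poles of $\mathbb{P}_-g$ strictly inside $\mathbb{D}$, so $g\in\mathcal{R}_k\cap\mathcal{L}^{\infty}(\mathbb{T})$; and the bookkeeping $\psi=\phi-\chi=g-h$ with $h\in\mathcal{H}^{\infty}$ is right. The genuine gap is the one you flag yourself: attainment of the Nehari infimum is introduced with ``granting such a $\chi$'' and never established, so the argument is incomplete at exactly its substantive step. The gap is real but closeable, and by something much softer than the Schmidt-pair/unimodularity machinery you sketch. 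Since the theorem as stated asks only for \emph{existence} of $\psi$ and the exact error value (not uniqueness, not a formula), weak-$*$ compactness suffices: the sets $K_n=\{v\in\mathcal{L}^{\infty}(\mathbb{T}) : \widehat{v}(m)=\widehat{(\phi-g)}(m)\ \text{for } m<0,\ \norm{v}_{\infty}\leq\sigma_k+1/n\}$ are nonempty by Theorem~\ref{thm:nehari}, and weak-$*$ compact by Banach--Alaoglu together with the fact that each functional $v\mapsto\widehat{v}(m)$ is weak-$*$ continuous (integration against $\bar{z}^m\in\mathcal{L}^1$); hence the nested intersection $\bigcap_n K_n$ contains some $\chi$, which satisfies $\norm{\chi}_{\infty}\leq\sigma_k$ as well as $\norm{\chi}_{\infty}\geq\norm{H_{\chi}}=\norm{H_{\phi-g}}=\sigma_k$. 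Substituting this $\chi$ into your reduction finishes the proof. The Schmidt-pair route you gesture at --- unimodularity of $\sigma_k^{-1}(\phi-\psi)$, inner-outer factorization --- is the analytic heart of AAK's own argument and is what the paper's Corollary~\ref{corollary:unimodular} records; it is needed for the uniqueness and constructive content that the paper's algorithm relies on, but not for the bare existence statement you set out to prove.
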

    This theorem provides us with an alternative interpretation of singular numbers, relating them to the ``smoothness'' of the corresponding operator (or symbol).
    The advantage of this second formulation is that its proof is constructive, and tells us how to find the function $\psi$. 
        
    We state as a corollary the critical steps of the proof, that allows us to find the best approximating symbol.
    \begin{corollary}\label{corollary:unimodular}
        Let $\phi$ and $\{\boldsymbol{\xi}_k, \boldsymbol{\eta}_k\}$ be a symbol and a $\sigma_k$-Schmidt pair for $H_{\phi}$. A function $\psi \in \mathcal{L}^{\infty}({\mathbb{T}})$ is the best AAK approximation according to Theorem~\ref{theorem:aaksymb}, if and only if:
        \begin{equation}\label{eq:unimodular}
            (\phi-\psi)\xi^{+}_k=\sigma_k\eta^{-}_k.
        \end{equation}
        Moreover, the function $\psi$ does not depend on the particular choice of the pair $\{\boldsymbol{\xi}_k, \boldsymbol{\eta}_k\}$.
    \end{corollary}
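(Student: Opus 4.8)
The plan is to read the two superscripted functions $\xi_k^+\in\mathcal{H}^2$ and $\eta_k^-\in\mathcal{H}^2_-$ as the two members of the Schmidt pair $\{\boldsymbol{\xi}_k,\boldsymbol{\eta}_k\}$, so that the defining relations are $H_\phi\,\xi_k^+=\sigma_k\,\eta_k^-$ and $H_\phi^*\,\eta_k^-=\sigma_k\,\xi_k^+$, and equation~\eqref{eq:unimodular} reads pointwise on $\mathbb{T}$ as $\phi-\psi=\sigma_k\,\eta_k^-/\xi_k^+$. I would prove the two implications of the equivalence separately and then treat the independence claim, relying throughout on two ingredients that are already part of the constructive proof of Theorem~\ref{theorem:aaksymb}: the lower bound $\mathrm{dist}_{\mathcal{L}^\infty}(\phi,\mathcal{H}^\infty_k)\ge\sigma_k$, and the pointwise identity $|\xi_k^+|=|\eta_k^-|$ almost everywhere on $\mathbb{T}$, which holds for every Schmidt pair of a Hankel operator.

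For the lower bound I would combine Nehari's theorem (Theorem~\ref{thm:nehari}) with Kronecker's theorem (Theorem~\ref{theorem:Kronecker}): writing a candidate as $\psi=g+l$ with $g\in\mathcal{R}_k$ and $l\in\mathcal{H}^\infty$, we have $H_{\phi-\psi}=H_\phi-H_g$ with $\rank H_g\le k$, so $\norm{\phi-\psi}_\infty\ge\norm{H_{\phi-\psi}}\ge\sigma_k$ by the min--max characterization of the singular numbers. The sufficiency (``if'') direction is then short: assuming \eqref{eq:unimodular}, the identity $|\xi_k^+|=|\eta_k^-|$ together with the fact that a nonzero $\mathcal{H}^2$ function vanishes only on a set of measure zero gives $|\phi-\psi|=\sigma_k$ almost everywhere, hence $\norm{\phi-\psi}_\infty=\sigma_k$; matched against the lower bound, $\psi$ is optimal.

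The substantive direction is necessity. Assume $\psi$ is optimal, set $\Phi=\phi-\psi$, so that $\norm{\Phi}_\infty=\sigma_k$ and $H_\Phi=H_\phi-H_g$ with $\norm{H_\Phi}=\sigma_k$ (upper bound by Nehari, lower bound by min--max). The heart of the argument is to show $H_\Phi\,\xi_k^+=\sigma_k\,\eta_k^-$, i.e.\ that $H_g\,\xi_k^+=0$ and that $\xi_k^+$ is a maximizing vector of $H_\Phi$. I would obtain this from a dimension count: on the $(k+1)$-dimensional span of $\xi_0^+,\dots,\xi_k^+$ one has $\norm{H_\phi w}\ge\sigma_k\norm{w}$, while $\ker H_g$ meets this span nontrivially because $\rank H_g\le k$; for a common vector $w$ we get $\norm{H_\Phi w}=\norm{H_\phi w}\ge\sigma_k\norm{w}\ge\norm{H_\Phi w}$, forcing $w$ to be a $\sigma_k$-singular vector of $H_\phi$ lying in $\ker H_g$, after which $\norm{H_\Phi}=\sigma_k$ pins the image down to $\sigma_k\eta_k^-$. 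Once $\mathbb{P}_-(\Phi\,\xi_k^+)=\sigma_k\eta_k^-$ is known, I would strip the projection using $\norm{\Phi\,\xi_k^+}_2\le\norm{\Phi}_\infty=\sigma_k=\norm{\mathbb{P}_-(\Phi\,\xi_k^+)}_2$, which by the Pythagorean identity forces $\mathbb{P}_+(\Phi\,\xi_k^+)=0$ and hence \eqref{eq:unimodular}.

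Finally, for independence of the Schmidt pair I would invoke the uniqueness of the rank-$k$ Hankel approximant from Theorem~\ref{theorem:aakop}: since $H_g=H_\psi$ is unique and the optimal error $\phi-\psi$ has constant modulus $\sigma_k$, the best approximant is a single $\mathcal{L}^\infty$ function, so every $\sigma_k$-Schmidt pair must reproduce it through \eqref{eq:unimodular}; equivalently, one checks that the unimodular ratio $\eta_k^-/\xi_k^+$ is common to all $\sigma_k$-Schmidt pairs. I expect the main obstacle to be the crux of necessity, namely proving $H_g\,\xi_k^+=0$ cleanly when $\sigma_k$ is a repeated singular number: there the span/kernel intersection only delivers \emph{some} $\sigma_k$-singular vector in $\ker H_g$, and either additional care with the $\sigma_k$-eigenspace or an appeal to the uniqueness of $\psi$ is needed to conclude for the specific pair at hand.
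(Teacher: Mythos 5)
A preliminary remark: the paper gives no proof of Corollary~\ref{corollary:unimodular}; it is stated as a distillation of ``the critical steps of the proof'' of Theorem~\ref{theorem:aaksymb} and is imported from the AAK literature \citep{AAK71,Nikolski,Peller}. So your attempt can only be compared against the classical argument. Your skeleton does match it: the lower bound $\norm{\phi-\psi}_\infty\ge\sigma_k$ via Nehari, Kronecker and min--max, the equality chain on a vector of $\ker H_g\cap\operatorname{span}\{\xi_0^+,\dots,\xi_k^+\}$, and the Pythagorean argument that strips off $\mathbb{P}_-$ are exactly the standard steps, and your other ingredient ($|\xi_k^+|=|\eta_k^-|$ a.e., valid for every Schmidt pair of a Hankel operator) is a legitimate standard lemma.

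The genuine gap is the one you flag yourself, and neither of your proposed repairs closes it. The dimension count produces \emph{one} unit vector $w$ in the $\sigma_k$-Schmidt subspace with $H_gw=0$, hence $(\phi-\psi)w=\sigma_k\eta_w$ with $\eta_w=\sigma_k^{-1}H_\phi w$; when $\sigma_k$ has multiplicity $r>1$, nothing so far prevents $H_g\xi_k^+\neq 0$ for the \emph{given} pair, since $\ker H_g$ may a priori meet the Schmidt subspace only in a proper subspace. Appealing to Theorem~\ref{theorem:aakop} is circular: that theorem gives uniqueness of the Hankel operator $H_g$, i.e.\ of the rational part $g$, but says nothing about the $\mathcal{H}^\infty$ part $l$ of $\psi=g+l$; and the inference ``the error has constant modulus and $H_g$ is unique, hence the best $\psi$ is a single $\mathcal{L}^\infty$ function'' is not valid as stated (two functions of constant modulus can differ by an $\mathcal{H}^\infty$ function, e.g.\ $1$ and $z$). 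Uniqueness of $\psi$ is itself a consequence of the identity you are proving, so it cannot serve as input. What closes the gap in the classical proof is a separate lemma, proved directly and \emph{before} the equivalence: for any two $\sigma_k$-Schmidt pairs of a Hankel operator one has $\xi_1^+\eta_2^-=\xi_2^+\eta_1^-$ a.e.\ on $\mathbb{T}$, so that the ratio $\eta^-/\xi^+$ is one fixed unimodular function on the whole Schmidt subspace; it follows in a few lines from the commutation relation $H_\phi M_z=\mathbb{P}_-M_zH_\phi$ together with the antilinear symmetry $f\mapsto\bar z\bar f$, which maps Schmidt pairs to Schmidt pairs. In other words, the ``moreover'' clause is logically an \emph{input} to the necessity direction, not a corollary of it; once it is available, your single vector $w$ gives $\phi-\psi=\sigma_k\eta_w/w=\sigma_k\eta_k^-/\xi_k^+$, and necessity holds for every pair.

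A second, smaller gap: your ``if'' direction only establishes $\norm{\phi-\psi}_\infty=\sigma_k$, whereas being the best approximation in the sense of Theorem~\ref{theorem:aaksymb} also requires $\psi\in\mathcal{H}^\infty_k$, which constant modulus alone does not give. The painless fix is to route sufficiency through necessity and existence: Theorem~\ref{theorem:aaksymb} provides some best $\psi_0\in\mathcal{H}^\infty_k$, necessity shows that $\psi_0$ satisfies Equation~\ref{eq:unimodular}, and since $\xi_k^+\neq0$ a.e.\ the equation determines its solution uniquely, so any $\psi$ satisfying it coincides with $\psi_0$ almost everywhere.
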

    
    Note that the solutions of Theorem~\ref{theorem:aakop} and~\ref{theorem:aaksymb} are strictly related.
    \begin{corollary}\label{corollary:stable}
        Let $\psi \in \mathcal{H}^{\infty}_k$, with $\psi=l+g$, $g \in \mathcal{R}_k, \, l \in \mathcal{H}^{\infty}$. If $\psi$ solves Equation~\ref{eqsymbol}, then $H_g$ is the unique Hankel operator from Theorem~\ref{theorem:aakop}.
    \end{corollary}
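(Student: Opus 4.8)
The plan is to sandwich the operator-norm error $\norm{H_{\phi}-H_g}$ between $\sigma_k$ from below and from above, force equality, and then invoke the uniqueness clause of Theorem~\ref{theorem:aakop} to identify $H_g$ with the operator produced there. The first thing I would record is that $\psi$ and $g$ induce the \emph{same} Hankel operator. Since $l \in \mathcal{H}^{\infty}$ has only nonnegative Fourier coefficients, for every $f \in \mathcal{H}^2$ the product $lf$ again lies in $\mathcal{H}^2$, whence $\mathbb{P}_-(lf)=0$ and $H_l=0$. By linearity of the symbol-to-operator map, $H_{\psi}=H_l+H_g=H_g$, so that $H_{\phi}-H_g=H_{\phi-\psi}$.

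For the upper bound I would apply Nehari's theorem (Theorem~\ref{thm:nehari}) to the symbol $\phi-\psi$ of $H_{\phi-\psi}$: this yields $\norm{H_{\phi}-H_g}=\norm{H_{\phi-\psi}}\leq\norm{\phi-\psi}_{\infty}$. Because $\psi$ solves Equation~\ref{eqsymbol}, the right-hand side is exactly $\sigma_k(H_{\phi})=\sigma_k$, so $\norm{H_{\phi}-H_g}\leq\sigma_k$.

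For the lower bound I would use the min-max (Schmidt) characterization of the singular numbers of a compact operator, the infinite-dimensional analogue of the Eckart--Young theorem: $\sigma_k(H_{\phi})$ is the smallest operator-norm error attainable by any operator of rank at most $k$. To deploy it I need $\rank(\G)=k$, which I would obtain from Kronecker's theorem (Theorem~\ref{theorem:Kronecker}): since $g\in\mathcal{R}_k\subset\mathcal{H}^{\infty}_-$ is strictly proper of rank $k$, it is analytic outside the closed unit disc with all $k$ of its poles inside $\mathbb{D}$, so its Hankel matrix has rank exactly $k$. Hence $\norm{H_{\phi}-H_g}\geq\sigma_k$, and combined with the previous step we get $\norm{H_{\phi}-H_g}=\sigma_k$.

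This exhibits $H_g$ as a rank-$k$ Hankel operator meeting the optimal error of Equation~\ref{eqoper}, so the uniqueness asserted in Theorem~\ref{theorem:aakop} forces $H_g$ to coincide with the operator constructed there. The step I expect to require the most care is the combined identification $H_{\psi}=H_g$ together with the rank bookkeeping: I must check that discarding the analytic part $l$ neither alters the Hankel operator nor drops its rank below $k$, since the uniqueness I rely on is stated specifically for rank-$k$ operators. A repeated singular value $\sigma_{k-1}=\sigma_k$ is the situation where one must argue most carefully that no lower-rank competitor could intrude on the identification.
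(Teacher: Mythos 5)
Your proof is correct, but it cannot be compared to a proof in the paper because the paper gives none: Corollary~\ref{corollary:stable} is stated as a direct import from AAK theory, with the surrounding text only remarking that the solutions of Theorems~\ref{theorem:aakop} and~\ref{theorem:aaksymb} are ``strictly related.'' What you have written is the standard argument making that link explicit, and it uses precisely the ingredients the paper has already set up: linearity of the symbol-to-operator map together with $H_l=0$ for $l\in\mathcal{H}^{\infty}$ (this is the paper's own remark that $H_{\phi}=H_{\phi+\psi}$ for $\psi\in\mathcal{H}^{\infty}$), Nehari's theorem (Theorem~\ref{thm:nehari}) for the upper bound $\norm{H_{\phi}-H_g}\leq\norm{\phi-\psi}_{\infty}=\sigma_k$, Kronecker's theorem (Theorem~\ref{theorem:Kronecker}) to pin $\rank(\G)$ at $k$, the Eckart--Young-type lower bound for the reverse inequality, and finally the uniqueness clause of Theorem~\ref{theorem:aakop} for the identification. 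Two minor tightenings. First, for the lower bound you do not need $\rank(\G)=k$ exactly, only $\rank(\G)\leq k$, and you could cite the paper's Theorem~\ref{thm:eckart} here rather than appealing to the general Schmidt characterization; exact rank matters only for matching the literal wording of Theorem~\ref{theorem:aakop}. Second, the caution in your last sentence about a repeated singular value is unnecessary: since $\sigma_k$ is the minimum error achievable by any operator of rank at most $k$, any Hankel operator of rank at most $k$ attaining error $\sigma_k$ is automatically a minimizer and hence coincides with the unique operator of Theorem~\ref{theorem:aakop}. If $\sigma_{k-1}=\sigma_k$, the effect is only that the hypotheses of the corollary (a $g$ of rank exactly $k$ inside an optimal $\psi$) may be unsatisfiable, in which case the statement holds vacuously; no additional argument is required.
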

    In particular, this means that to find the Hankel operator $H_g$ corresponding to the optimal approximation, we can first obtain $\psi$ by applying Corollary \ref{corollary:unimodular}. Then, we can extract the rational component $g$ of $\psi$: this will correspond to a symbol for $\H_g$.

\section{AAK Theory and Approximate Minimization}\label{sec:aakapproxmin}
    
    Theorem~\ref{fliess} establishes a correspondence between a given minimal WFA $A$ with $n$ states and a Hankel matrix $\H$ of rank $n$. The relation between rank and number of states is what motivates our choice to reformulate the approximate minimization problem as low-rank approximation of the Hankel matrix. The approach that we propose to approximate $A$ is to find the minimal WFA corresponding to the Hankel matrix that minimizes $\H$ optimally in the spectral norm. We recall the fundamental result of~\citet{Eckart}.
    \begin{therm}[\cite{Eckart}]\label{thm:eckart}
        Let $\H$ be a Hankel matrix corresponding to a compact Hankel operator of rank $n$, and $\sigma_m$, with $0\leq m <n$ and $\sigma_0 \geq \dots \geq \sigma_{n-1}>0$, its singular numbers. Then, if $\mat{R}$ is a matrix of rank $k$, we have: $\norm{\H - \mat{R}}\geq \sigma_k$. The equality is attained when $\mat{R}$ corresponds to the truncated SVD of $\H$. 
    \end{therm}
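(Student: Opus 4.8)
The plan is to prove the two assertions separately: first the lower bound $\norm{\H - \mat{R}} \geq \sigma_k$ for \emph{every} matrix $\mat{R}$ of rank $k$ (not necessarily Hankel), and then the attainment of equality by the truncated SVD. Since $H_\phi$ is compact of rank $n$, the Hilbert-Schmidt decomposition recalled in Section~\ref{sec:background} supplies a finite orthonormal system of Schmidt pairs $\{\boldsymbol{\xi}_i, \boldsymbol{\eta}_i\}_{0 \leq i < n}$ with $\H \boldsymbol{\xi}_i = \sigma_i \boldsymbol{\eta}_i$. This lets me run the standard finite-rank SVD argument verbatim, the only point of care being that the ambient spaces $\mathcal{H}^2, \mathcal{H}^2_-$ are infinite dimensional. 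Throughout I take $k < n$, so that $\sigma_k > 0$ and the statement is non-trivial.

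For the lower bound I would use a dimension-counting argument. Fix a matrix $\mat{R}$ of rank $k$; its range is $k$-dimensional, so the restriction of $\mat{R}$ to the $(k+1)$-dimensional subspace $V = \mathrm{span}\{\boldsymbol{\xi}_0, \dots, \boldsymbol{\xi}_k\}$ cannot be injective, and there exists a unit vector $\mathbf{x} \in V$ with $\mat{R}\mathbf{x} = 0$. Writing $\mathbf{x} = \sum_{i=0}^k c_i \boldsymbol{\xi}_i$ with $\sum_{i=0}^k |c_i|^2 = 1$, orthonormality of the $\boldsymbol{\eta}_i$ together with $\sigma_i \geq \sigma_k$ for $i \leq k$ gives
\begin{equation*}
    \norm{(\H - \mat{R})\mathbf{x}}^2 = \norm{\H \mathbf{x}}^2 = \sum_{i=0}^k |c_i|^2 \sigma_i^2 \geq \sigma_k^2 \sum_{i=0}^k |c_i|^2 = \sigma_k^2.
\end{equation*}
Since $\norm{\mathbf{x}} = 1$, this yields $\norm{\H - \mat{R}} \geq \norm{(\H-\mat{R})\mathbf{x}} \geq \sigma_k$, which is the desired inequality.

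For attainment I would take $\mat{R}$ to be the truncated SVD, the rank-$k$ operator $\mat{R}\mathbf{x} = \sum_{i=0}^{k-1} \sigma_i \langle \mathbf{x}, \boldsymbol{\xi}_i \rangle \boldsymbol{\eta}_i$; it has rank exactly $k$ because $\sigma_0 \geq \dots \geq \sigma_{k-1} > 0$. Then $\H - \mat{R}$ acts as $\sum_{i=k}^{n-1} \sigma_i \langle \cdot, \boldsymbol{\xi}_i \rangle \boldsymbol{\eta}_i$, which is already in Schmidt (diagonal) form with leading singular number $\sigma_k$; hence $\norm{\H - \mat{R}} = \sigma_k$, matching the lower bound and completing the proof.

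The argument is essentially routine, so I do not expect a serious obstacle. The only delicate point is the passage from the finite-dimensional SVD to the compact operator on the Hardy spaces, and this is handled entirely by the Hilbert-Schmidt decomposition: because $\H$ has finite rank $n$, only finitely many singular numbers are nonzero and the subspace-intersection step is valid in infinite dimensions simply by the rank hypothesis on $\mat{R}$. Thus the infinite-dimensional setting adds only bookkeeping, not mathematical difficulty.
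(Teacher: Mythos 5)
Your proof is correct. Note that the paper itself does not prove this statement: it is recalled as a classical result with a citation to Eckart and Young, so there is no internal proof to compare against. Your argument is the standard one for the spectral-norm Eckart--Young--Mirsky theorem --- the dimension-counting step (a rank-$k$ operator must kill a unit vector in the $(k+1)$-dimensional span of $\boldsymbol{\xi}_0,\dots,\boldsymbol{\xi}_k$) gives the lower bound, and the truncated Hilbert--Schmidt expansion attains it --- and, as you observe, the finite rank of $\H$ makes the infinite-dimensionality of the ambient Hardy spaces harmless.
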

    
    In the following example, we compute the low-rank approximation of a finite Hankel matrix using the truncated SVD.  
    \begin{example}
        We consider the Hankel matrix $\mat{M}\in\R^{3\times3}$,
        \begin{equation*}
            \mat{M}=\begin{pmatrix}
                        1 & 2 & 3 \\
                        2 & 3 & 1 \\
                        3 & 1 & 2
        \end{pmatrix}.
        \end{equation*}
        The singular value decomposition of $\mat{M}$ is $\mat{M}=\mat{U}\mat{D}\mat{V}^{\top}$, with
        \begin{equation*}
            \mat{U}= \begin{pmatrix}
                        \frac{1}{\sqrt{3}} & \sqrt{\frac{2}{3}} & 0 \\
                        \frac{1}{\sqrt{3}} & -\frac{1}{\sqrt{6}} & \frac{1}{\sqrt{2}} \\
                        \frac{1}{\sqrt{3}} & -\frac{1}{\sqrt{6}} & -\frac{1}{\sqrt{2}}
        \end{pmatrix}, \quad 
        \mat{D}= \begin{pmatrix}
                        6 & 0 & 0 \\
                        0 & \sqrt{3} & 0 \\
                        0 & 0 & \sqrt{3}
        \end{pmatrix}\quad
        \mat{V}= \begin{pmatrix}
                        \frac{1}{\sqrt{3}} & -\frac{1}{\sqrt{2}} & -\frac{1}{\sqrt{6}} \\
                        \frac{1}{\sqrt{3}} & 0 & \sqrt{\frac{2}{3}} \\
                        \frac{1}{\sqrt{3}} & \frac{1}{\sqrt{2}} & -\frac{1}{\sqrt{6}}
        \end{pmatrix}.
        \end{equation*}
        The rank $2$ matrix $\mat{\overline{M}}$ obtained by truncating the SVD is not Hankel:
        \begin{equation*}
            \mat{\overline{M}}=\begin{pmatrix}
                        1 & 2 & 3 \\
                        \frac{5}{2} & 2 & \frac{3}{2} \\
                        \frac{5}{2} & 2 & \frac{3}{2}
        \end{pmatrix}.
        \end{equation*}
    \end{example}
    It is easy to see that the low-rank approximation of a Hankel matrix obtained by truncating its SVD is not in general a Hankel matrix. This is problematic, since the low-rank approximation needs to be a Hankel matrix in order to correspond to a WFA. On the other hand, we have seen that, by applying AAK theory, we can find the optimal Hankel matrix minimizing the (Hankel) matrix of a Hankel operator in the Hardy spaces. Our objective is to find a way to apply AAK theory to solve the approximate minimization problem of WFAs. To do this, we need an appropriate framework to reformulate this task in terms of Hankel operators and complex functions.

\subsection{Defining a Hankel Operator: the one-letter assumption}\label{aakinterpr}
    
    As a first step, we want to understand whether or not a Hankel operator on the Hardy space can be associated to the Hankel matrix of a weighted automaton. To do so, we compare the Hankel matrix $\H_f$ of a WFA realizing a function $f$ over an alphabet $\Sigma$, to the Hankel matrix $\H_{\phi}$ of a Hankel operator in the Hardy space:
    \begin{equation}
         \H_f = \begin{pmatrix}  f(\varepsilon) & f(a) & f(b) &\dots\\
        f(a) & f(aa)  & f(ab) &\dots \\
                              f(b)  &f(ba) & f(bb) &\dots\\
                              \vdots& \vdots &\vdots&\ddots
            \end{pmatrix} \quad \quad  
    \H_{\phi}=\begin{pmatrix} \widehat{\phi}(-1) & \widehat{\phi}(-2) & \widehat{\phi}(-3) & \dots \\
                               \widehat{\phi}(-2) & \widehat{\phi}(-3) & \widehat{\phi}(-4) &\dots \\
                               \widehat{\phi}(-3) & \widehat{\phi}(-4) & \widehat{\phi}(-5) &\dots \\
                                \vdots& \vdots &\vdots&\ddots
            \end{pmatrix}.
    \end{equation}
    
    We remark that the columns and rows of $\H_f$ are indexed using the letters of the alphabet $\Sigma$:
    \begin{equation*}
        \H_f(p,s) = f(ps) \quad\text{for}\,p,s \in \Sigma,
    \end{equation*}
    while in the case of $\H_{\phi}$, the entries are indexed using natural numbers
    \begin{equation*}
        \H_{\phi}(j,k)= \widehat{\phi}(-j-k-1) \quad\text{for}\, j,k\geq 0.
    \end{equation*}
    If we think of the intuitive definition of the Hankel property presented in the previous section, we have that it holds in both cases the entries of the matrices only depend on the composition of the coordinate. Note that ``composition'' means concatenation of letters in the first case, and sum of numbers in the second one. One fundamental difference is that adding natural numbers is a commutative operation, while concatenating letters is not. For example, while for the matrix corresponding to a Hankel operator in the Hardy space we have:
    \begin{equation*}
        \H_{\phi}(0,1)=\widehat{\phi}(-2)=\H_{\phi}(1,0),    
    \end{equation*}
    in the case of the WFA's matrix, this is not true:
    \begin{equation*}
        \H_f(a,b)=f(ab)\neq f(ba)=\H_f(b,a).    
    \end{equation*}
    This fact reflects in the much stronger structural property satisfied by Hankel of matrices in the Hardy spaces, where the Hankel property implies that the anti-diagonals have constant entries. This property is not reflected by the matrix of an arbitrary WFA, so it is not always possible to associate a Hankel operator to an automaton over an alphabet of arbitrary size, and AAK theory cannot be generally applied. The only case in which concatenation of strings is commutative, is when we are restricting our focus on alphabets of one letter. In particular, when $|\Sigma|=1$, the set of strings $\Sigma^*$ can be identified with $\N$. Therefore, the function $f:\Sigma^* \rightarrow \R$ recognized by a minimal WFA can be rewritten as $f:\N \rightarrow \R$, and the Hankel matrix $\H_f$ associated with it can be interpreted as the matrix of a Hankel operator between sequences $H_f:\ell^2\rightarrow\ell^2$. In this case, the Hankel matrix is defined by $\H(i,j)=f(i+j)$, for $i,j\geq 0$:
    \begin{equation*}
        \H_f= \begin{pmatrix} f(0) & f(1) & f(2) & \dots \\
                              f(1) & f(2) & f(3) &\dots \\
                              f(2) & f(3) & f(4) &\dots \\
                                \vdots& \vdots &\vdots&\ddots
            \end{pmatrix}.
    \end{equation*}
    Using the Fourier isomorphism, we can interpret $\H_f$ as the matrix $\H_{\phi}$ of a Hankel operator over Hardy spaces, associated to a complex function $\phi\in  \mathcal{L}^2(\mathbb{T})$. In particular, we can embed the sequence space $\ell^2$ into $\ell^2(\Z)$ by ``duplicating'' each vector, \emph{i.e.} by associating $\boldsymbol{\mu}=(\mu_0, \mu_1, \dots)\in\ell^2$ to $\boldsymbol{\mu}^{(2)}=(\dots, \mu_1,\mu_0, \mu_1, \dots)\in\ell^2(\Z)$. Then, we can use the Fourier isomorphism to map the vector $\boldsymbol{\mu}^{(2)}\in\ell^2(\Z)$ to the complex function space $\mathcal{L}^2(\mathbb{T})$. In this way, each vector $\boldsymbol{\mu}\in \ell^2$ corresponds to two functions in the Hardy spaces: 
    \begin{align}\label{eq:hardynotation}
        &\mu^-(z)=\sum_{j=0}^{\infty}\boldsymbol{\mu}_j z^{-j-1} \in \mathcal{H}^2_-, \\
        &\mu^+(z)=\sum_{j=0}^{\infty} \boldsymbol{\mu}_j z^{j} \in \mathcal{H}^2.\notag
    \end{align}
    Moreover, we can derive the relationship between $f$ and $\phi$: 
    \begin{equation*}
        \begin{pmatrix} f_A(0) & f_A(1) & f_A(2) & \dots \\
                              f_A(1) & f_A(2) & f_A(3) &\dots \\
                              f_A(2) & f_A(3) & f_A(4) &\dots \\
                                \vdots& \vdots &\vdots&\ddots
            \end{pmatrix}
            =\begin{pmatrix} \widehat{\phi}(-1) & \widehat{\phi}(-2) & \widehat{\phi}(-3) & \dots \\
                              \widehat{\phi}(-2) & \widehat{\phi}(-3) & \widehat{\phi}(-4) &\dots \\
                              \widehat{\phi}(-3) & \widehat{\phi}(-4) & \widehat{\phi}(-5) &\dots \\
                                \vdots& \vdots &\vdots&\ddots
            \end{pmatrix}.
    \end{equation*}
    from which we obtain:
    \begin{equation}\label{eq:symbol}
        f(n)=\widehat{\phi}(-n-1).    
    \end{equation}
    
    Since we know how to express the function $f$ with  respect to the parameters of the WFA, we can explicitly compute the rational component of the symbol:
    \begin{equation}\label{eq:symbolsum}
        \mathbb{P}_-\phi= \sum_{k\geq 0}f(k) z^{-k-1} = \sum_{k\geq 0} \balpha^{\top}\A^k \bbeta z^{-k-1} = \balpha^{\top}(z\mat{1}-\A)^{-1} \bbeta,
    \end{equation}
    where the last equality holds only if $\rho(A)<1$.
    
    The correspondence between symbol and function computed by a model allows us to reformulate the approximation problem in terms of Hankel operators and functions in the complex space, and to apply AAK theory. 

    We consider the following example, from~\cite{AAK-WFA}. 
    
    \begin{example}\label{example1}
    
    Let $|\Sigma|=1$, $\Sigma=\{x\}$, we consider the WFA $A= \langle \balpha, \A, \bbeta\rangle$ represented in Figure~\ref{fig:wfaex}, with: 
        \begin{equation*}
            \A= \begin{pmatrix}
                    0 & \frac{1}{2}    \\
                    \frac{1}{2}  & 0  
                \end{pmatrix} , \quad
            \balpha= \begin{pmatrix}
                \frac{\sqrt{3}}{2}     \\
                0  
                \end{pmatrix}, \quad
            \bbeta= \begin{pmatrix}
                \frac{\sqrt{3}}{2}     \\
                0  
                \end{pmatrix},
        \end{equation*}
    
    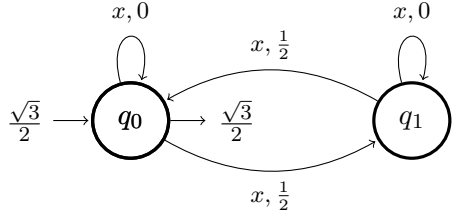
\begin{figure}[t]
	    \centering
		    \begin{tikzpicture}[shorten >=1pt,node distance=3.7cm,auto]
		        \tikzstyle{every state}=[draw=black,very thick,fill=none,minimum size=10mm]
		        \tikzstyle{accepting}=[accepting by arrow]
		        \node[state, initial,initial text=$\frac{\sqrt{3}}{2}$] (q_0) {$q_0$};
		        \node[state, accepting, accepting text=$\frac{\sqrt{3}}{2}$] (q_0) {$q_0$};
		        \node[state] (q_1) [right of=q_0] {$q_1$};
		        \path[->]
		        (q_0) edge [loop above] node
		        {\footnotesize $\begin{matrix}{{x , 0}}\end{matrix}$} ()
		        (q_0) edge [bend right] node [swap]
		        {\footnotesize $\begin{matrix}{{x , \frac{1}{2}}}\end{matrix}$} (q_1)
		        (q_1) edge [loop above] node
		        {\footnotesize $\begin{matrix}{{x , 0}}\end{matrix}$} ()
		        (q_1) edge [bend right] node [swap]
		        {\footnotesize $\begin{matrix}{{x , \frac{1}{2}}}\end{matrix}$} (q_0);
		    \end{tikzpicture}
	\caption[Example of generative probabilistic automaton]{ Graphical representation of the generative probabilistic automaton described in Example~\ref{example1}.}
	    \label{fig:wfaex} 
    \end{figure}
    
    Note that $A$ is a generative probabilistic automaton. Indeed, we have that 
    \begin{itemize}
        \item $f_A(x)\geq 0$
        \item $\sum_{x\in\Sigma^*}f_A(x)=1$,
    \end{itemize}
    since the rational function realized by the WFA is defined as:
        \begin{equation*}
            f_A(x\cdots x)=f_A(k)= \balpha^{\top}\A^k\bbeta= \begin{cases}
                                                0   &\text{if $k$ is odd} \\
                                                \frac{3}{4}2^{-k}   &\text{if $k$ is even}
                                                \end{cases}
        \end{equation*}
    where $k$ corresponds to the string where $x$ is repeated $k$-times.
    We remark that $A$ is minimal and already in its SVA form, with Gramians 
    \begin{equation}
        \mP=\mQ= \begin{pmatrix}
                    \frac{4}{5}  & 0  \\
                    0            & \frac{1}{5} 
                \end{pmatrix}.
    \end{equation}
    The corresponding Hankel matrix, with entries defined as $\H(i,j)=f(i+j)$, has rank $2$:
        \begin{equation}
            \H=\begin{pmatrix} f_A(0) & f_A(1) & f_A(2) & \dots \\
                               f_A(1) & f_A(2) & f_A(3) &\dots \\
                               f_A(2) & f_A(3) & f_A(4) &\dots \\
                                \vdots& \vdots &\vdots&\ddots
            \end{pmatrix}=
            \begin{pmatrix} \frac{3}{4} & 0 & \frac{3}{16} & \dots \\
                               0 & \frac{3}{16} & 0 &\dots \\
                               \frac{3}{16} & 0 & \frac{3}{64} &\dots \\
                                \vdots& \vdots &\vdots&\ddots
            \end{pmatrix}.
        \end{equation} 
    Now, we can apply the second interpretation of the Hankel matrix and look at it with respect to the symbol, using the definition $\H(j,k)= \widehat{\phi}(-j-k-1)$. We have:
        \begin{equation*}
             \H=
            \begin{pmatrix} \frac{3}{4} & 0 & \frac{3}{16} & \dots \\
                               0 & \frac{3}{16} & 0 &\dots \\
                               \frac{3}{16} & 0 & \frac{3}{64} &\dots \\
                                \vdots& \vdots &\vdots&\ddots
            \end{pmatrix}
            =\begin{pmatrix} \widehat{\phi}(-1) & \widehat{\phi}(-2) & \widehat{\phi}(-3) & \dots \\
                               \widehat{\phi}(-2) & \widehat{\phi}(-3) & \widehat{\phi}(-4) &\dots \\
                               \widehat{\phi}(-3) & \widehat{\phi}(-4) & \widehat{\phi}(-5) &\dots \\
                                \vdots& \vdots &\vdots&\ddots
            \end{pmatrix}.
        \end{equation*} 
    We can recover the rational component of a symbol, \emph{i.e.} the projection of $\phi$ on the negative Hardy space.
    \begin{equation*}
        \mathbb{P}_-\phi=\sum_{n \geq 0}\widehat{\phi}(-n-1)z^{-n-1}=\sum_{n \geq 0}\frac{3}{4}4^{-n}z^{-2n-1}=\frac{3z}{4z^2-1}.
    \end{equation*}
    Note that this is a complex rational function having degree $2$, and it has two poles inside the unit disc at $z=\pm\frac{1}{2}$ (as predicted by Theorem~\ref{theorem:Kronecker}). It is important to remark that from the Hankel matrix we can only recover the negative Fourier coefficients of $\phi$, meaning only the component of the symbol that belongs to the negative Hardy space.
    \end{example}

\section{Solving the Approximate Minimization Problem}\label{solution}

In this section we present the theoretical contribution of this paper, a closed form solution for the approximate minimization problem.

\subsection{Assumptions}\label{assumptions}
    
    We briefly list and analyze the assumptions made to solve the approximate minimization problem. A class of automata that automatically satisfies the following properties is that of generative probabilistic automata.

    \paragraph*{One-letter Alphabet}
    
    We tackle the approximate minimization problem in the case of automata with real weights, defined over a one-letter alphabet. As discussed before, this assumption is needed in order to apply AAK theory, and will hold for the rest of the paper (see Section \ref{aakinterpr} for more details).
    
    \paragraph*{SVA Form}    
    
    We assume that the minimal WFA $A= \langle \balpha, \A, \bbeta\rangle$ is in SVA form. This assumption is not necessary, as the SVA can be efficiently computed from a WFA satisfying the set of assumptions stated above \citep{Balle19}. Starting from a WFA in SVA form allows us to obtain results that are representation independent. Since the alphabet has size one, the Hankel matrix $\H$ is symmetric. Therefore, if we denote with $\lambda_i$ the $i$-th non-zero eigenvalue of $\H$, and we consider the coordinates of $\balpha$ and $\bbeta$, we have that $\balpha_i=\operatorname{sgn}(\lambda_i)\bbeta_i$, where $\operatorname{sgn}(\lambda_i)=\lambda_i/|\lambda_i|$.
    
    \paragraph*{Compactness of the Operator}
    
    To apply Theorem~\ref{theorem:aakop} we need the Hankel operator $H$ to be compact. To ensure that this condition is satisfied, we study the respective Hankel matrix.
    In our setting, the Hankel matrix has finite rank (equal to the number of states of the minimal WFA that we are considering). Moreover, the singular values can be computed exactly using the Gramian matrices introduced in Definition \ref{def:gramians}. A finite rank operator is compact only if it is bounded. Therefore, we just need to check that the Hankel operator is bounded. To this extent, \citet{Balle19} show that it is enough that the WFA being considered computes a function $f\in\ell^2$. We make the slightly stronger assumption that  the transition matrix $A$ is irreduntant, \emph{i.e.} that $\rho(\A)<1$, where $\rho$ is the spectral radius. This condition directly implies boundness and the existence of the SVA and the Gramian matrices $\mP$ and $\mQ$, where $\mP=\mQ$ and are diagonal matrices \citep{Balle19}. Moreover, it allows us to compute a closed form for the symbol of a WFA, as seen in Equation \ref{eq:symbolsum}.

\subsection{Problem Formulation}

    Let $A= \langle \balpha, \A, \bbeta \rangle$ be a minimal irredundant WFA with $n$ states and real weights, defined over a one-letter alphabet and represented in its SVA form. Let $\H$ be the Hankel matrix of $A$, we denote with $\sigma_i$, for $0\leq i < n$, the singular numbers. Given a target number of states $k<n$, we say that a WFA $\widehat{A}_k$ with $k$ states solves the \emph{optimal spectral-norm approximate minimization} problem if the Hankel matrix $\mat{G}$ of $\widehat{A}_k$ satisfies:
    \begin{equation}
        \norm{\H - \mat{G}}= \sigma_k(\H).
    \end{equation}
    
    Note that the content of the ``optimal spectral-norm approximate minimization'' is equivalent to the problem solved by Theorem \ref{theorem:aakop}, with the exception that here we represent the inputs and outputs of the problem effectively by means of WFAs. 
    
    Based on the AAK theory sketched in Section~\ref{aak-section}, we draw the following steps:
    
    \begin{itemize}
        \item[(1)] \emph{Compute a symbol for the WFA}. Given an irredundant WFA on a one-letter alphabet, we consider its Hankel matrix $\H$ and the function $f$ that it is computing. We use Equation \ref{eq:symbolsum} to associate a complex rational function to the WFA.
        \item[(2)]\label{step2} \emph{Compute the optimal symbol $\psi(z)$ using Corollary~\ref{corollary:unimodular}}. The main challenge here is to find a suitable representation for the functions $\psi(z)$ and $e(z)=\phi(z)-\psi(z)$. We define them in terms of two auxiliary WFAs. The key point is to select constraints on their parameters to leverage the properties of weighted automata, while still keeping the formulation general.
        \item[(3)] \emph{Extracting the rational component by solving for $g(z)$ in Corollary~\ref{corollary:stable}}. This step is arguably the most conceptually challenging, as it requires to identify the position of the function's poles. In fact, we know from Theorem~\ref{theorem:Kronecker} that $g(z)$ has $k$ poles, all inside the unit disc.
        \item[(4)] \emph{Find a WFA representation for $g(z)$}. Since in Step~\ref{step2} we parametrized the functions using WFAs, the expression of $g(z)$ directly reveals the WFA $\widehat{A}_k$.
    \end{itemize}

\subsection{Finding the Optimal Approximation}
    
We analyze each of the steps detailed above.

\subsubsection{Finding a Symbol for the WFA}\label{sec:symbolWFA}

    Let $A= \langle \balpha, \A, \bbeta \rangle$ be a minimal irredundant WFA with $n$ states, already represented in its in SVA form. $A$ realizes a function $f: \Sigma^* \rightarrow \R $, defined over a one-letter alphabet $\Sigma=\{a\}$. Let $\H$ be its Hankel matrix, with corresponding bounded Hankel operator $H$, and singular numbers $\sigma_i$, for $0\leq i < n$. 
    
    As seen in Subsection \ref{aakinterpr}, we can associate a complex function to the WFA. In particular, since we are assuming that $A$ is irredundant, from Equation \ref{eq:symbolsum} we obtain an expression for the rational component of the symbol:
    \begin{equation}
        \mathbb{P}_-\phi= \balpha^{\top}(z\mat{1}-\A)^{-1} \bbeta.
    \end{equation}

\subsubsection{Finding the Optimal Symbol}

    To find the solution of Theorem \ref{theorem:aakop}, we need to first derive the function $\psi$ from Theorem \ref{theorem:aaksymb}. Therefore, the second step to solve the approximate minimization problem is to find a proper expression for the complex functions $\psi$ and $e=\phi-\psi$ described in Theorem \ref{theorem:aaksymb}. Since our objective is to find the WFA corresponding to the optimal approximation, we focus on representing these functions using the parameters of two auxiliary WFAs.
    We consider a WFA $\widehat{A}=\langle\widehat{\balpha},\widehat{\A},\widehat{\bbeta}\rangle$ with more than $k$ states, such that the automaton $E= \langle \balpha_e, \A_e, \bbeta_e \rangle$ computing the difference between $A$ and $\widehat{A}$ is minimal, with:
     \begin{equation}\label{eq:partition.Ae}
            \A_e= \begin{pmatrix}
                    \A & \mat{0}    \\
                    \mat{0}  & \widehat{\A}  
                \end{pmatrix} , \quad
            \balpha_e= \begin{pmatrix}
                \balpha          \\
                -\widehat{\balpha}  
                \end{pmatrix}, \quad
            \bbeta_e= \begin{pmatrix}
                \bbeta          \\
                \widehat{\bbeta}  
                \end{pmatrix}.
        \end{equation}
    Now, given $C\in\mathcal{H}^{\infty}$, we consider the complex functions:
    \begin{align*}
        &\psi= \widehat{\balpha}^{\top}(z\mat{1}-\widehat{\A})^{-1} \widehat{\bbeta} + C\\
        &e= \phi-\psi= \balpha_e^{\top}(z\mat{1}-\A_e)^{-1} \bbeta_e - C.
    \end{align*}
    The idea is that we want to find the parameters of $\widehat{A}$ that make $\psi$ the solution of Theorem \ref{theorem:aaksymb}. 
    By definition, $\psi$ is the sum of two components, one that is bounded around the unit circle and one that has $k$ poles inside the unit disc (where $k$ is the size of the sought approximation). Therefore, there cannot be poles on the unit circle. By looking at the way we defined the function $\psi$, we can see that its poles correspond to the eigenvalues of $\widehat{\A}$, counted with their multiplicities. Thus, in order for $\psi$ to be the solution of Theorem \ref{theorem:aaksymb}, $1$ cannot be an eigenvalue of $\widehat{\A}$, and the WFA $\widehat{A}$ needs to have at least $k$ states. 

    As remarked in the previous section, the parameters of the automaton $A$ only encode the negative Fourier coefficients of the symbol. We add $C$ to the definition of $\psi$ to account for the $\mathcal{H}^{\infty}$ component when considering the difference $\phi-\psi$. In fact, while this component of the symbol does not affect the spectral norm, it plays a role in the computation of the $\mathcal{L}^{\infty}$-norm (in Equation~\ref{eqsymbol}), so it cannot be entirely dismissed. Nonetheless, we won't need to find the value of $C$, as ultimately we are only interested in the WFA's parameters.

    Now that we have an expression for $\psi$ and $e$, we can look back at Theorem \ref{theorem:aaksymb}. From this theorem, we know that by definition, $\sigma_k^{\-1}e$ is a unimodular function. 
    In Appendix \ref{apd:first} we show how to compute the functions corresponding to  a $\sigma_k$-Schmidt pair, and that their quotient is indeed unimodular. 
    This property of $e$ can be used to derive a set of constraints on the parameters of the WFA $E=\langle \balpha_e, \A_e, \bbeta_e\rangle$. In particular, it is possible to leverage the maximum modulus principle, according to which the maximum modulus of an holomorphic function is attained on the boundary of the domain. A similar result has been obtained in the control theory literature \citep{discreteH}, and can be easily applied to our setting. In fact, a parallel can be drawn between dynamical systems and automata, by noting that the impulse-response of a discrete time-invariant Single-Input-Single-Output SISO system can be parametrized as a WFA over a one-letter alphabet. This allows us to apply a theorem from \citet[Theorem 6.3]{discreteH} to find two matrices, $\mP_e$ and $\mQ_e$, satisfying properties similar to those of the Gramians. It is important to notice that, \emph{a priori}, the controllability and observability Gramians of $E$ might not be well defined. 
    We refer the reader to Appendix~\ref{apd:first} for a sketch of the proof.
    
    \begin{therm}[\cite{discreteH}]\label{Theorem:allpass}
        Consider the function $e= \balpha_e^{\top}(z\mat{1}-\A_e)^{-1} \bbeta_e - C$ and the corresponding minimal WFA $E=\langle \balpha_e, \A_e, \bbeta_e \rangle$ associated with it. If $\sigma_k^{-1}e$ is unimodular, then there exists a unique pair of symmetric invertible matrices $\mP_e$ and $\mQ_e$ satisfying:
        \begin{itemize}
            \item[(1)] $\mP_e-\A_e \mP_e \A_e^\top = \bbeta_e\bbeta_e^\top$ \label{a}
            \item[(2)] $\mQ_e-\A_e^\top \mQ_e \A_e = \balpha_e\balpha_e^\top$ \label{b}
            \item[(3)] $\mP_e\mQ_e=\sigma^2_k\mat{1}$ \label{c}
        \end{itemize}
    \end{therm}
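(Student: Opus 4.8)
The plan is to read the error function $e = \balpha_e^\top(z\mat{1}-\A_e)^{-1}\bbeta_e - C$ as the transfer function of a discrete-time single-input-single-output (SISO) linear system with realization $(\A_e,\bbeta_e,\balpha_e^\top,-C)$, and then to verify the hypotheses of the discrete-time allpass (lossless bounded-real) characterization of \citet{discreteH}, whose three conclusions are exactly items (1)--(3). Since $E=\langle \balpha_e, \A_e, \bbeta_e\rangle$ is minimal, this realization is both reachable and observable, so the only real work is to convert the analytic hypothesis ``$\sigma_k^{-1}e$ is unimodular'' into the algebraic hypothesis of the cited theorem.

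First I would translate unimodularity into the allpass condition. From $\sigma_k^{-1}e$ unimodular we get $|e(z)| = \sigma_k$ for all $z\in\mathbb{T}$, and because $E$ has real weights, on the circle $\overline{e(z)} = e(1/z)$, so $e(z)\,e(1/z) = \sigma_k^2$ on $\mathbb{T}$ and hence identically as rational functions. This is precisely the statement that the system is allpass with level $\sigma_k$, and here I would invoke the maximum modulus principle as in the control-theoretic argument. Two consequences follow that I would record before proceeding: $e$ can have no pole on $\mathbb{T}$ (else $|e|$ would be unbounded there), so by minimality $1$ is not an eigenvalue of $\A_e$; and the identity $e(z)e(1/z)=\sigma_k^2$ forces the zeros of $e$ to be the reciprocals of its poles, so no two eigenvalues $\lambda_i,\lambda_j$ of $\A_e$ can satisfy $\lambda_i\lambda_j=1$ (such a $\lambda_j=1/\lambda_i$ would be simultaneously a pole and a zero, contradicting minimality). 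This reciprocal-pair condition is exactly what guarantees that each of the discrete Lyapunov (Stein) equations (1) and (2) has a unique solution, which I take as the definition of $\mP_e$ and $\mQ_e$; symmetry is then automatic since the right-hand sides $\bbeta_e\bbeta_e^\top$ and $\balpha_e\balpha_e^\top$ are symmetric, and uniqueness of the pair is inherited from uniqueness of the Stein solutions.

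The heart of the proof, and the step I expect to be the main obstacle, is the product identity (3). The cleanest route is to prove the equivalent statement $\mQ_e = \sigma_k^2\mP_e^{-1}$: since (2) has a unique solution, it suffices to check that $\sigma_k^2\mP_e^{-1}$ itself satisfies the observability Stein equation (2). To carry this out I would extract from the allpass relation $e(z)e(1/z)=\sigma_k^2$ its state-space counterpart -- the discrete lossless (KYP-type) equalities relating $\A_e$, $\bbeta_e$, $\balpha_e$ and $C$ -- and combine them with equation (1); inverting (1) and substituting these cross-relations should collapse exactly to equation (2) for $\sigma_k^2\mP_e^{-1}$, whence $\mP_e\mQ_e=\sigma_k^2\mat{1}$. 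The difficulty is genuinely computational: one must first derive the correct discrete form of the lossless equalities (the continuous-time version is classical in \citet{Glover}, but the bilinear-vs-linear bookkeeping on $\mathbb{T}$ differs), and one must justify manipulating realizations even though $\A_e$ may have eigenvalues outside $\mathbb{D}$, because the approximant $\widehat{A}$ need not be irredundant.

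On this last point I would emphasize that $\mP_e$ and $\mQ_e$ are defined purely algebraically as Stein solutions rather than as convergent Gramian series $\mat{F}^\top\mat{F}$, so every identity I use is polynomial in the entries and remains valid irrespective of the location of the eigenvalues, provided only the reciprocal-pair condition above holds. Finally, invertibility comes for free from the target identity: once $\mP_e\mQ_e=\sigma_k^2\mat{1}$ with $\sigma_k>0$ is established, both factors are nonsingular. The one place where I would have to be careful is the mild circularity of writing $\mP_e^{-1}$ before invertibility is known; I would resolve this by first arguing that the Stein solution of a minimal allpass realization is congruent to a signature matrix and therefore nonsingular, consistent with (3).
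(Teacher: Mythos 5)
Your proposal is correct in outline but takes a genuinely different route from the paper's. You follow the classical discrete-time allpass/KYP route: deduce from $e(z)e(1/z)=\sigma_k^2$ and minimality that the spectrum of $\A_e$ avoids $\mathbb{T}$ and contains no reciprocal pair (both of your arguments here are sound), define $\mP_e$ and $\mQ_e$ as the unique symmetric solutions of the two Stein equations, and then try to force item (3) out of lossless cross-relations. The paper instead works with the identity $\sigma_k^2 e^{-1}(z)=e^{*}(\bar{z}^{-1})$: it computes two minimal realizations of this single function (one via the matrix-inversion lemma, with state matrix $\A_e+C^{-1}\bbeta_e\balpha_e^{\top}$, one by direct expansion, with state matrix $\A_e^{-\top}$), invokes the state-space isomorphism theorem for minimal WFAs to obtain a unique invertible similarity $\mT$ relating them (Lemma~\ref{lemma:T}), and sets $\mP_e=-\sigma_k^2\mT^{-1}$, $\mQ_e=-\mT$, after which (1)--(3) follow by substitution. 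The trade-off is clear: the paper gets invertibility and uniqueness for free (similarities between minimal realizations are invertible and unique) and must work to extract the Stein equations, whereas you get existence, uniqueness and symmetry for free from Stein theory and must work for invertibility and the product identity. For what it is worth, your ``collapse'' does go through: expanding $e(z)e(1/z)$ with equation (1) and using minimality yields exactly the cross-relations $\A_e\mP_e\balpha_e=C\bbeta_e$ and $\balpha_e^{\top}\mP_e\balpha_e+C^{2}=\sigma_k^{2}$, and a Sherman--Morrison computation then gives $\sigma_k^{2}\bigl(\mP_e^{-1}-\A_e^{\top}\mP_e^{-1}\A_e\bigr)=\balpha_e\balpha_e^{\top}$, as you predicted.

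The one genuine gap is the invertibility of $\mP_e$, and your proposed repair is circular as stated: a real symmetric matrix is congruent to a signature matrix precisely when it is nonsingular, so ``the Stein solution of a minimal allpass realization is congruent to a signature matrix'' restates the claim rather than proving it. Two concrete fixes are available. First, minimality of $E$ guarantees that the vectors $\A_e^{j}\bbeta_e$ span the state space, and you have already excluded eigenvalues of $\A_e$ on $\mathbb{T}$; the inertia theorem (Theorem~\ref{theorem:inertia}, \citealp{ostrowski}) then says the numbers of positive and negative eigenvalues of $\mP_e$ equal the numbers of eigenvalues of $\A_e$ inside and outside the unit disc, which sum to the full dimension, so $\mP_e$ has no zero eigenvalue. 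Second, and cleaner, avoid $\mP_e^{-1}$ altogether: take $\mQ_e$ to be the unique Stein solution of (2), and use the cross-relations above together with their duals for $\mQ_e$ to verify that $\mP_e\mQ_e$ commutes with $\A_e$ and that $\mP_e\mQ_e\bbeta_e=\sigma_k^2\bbeta_e$; then $\mP_e\mQ_e\A_e^{j}\bbeta_e=\sigma_k^2\A_e^{j}\bbeta_e$ for all $j$, and the spanning property forces $\mP_e\mQ_e=\sigma_k^2\mat{1}$, so that invertibility of both matrices becomes a corollary of (3) rather than a prerequisite for it.
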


    We can now derive the parameters of the WFA $\widehat{A}=\langle\widehat{\balpha},\widehat{\A},\widehat{\bbeta}\rangle$ that make $\psi$ the solution of Theorem~\ref{theorem:aaksymb}.
    
    \begin{therm}\label{Theorem:maintrm}
    Let $A=\langle \balpha, \A, \bbeta \rangle $ be a minimal WFA with $n$ states in its SVA form, and let $\phi= \balpha^{\top}(z\mat{1}-\A)^{-1}\bbeta$ be a symbol for its Hankel operator $H$. Let $\sigma_k$ be a singular number of multiplicity $r$ for $H$, with:
    \begin{equation}
        \sigma_0 \geq \dots > \sigma_k=\dots=\sigma_{k+r-1}>\sigma_{k+r}\geq \dots \geq \sigma_{n-1}>0.
    \end{equation}
    We can partition the Gramian matrices $\mP$, $\mQ$ as follows:
    \begin{equation}\label{eq:setup}
        \mP=\mQ=\begin{pmatrix}
            \bm{\Sigma}  & \mat{0}     \\
            \mat{0}      & \sigma_k\mat{1}_r
            \end{pmatrix},
    \end{equation}
    where $\boldsymbol{\Sigma}\in \R^{(n-r)\times(n-r)}$ is the diagonal matrix containing the remaining singular numbers, and partition $\A$, $\balpha$ and $\bbeta$ to conform with the Gramians:
    \begin{equation}\A= \begin{pmatrix}
            \A_{11}          & \A_{12} \\
            \A_{21}   & \A_{22} 
            \end{pmatrix},\quad
        \balpha= \begin{pmatrix}
             \balpha_1 \\
             \balpha_2 
            \end{pmatrix},\quad
        \bbeta= \begin{pmatrix}
             \bbeta_1 \\
             \bbeta_2 
            \end{pmatrix}.
    \end{equation}
    Let $\mat{R}=\sigma_k^2\mat{1}_{n-r}-\bm{\Sigma}^2$, we denote by $(\cdot)^{+}$ the Moore-Penrose pseudo-inverse. If the function $\psi= \widehat{\balpha}^{\top}(z\mat{1}-\widehat{\A})^{-1} \widehat{\bbeta}+C$ is the best approximation of $\phi$, then:
    \begin{itemize}
        \item  If $\balpha_2 \neq \mat{0}$:
        \begin{equation}\label{eq:icalpsol}
            \begin{cases}
                \widehat{\bbeta} = - \widehat{\A}\A_{21}^{\top}(\bbeta_2^{\top})^{+} \\
                \widehat{\balpha} = \widehat{\A}^{\top}\mat{R}\A_{12}(\balpha_2^{\top})^{+}\\
                \widehat{\A}(\A_{11}^{\top}- \A_{21}^{\top}(\bbeta_2^{\top})^{+}\bbeta_1^{\top})=\mat{1}
        \end{cases}
        \end{equation}
        \item If $\balpha_2=\mat{0}$:
        \begin{equation}\label{eq:balphazero}
            \begin{cases}
                \widehat{\bbeta} =  (\mat{1} -\widehat{\A}\A_{11}^{\top})(\bbeta_1^{\top})^{+}\\
                \widehat{\balpha} =-(\mat{R} -\widehat{\A}^{\top}\mat{R}\A_{11})(\balpha_1^{\top})^{+}\\
                \widehat{\A}\A_{21}^{\top}=\mat{0}
            \end{cases}
        \end{equation}
    \end{itemize}
    \end{therm}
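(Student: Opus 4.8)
The plan is to take the three identities of Theorem~\ref{Theorem:allpass} as the governing equations for the unknown parameters and to solve them by exploiting the block-diagonal structure $\A_e=\A\oplus\widehat{\A}$ together with the block vectors $\balpha_e,\bbeta_e$ of~\eqref{eq:partition.Ae}. Writing the unique symmetric invertible solutions in the conformal block form
\begin{equation*}
\mP_e=\begin{pmatrix}\mP_{11}&\mP_{12}\\\mP_{12}^{\top}&\mP_{22}\end{pmatrix},\qquad
\mQ_e=\begin{pmatrix}\mQ_{11}&\mQ_{12}\\\mQ_{12}^{\top}&\mQ_{22}\end{pmatrix},
\end{equation*}
each of conditions (1)--(3) splits into $(1,1)$, $(1,2)$ and $(2,2)$ block equations, and the whole argument reduces to reading off $\widehat{\A},\widehat{\balpha},\widehat{\bbeta}$ from this system.

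First I would pin down the diagonal blocks. The $(1,1)$ parts of (1) and (2) are $\mP_{11}-\A\mP_{11}\A^{\top}=\bbeta\bbeta^{\top}$ and $\mQ_{11}-\A^{\top}\mQ_{11}\A=\balpha\balpha^{\top}$, i.e.\ precisely the Lyapunov equations characterizing the Gramians of $A$; since $A$ is in SVA form and $\rho(\A)<1$ makes the Stein solution unique, this forces $\mP_{11}=\mQ_{11}=\mat{D}$. The $(1,1)$ part of (3) then gives $\mat{D}^2+\mP_{12}\mQ_{12}^{\top}=\sigma_k^2\mat{1}$, hence $\mP_{12}\mQ_{12}^{\top}=\sigma_k^2\mat{1}-\mat{D}^2=\operatorname{diag}(\mat{R},\mat{0}_r)$, which is exactly how $\mat{R}=\sigma_k^2\mat{1}_{n-r}-\bm{\Sigma}^2$ enters. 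Because this right-hand side has rank $n-r$ with vanishing last $r$ rows and columns, partitioning $\mP_{12},\mQ_{12}$ along the $(n-r,r)$ split shows their bottom $r$ rows vanish, that $\widehat{\A}$ has size $n-r$, and that the top blocks satisfy $\mP_{12}^{(1)}(\mQ_{12}^{(1)})^{\top}=\mat{R}$ with both factors invertible. I would then normalize the (non-unique) realization of $\widehat{A}$ by a similarity transformation so that $\mP_{12}=\left(\begin{smallmatrix}\mat{1}\\\mat{0}\end{smallmatrix}\right)$, which forces $\mQ_{12}=\left(\begin{smallmatrix}\mat{R}\\\mat{0}\end{smallmatrix}\right)$; the remaining blocks of (3) merely fix $\mP_{22},\mQ_{22}$ and are not needed below.

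With these in hand, the parameters come from the $(1,2)$ blocks of (1) and (2). Substituting $\mP_{12}=\left(\begin{smallmatrix}\mat{1}\\\mat{0}\end{smallmatrix}\right)$ into $\mP_{12}-\A\mP_{12}\widehat{\A}^{\top}=\bbeta\widehat{\bbeta}^{\top}$ and splitting into top/bottom rows gives
\begin{align*}
&\mat{1}-\A_{11}\widehat{\A}^{\top}=\bbeta_1\widehat{\bbeta}^{\top},\\
&-\A_{21}\widehat{\A}^{\top}=\bbeta_2\widehat{\bbeta}^{\top},
\end{align*}
and condition (2) gives analogously $\mat{R}-\A_{11}^{\top}\mat{R}\widehat{\A}=-\balpha_1\widehat{\balpha}^{\top}$ and $\A_{12}^{\top}\mat{R}\widehat{\A}=\balpha_2\widehat{\balpha}^{\top}$. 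In the generic case $\balpha_2\neq\mat{0}$ (equivalently $\bbeta_2\neq\mat{0}$, since $\balpha_i=\operatorname{sgn}(\lambda_i)\bbeta_i$ in SVA form), transposing the bottom equations and cancelling the row vectors $\bbeta_2^{\top},\balpha_2^{\top}$ on the right by their pseudo-inverses yields $\widehat{\bbeta}=-\widehat{\A}\A_{21}^{\top}(\bbeta_2^{\top})^{+}$ and $\widehat{\balpha}=\widehat{\A}^{\top}\mat{R}\A_{12}(\balpha_2^{\top})^{+}$; inserting $\widehat{\bbeta}$ into the (transposed) top equation and collecting $\widehat{\A}$ gives the third line of~\eqref{eq:icalpsol}. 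When $\balpha_2=\mat{0}$ the bottom equations collapse to $\widehat{\A}\A_{21}^{\top}=\mat{0}$, and cancelling $\bbeta_1^{\top},\balpha_1^{\top}$ in the top equations by pseudo-inverse produces exactly~\eqref{eq:balphazero}.

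The step I expect to be most delicate is the passage from block equations to closed form. First, one must justify normalizing the realization so that $\mP_{12}=\left(\begin{smallmatrix}\mat{1}\\\mat{0}\end{smallmatrix}\right)$; this rests on invertibility of $\mP_{12}^{(1)}$, hence on $\mat{R}$ being invertible, i.e.\ on $\sigma_k$ being separated from the other singular numbers, and the multiplicity $r$ must be tracked throughout since it is what turns $\bbeta_2,\balpha_2$ into genuine vectors and forces pseudo-inverses rather than scalar division. Second, each pseudo-inverse cancellation must be checked for consistency, namely that the factor being removed lies in the range making $(\cdot)(\cdot)^{+}$ act as the identity on the relevant vector. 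Finally, the dichotomy $\balpha_2\neq\mat{0}$ versus $\balpha_2=\mat{0}$ must be shown exhaustive and each branch consistent; the degenerate branch is precisely where $\widehat{\A}$ is only partially pinned down (by $\widehat{\A}\A_{21}^{\top}=\mat{0}$), reflecting the additional freedom in the optimal approximant when the removed block decouples from the retained dynamics.
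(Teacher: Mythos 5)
Your proposal is correct and follows essentially the same route as the paper's proof: invoke Theorem~\ref{Theorem:allpass} for the unimodular error, partition $\mP_e,\mQ_e$ conformally with $\A_e$, identify $\mP_{11}=\mQ_{11}$ with the SVA Gramians, use $\mP_e\mQ_e=\sigma_k^2\mat{1}$ and a change of basis to fix $\mP_{12}=\begin{pmatrix}\mat{1}&\mat{0}\end{pmatrix}^{\top}$, $\mQ_{12}=\begin{pmatrix}\mat{R}&\mat{0}\end{pmatrix}^{\top}$, and then read off $\widehat{\balpha},\widehat{\A},\widehat{\bbeta}$ from the $(1,2)$-block Stein equations with the same $\balpha_2\neq\mat{0}$ versus $\balpha_2=\mat{0}$ dichotomy. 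If anything, your justification of the rank/normalization step (invertibility of $\mat{R}$ forcing the bottom $r$ rows of $\mP_{12},\mQ_{12}$ to vanish) is spelled out more explicitly than in the paper, which simply asserts the choice of basis.
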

    
    \begin{proof}
    Since $\sigma^{-1}e=\phi-\psi$ is unimodular, from Theorem~\ref{Theorem:allpass} there exist two symmetric nonsingular matrices $\mP_e$, $\mQ_e$ satisfying the fixed point equations:
        \begin{align}\label{fixed_point_e}
            \mP_e-\A_e\mP_e\A_e^{\top}&=\bbeta_e\bbeta_e^{\top}\\
            \mQ_e-\A_e^{\top}\mQ_e\A_e&=\balpha_e\balpha_e^{\top}\label{fixed_point_e2}
        \end{align}
    and such that $\mP_e\mQ_e=\sigma^2_k\mat{1}$.
    We can partition $\mP_e$ and $\mQ_e$ according to the definition of $\A_e$~(see Equation~\ref{eq:partition.Ae}):
        \begin{equation*}
            \mP_e= \begin{pmatrix}
                \mP_{11}          & \mP_{12} \\
                \mP_{12}^{\top}   & \mP_{22} 
                \end{pmatrix},\quad
            \mQ_e= \begin{pmatrix}
                 \mQ_{11}          & \mQ_{12} \\
                \mQ_{12}^{\top}   & \mQ_{22} 
          \end{pmatrix}.
        \end{equation*}
    From Equation~\ref{fixed_point_e} and~\ref{fixed_point_e2}, we note that $\mP_{11}$ and $\mQ_{11}$ correspond to the controllability and observability Gramians of $A$: 
        \begin{equation*}
            \mP_{11}=\mQ_{11}=\mP=\begin{pmatrix}
                          \bm{\Sigma} & \mat{0}     \\
                          \mat{0}      & \sigma_k\mat{1}
                          \end{pmatrix}.
        \end{equation*}
    Moreover, since $\mP_e\mQ_e=\sigma_k^2\mat{1}$, we get $\mP_{12}\mQ_{12}^{\top}=\sigma_k^2\mat{1}-\mP^2$. It follows that $\mP_{12}\mQ_{12}^{\top}$ has rank $n-r$. Without loss of generality we can set $\dim{\widehat{\A}}=j=n-r$, and choose an appropriate basis for the state space such that $\mP_{12}=\begin{pmatrix} \mat{1} &  \mat{0} \end{pmatrix} ^{\top}$ and $\mQ_{12}=\begin{pmatrix} \mat{R} &  \mat{0} \end{pmatrix} ^{\top}$, with $\mat{R}=\sigma_k^2\mat{1}-\bm{\Sigma}^2$. Once $\mP_{12}$ and $\mQ_{12}$ are fixed, the values of $\mP_{22}$ and $\mQ_{22}$ are automatically determined. We obtain:
    \begin{equation}
        \mP_e= \begin{pmatrix}
            \bm{\Sigma}     & \mat{0}                    & \mat{1}\\
            \mat{0}          & \sigma_k\mat{1}     & \mat{0}\\
             \mat{1} & \mat{0}                    & -\bm{\Sigma} \mat{R}^{-1}  
             \end{pmatrix},\quad
        \mQ_e= \begin{pmatrix}
            \bm{\Sigma}     & \mat{0}                    & \mat{R}\\
            \mat{0}          & \sigma_k\mat{1}     & \mat{0}\\
            \mat{R}         & \mat{0}                    & -\bm{\Sigma} \mat{R}
            \end{pmatrix}.
    \end{equation}
    
    Now that we have an expression for the matrices $\mP_e$ and $\mQ_e$ of Theorem~\ref{Theorem:allpass}, we can  rewrite the fixed point equations to derive the parameters $\widehat{\balpha}$, $\widehat{\A}$ and $\widehat{\bbeta}$. We obtain the following systems:
        \begin{equation}
            \begin{cases}
                \mP -\A \mP \A^{\top}= \bbeta\bbeta^{\top}\\
                \mat{N} -\A \mat{N} \widehat{\A}^{\top}= \bbeta\widehat{\bbeta}^{\top}\\
                -\bm{\Sigma} \mat{R}^{-1} +\widehat{\A}\bm{\Sigma} \mat{R}^{-1}\widehat{\A}^{\top}={\widehat{\bbeta}}{\widehat{\bbeta}}^{\top}
             \end{cases} \quad  \begin{cases}
                \mP -\A^{\top} \mP \A= \balpha\balpha^{\top}\\
                \mat{M} -\A^{\top} \mat{M} \widehat{\A}= -\balpha\widehat{\balpha}^{\top}\\
                -\bm{\Sigma} \mat{R} +\widehat{\A}^{\top}\bm{\Sigma} \mat{R}\widehat{\A}={\widehat{\balpha}}{\widehat{\balpha}}^{\top}
            \end{cases}
        \end{equation}
    where $\mat{N}= \begin{pmatrix}
                \mat{1}     \\
                \mat{0}
            \end{pmatrix}$ and $\mat{M}= \begin{pmatrix}
                                         \mat{R}     \\
                                         \mat{0}
                                          \end{pmatrix}$. 
                                          
    We can rewrite the second equation of each system as follows:
        \begin{equation}
            \begin{cases}
                \mat{1} -\A_{11} \widehat{\A}^{\top}= \bbeta_1\widehat{\bbeta}^{\top}\\
                -\A_{21}\widehat{\A}^{\top}=\bbeta_2\widehat{\bbeta}^{\top}
             \end{cases} \quad  \begin{cases}
                \mat{R} -\A_{11}^{\top} \mat{R} \widehat{\A}=-\balpha_1\widehat{\balpha}^{\top}\\
                \widehat{\A}^{\top}\mat{R}\A_{12}=\widehat{\balpha}\balpha_2^{\top}
            \end{cases}
        \end{equation}        
    If $\balpha_2 \neq \mat{0}$, then also $\bbeta_2 \neq \mat{0}$ (recall that $\balpha_i=\operatorname{sgn}(\lambda_i)\bbeta_i$), and we have:                        
    \begin{equation}
        \begin{cases}
                \widehat{\bbeta} = - \widehat{\A}\A_{21}^{\top}(\bbeta_2^{\top})^{+} \\
                \widehat{\balpha} = \widehat{\A}^{\top}\mat{R}\A_{12}(\balpha_2^{\top})^{+}\\
                \widehat{\A}(\A_{11}^{\top}- \A_{21}^{\top}(\bbeta_2^{\top})^{+}\bbeta_1^{\top})=\mat{1}
        \end{cases}
    \end{equation}
    with $(\balpha_2^{\top})^{+}=\frac{\balpha_2}{\balpha_2^{\top}\balpha_2}$ and $(\bbeta_2^{\top})^{+}=\frac{\bbeta_2}{\bbeta_2^{\top}\bbeta_2}$.
    
    If $\balpha_2=\mat{0}$, we have $\widehat{\A}\A_{21}^{\top}=\mat{0}$. We remark that $\widehat{\A}$ has size $(n-r)\times(n-r)$, while $\A_{21}^{\top}$ is $(n-r)\times r$, so the system of equations corresponding to $\widehat{\A}\A_{21}^{\top}=\mat{0}$ is underdetermined if $r<\frac{n}{2}$, in which case we can find an alternative set of solutions:
    \begin{equation}\label{eq:case2}
            \begin{cases}
                \widehat{\bbeta} =  (\mat{1} -\widehat{\A}\A_{11}^{\top})(\bbeta_1^{\top})^{+}\\
                \widehat{\balpha} =-(\mat{R} -\widehat{\A}^{\top}\mat{R}\A_{11})(\balpha_1^{\top})^{+}\\
                \widehat{\A}\A_{21}^{\top}=\mat{0}
            \end{cases}
    \end{equation}
    with $\widehat{\A}\neq \mat{0}$. On the other hand, if $r\geq\frac{n}{2}$, \emph{i.e.} if the multiplicity of the singular number $\sigma_k$ is more than half the size of the original WFA, the system might not have any solution unless $\widehat{\A}=\mat{0}$ (or unless $\A_{21}$ was zero to begin with). In this setting the method proposed returns $\widehat{\A}=\mat{0}$. 
    \end{proof} 
    
    We remark that in the (rare) case in which the algorithm returns $\widehat{\A}=\mat{0}$, an alternative and preferable approach is to search for an approximation of size $k-1$ or $k+1$. This way, the multiplicity $r$ of the singular number $\sigma_k$ is such that $r<\frac{n}{2}$, and the system in Equation~\ref{eq:case2} is underdetermined.
    
    Theorem~\ref{Theorem:maintrm} provides us with a way to compute the coefficients of the function $\psi$ solving Theorem~\ref{theorem:aaksymb}. It is important to notice that the WFA $A_k$ is not necessarily the best approximation we are looking for. Intuitively, the problem is that it might be too big, as irredundancy is not guaranteed by the system of equations (while we know from AAK theory that the best approximation corresponds to a bounded operator). Therefore, in these cases we need to ``extract'' from $A_k$ a smaller WFA of size $k$. We do this by extracting the component of the function $\psi$ that belongs to the negative Hardy space.

\subsubsection{Extracting the Rational Component}\label{sec:rational_comp}

     The objective of this section is to ``isolate'' the function $g\in \mathcal{R}_k$, \emph{i.e.} the \emph{rational component} of $\psi$.
    To do this, we study the position of the poles of $\psi$. In fact, we know from Theorem~\ref{theorem:Kronecker} that the poles of a strictly proper rational function lie inside the unit disc. As noted before, the key to solving our problem is the way we parametrized the functions. We defined $\psi$ so that its poles correspond to the eigenvalues of $\widehat{A}$. Therefore, we study the eigenvalues of $\widehat{\A}$ using the following auxiliary result from~\citet{ostrowski}. A proof of this theorem can be found in~\cite{inertiaproof}.
    \begin{therm}[\cite{ostrowski}]\label{theorem:inertia}
        Let $|\Sigma|=1$, and let $\mP$ be a solution to the fixed point equation $X-\A X\A^{\top}=\bbeta\bbeta^{\top}$ for the WFA $A=\langle \balpha, \A, \bbeta \rangle$. If $A$ is reachable, then:
        \begin{itemize}
            \item The number of eigenvalues $\lambda$ of $\A$ such that $|\lambda|<1$ is equal to the number of positive eigenvalues of $\mP$.
            \item The number of eigenvalues $\lambda$ of $\A$ such that $|\lambda|>1$ is equal to the number of negative eigenvalues of $\mP$.
        \end{itemize}
    \end{therm}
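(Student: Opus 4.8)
The statement is the discrete-time (Stein-equation) inertia theorem, and the plan is to prove it directly from the fixed-point equation rather than to invoke the continuous Main Inertia Theorem. The idea is to pin down the sign of the quadratic form $x\mapsto x^{*}\mP x$ on the $\A$-invariant subspaces determined by the location of the eigenvalues relative to the unit circle, and then read off the inertia of $\mP$ from a variational characterization.

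First I would contract the equation $\mP-\A\mP\A^{\top}=\bbeta\bbeta^{\top}$ against a (possibly complex) left eigenvector $y$ of $\A$, say $\A^{\top}y=\lambda y$. Since $\A$ is real, $y^{*}\A=\bar\lambda\,y^{*}$, and the contraction yields the scalar identity $(1-|\lambda|^{2})\,y^{*}\mP y=|\bbeta^{\top}y|^{2}$, where $y^{*}\mP y\in\R$ because $\mP$ is symmetric. The reachability hypothesis guarantees that $(\A,\bbeta)$ is a controllable pair, so by the PBH criterion $\bbeta^{\top}y\neq 0$ for every eigenvector $y$, and the right-hand side is strictly positive. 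Two consequences follow at once: no eigenvalue of $\A$ lies on the unit circle (otherwise the left side would vanish), and $y^{*}\mP y>0$ when $|\lambda|<1$ while $y^{*}\mP y<0$ when $|\lambda|>1$. In particular $\C^{n}=W_{-}\oplus W_{+}$, where $W_{-}$ (resp.\ $W_{+}$) is the spectral subspace of $\A$ for the eigenvalues inside (resp.\ outside) the disc, with $\dim W_{-}=n_{-}$ and $\dim W_{+}=n_{+}$ equal to the two counts in the statement and $n_{-}+n_{+}=n$.

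The eigenvector computation only controls the form on single eigendirections, so the substantive step is to upgrade it to definiteness on the whole subspaces. For this I would change to a basis adapted to $W_{-}\oplus W_{+}$, in which $\A=\mathrm{diag}(\A_{-},\A_{+})$ with $\rho(\A_{-})<1$ and every eigenvalue of $\A_{+}$ outside the disc. The Stein equation then decouples, its diagonal blocks giving $\mP_{--}-\A_{-}\mP_{--}\A_{-}^{\top}=\bbeta_{-}\bbeta_{-}^{\top}$ and $\mP_{++}-\A_{+}\mP_{++}\A_{+}^{\top}=\bbeta_{+}\bbeta_{+}^{\top}$, where $\mP_{--},\mP_{++}$ are exactly the restrictions of the form to $W_{-},W_{+}$. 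Since $\rho(\A_{-})<1$, the first equation has the convergent solution $\mP_{--}=\sum_{p\geq 0}\A_{-}^{p}\bbeta_{-}\bbeta_{-}^{\top}(\A_{-}^{\top})^{p}$, which is positive semidefinite and, because controllability of $(\A,\bbeta)$ is inherited by $(\A_{-},\bbeta_{-})$ through PBH, positive definite. For the unstable block I would set $\mat{B}=\A_{+}^{-1}$ (well defined, $\rho(\mat{B})<1$) and rewrite the equation as $\mP_{++}-\mat{B}\mP_{++}\mat{B}^{\top}=-(\mat{B}\bbeta_{+})(\mat{B}\bbeta_{+})^{\top}$, whose convergent solution is negative definite by the same argument. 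Hence $x^{*}\mP x>0$ on $W_{-}\setminus\{0\}$ and $x^{*}\mP x<0$ on $W_{+}\setminus\{0\}$.

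Finally I would invoke the variational description of inertia: the maximal dimension of a subspace on which a symmetric form is positive (resp.\ negative) definite equals its number of positive (resp.\ negative) eigenvalues. This gives at least $n_{-}$ positive and at least $n_{+}$ negative eigenvalues of $\mP$; since $n_{-}+n_{+}=n$, the counts are exact, $\mP$ is nonsingular, and the theorem follows. I expect the main obstacle to be the middle step: the two spectral subspaces are \emph{not} $\mP$-orthogonal in general (the cross block $\mP_{-+}$ need not vanish), so one cannot simply split the form, and it is essential that the diagonal blocks \emph{alone} are definite — which is exactly what the decoupled stable Stein equations plus the inherited controllability provide, the passage to the unstable block being made convergent through $\A_{+}^{-1}$. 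As a self-contained alternative, the Cayley transform $\A\mapsto(\A-\mat{1})(\A+\mat{1})^{-1}$ sends the disc to the left half-plane and turns the Stein equation into a Lyapunov equation, reducing the statement to the classical continuous-time Main Inertia Theorem.
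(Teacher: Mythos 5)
The paper never proves this statement: it is quoted as a classical result of Ostrowski, with the proof explicitly deferred to an external reference (\cite{inertiaproof}), so there is no internal proof to compare against. Your argument is therefore a self-contained alternative, and it is essentially correct: (i) contracting the Stein equation against left eigenvectors, $(1-|\lambda|^{2})\,y^{*}\mP y=|\bbeta^{\top}y|^{2}$, together with PBH, correctly rules out unimodular eigenvalues and fixes signs on eigendirections; (ii) the spectral splitting decouples the diagonal blocks of the Stein equation, whose unique solutions are the Neumann series you write (the inversion trick $\mat{B}=\A_{+}^{-1}$ for the anti-stable block is the standard device, and controllability is indeed inherited by both sub-pairs via PBH); (iii) the variational characterization of inertia then forces exactly $n_{-}$ positive and $n_{+}$ negative eigenvalues. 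Your closing observation that the cross block $\mP_{-+}$ need not vanish is apt — in fact it need not even be uniquely determined, since $1-\mu\nu$ can vanish when $|\mu|<1<|\nu|$ — which is precisely why the argument must, as you say, rest on the diagonal blocks alone.

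One piece of bookkeeping should be repaired. Under the change of basis $S^{-1}\A S=\mathrm{diag}(\A_{-},\A_{+})$, the matrix satisfying the transformed Stein equation is $\widetilde{\mP}=S^{-1}\mP S^{-\top}$ (with $\widetilde{\bbeta}=S^{-1}\bbeta$), whereas the matrix representing the quadratic form $x\mapsto x^{*}\mP x$ in the new coordinates is $S^{\top}\mP S$; these differ in general. So the definite blocks $\mP_{--},\mP_{++}$ you construct are \emph{not} the restrictions of the $\mP$-form to the spectral subspaces $W_{\pm}$ of $\A$ — they are the restrictions of that form to the spectral subspaces of $\A^{\top}$, expressed in the dual basis. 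The proof survives because $\widetilde{\mP}$ is congruent to $\mP$, so by Sylvester's law of inertia they have the same signature, and your variational count applied to $\widetilde{\mP}$ on the two coordinate subspaces gives the claim; but that congruence step should be stated, since as written the argument conflates two different transformed matrices. A last, minor point: you silently read ``reachable'' as PBH-controllability of the pair $(\A,\bbeta)$. That is the hypothesis that matches the equation $X-\A X\A^{\top}=\bbeta\bbeta^{\top}$ and is surely what is intended, but note that the paper's own definition of reachability is phrased through $\balpha$ and the forward matrix, so it is worth flagging the interpretation rather than assuming it tacitly.
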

    After a change of basis (that we detail in Section~\ref{sec:algorithm} with the approximation algorithm), we can rewrite $\widehat{\A}$ in block-diagonal form: 
    \begin{equation}\label{defplus}
            \widehat{\A}= \begin{pmatrix}
                      \widehat{\A}_+          & \mat{0}                \\
                      \mat{0}                 & \widehat{\A}_-     
                      \end{pmatrix}
    \end{equation}
    where the modulus of the eigenvalues of $\widehat{\A}_+$ (resp. $\widehat{\A}_-$) is smaller (resp. greater) than one. We then apply the same change of coordinates on $\widehat{\balpha}$ and $\widehat{\bbeta}$.
    
    We can finally find the rational component of the function $\psi$, \emph{i.e.} the function $g$ from Corollary~\ref{corollary:stable} necessary to solve that approximate minimization problem.
    \begin{therm}
        Let $\widehat{\A}_+, \widehat{\balpha}_+, \widehat{\bbeta}_+$ be as in Equation \ref{defplus}. The rational component of $\psi$ is the function $g= \widehat{\balpha}_+^{\top}(z\mat{1}-\widehat{\A}_+)^{-1}\widehat{\bbeta}_+$.
    \end{therm}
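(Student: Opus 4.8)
The plan is to decompose $\psi$ explicitly into its negative-Hardy (rational) part and an $\mathcal{H}^{\infty}$ remainder, exploiting the block-diagonal form of $\widehat{\A}$ in Equation~\ref{defplus}. First I would split the resolvent along the block structure, writing
\begin{equation*}
\widehat{\balpha}^{\top}(z\mat{1}-\widehat{\A})^{-1}\widehat{\bbeta} = \widehat{\balpha}_+^{\top}(z\mat{1}-\widehat{\A}_+)^{-1}\widehat{\bbeta}_+ + \widehat{\balpha}_-^{\top}(z\mat{1}-\widehat{\A}_-)^{-1}\widehat{\bbeta}_-.
\end{equation*}
Because $\rho(\widehat{\A}_+)<1$, the first summand admits the expansion $\sum_{j\geq 0}\widehat{\balpha}_+^{\top}\widehat{\A}_+^{j}\widehat{\bbeta}_+\, z^{-j-1}$, which converges on and outside $\mathbb{T}$ and involves only strictly negative powers of $z$; hence it is a strictly proper rational function lying in $\mathcal{H}^2_-$, with poles exactly the eigenvalues of $\widehat{\A}_+$, all inside $\mathbb{D}$. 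By contrast, since the eigenvalues of $\widehat{\A}_-$ have modulus strictly greater than one, I would expand the second summand about the origin as $-\sum_{j\geq 0}\widehat{\balpha}_-^{\top}\widehat{\A}_-^{-j-1}\widehat{\bbeta}_-\, z^{j}$, which converges in a neighbourhood of $\overline{\mathbb{D}}$ and involves only non-negative powers of $z$; this summand is therefore analytic and bounded on $\mathbb{D}$, i.e.\ it belongs to $\mathcal{H}^{\infty}$.

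Second, absorbing the remaining $C\in\mathcal{H}^{\infty}$ term, I obtain $\psi = g + l$ with
\begin{equation*}
g= \widehat{\balpha}_+^{\top}(z\mat{1}-\widehat{\A}_+)^{-1}\widehat{\bbeta}_+ \in \mathcal{H}^2_-, \qquad l= \widehat{\balpha}_-^{\top}(z\mat{1}-\widehat{\A}_-)^{-1}\widehat{\bbeta}_- + C \in \mathcal{H}^{\infty}.
\end{equation*}
Since the two parts live in complementary subspaces, $g=\mathbb{P}_-\psi$ is uniquely determined, and Corollary~\ref{corollary:stable} then identifies $H_g$ with the unique optimal Hankel operator of Theorem~\ref{theorem:aakop}. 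For this splitting to be legitimate I would first record that $\widehat{\A}$ has no eigenvalue on $\mathbb{T}$: this was already argued when $\psi$ was constructed (a best AAK symbol has strictly proper rational part with all poles inside $\mathbb{D}$, and no pole on the circle), so the spectrum of $\widehat{\A}$ partitions cleanly into the $\widehat{\A}_+$ and $\widehat{\A}_-$ blocks.

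The one genuinely nontrivial step, and the main obstacle, is verifying that $g$ has rank exactly $k$, i.e.\ that $\widehat{\A}_+$ carries precisely $k$ eigenvalues of $\widehat{\A}$ counted with multiplicity. Here I would invoke the inertia result of Theorem~\ref{theorem:inertia}. From the third fixed-point equation in the proof of Theorem~\ref{Theorem:maintrm}, the matrix $\mP_{22}=-\bm{\Sigma}\mat{R}^{-1}$ solves $X-\widehat{\A}X\widehat{\A}^{\top}=\widehat{\bbeta}\widehat{\bbeta}^{\top}$ for $\widehat{A}$. Its eigenvalues are the diagonal entries $-\sigma_i/(\sigma_k^2-\sigma_i^2)$: for the $k$ indices with $\sigma_i>\sigma_k$ these are positive, and for the $n-r-k$ indices with $\sigma_i<\sigma_k$ they are negative. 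Thus $\mP_{22}$ has exactly $k$ positive eigenvalues, and Theorem~\ref{theorem:inertia} then forces $\widehat{\A}$ to have exactly $k$ eigenvalues of modulus less than one. Consequently $g\in\mathcal{R}_k$, as required. The delicate point in applying the inertia theorem is its reachability hypothesis on $\widehat{A}$, which I would need to justify separately (it follows from the nonsingularity of $\mP_{22}$ together with the minimality built into the construction of $E$); once reachability is in hand the pole count, and hence the identification of $g$ as the rational component, follows.
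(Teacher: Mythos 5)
Your proof is correct and takes essentially the same route as the paper: the same splitting $\psi = g + l$ along the block-diagonal structure of $\widehat{\A}$ (with $l\in\mathcal{H}^{\infty}$ coming from $\widehat{\A}_-$ and the constant $C$), followed by a pole count for $g$ via the inertia result of Theorem~\ref{theorem:inertia} applied to a fixed-point solution extracted from the proof of Theorem~\ref{Theorem:maintrm}, with minimality of $E$ supplying the reachability hypothesis. The only immaterial difference is that you examine $\mP_{22}=-\bm{\Sigma}\mat{R}^{-1}$ while the paper examines $-\bm{\Sigma}\mat{R}$; these diagonal matrices have the same signature (exactly $k$ positive entries), so the eigenvalue count for $\widehat{\A}$, and hence the identification of $g$ as the rank-$k$ rational component, is the same.
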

    \begin{proof}
        Clearly $\psi=g+ l$, with $l=\widehat{\balpha}_-^{\top}(z\mat{1}-\widehat{\A}_-)^{-1}\widehat{\bbeta}_-$, $l \in \mathcal{H}^{\infty}$. To conclude the proof we need to show that $g$ has $k$ poles inside the unit disc, and that therefore it has rank $k$. We do this by studying the modulus of the eigenvalues of $\widehat{\A}_+$. 
        
        Since $E$ is minimal, $\widehat{A}$ is reachable by definition, so we can use Theorem~\ref{theorem:inertia} and solve the problem by directly examining the eigenvalues of $-\bm{\Sigma} \mat{R}$. From the proof of Theorem~\ref{Theorem:maintrm} we have $-\boldsymbol{\Sigma} \mat{R}=\boldsymbol{\Sigma}(\boldsymbol{\Sigma}^2-\sigma^2_k\mat{1})$, where $\boldsymbol\Sigma$ is the diagonal matrix having as elements the singular numbers of $H$ different from $\sigma_k$. It follows that $-\boldsymbol{\Sigma} \mat{R}$ has only $k$ strictly positive eigenvalues, and $\widehat{\A}$ has $k$ eigenvalues with modulus smaller than $1$. Thus, $\widehat{\A}_+$ has $k$ eigenvalues, corresponding to the poles of $g$.
    \end{proof}

\subsubsection{Solving the Approximation Problem}

    Now that we have found the rational function $g$, a symbol for the operator that solves Theorem~\ref{theorem:aakop}, we need to find the parameters of $\widehat{A}_k$, the WFA corresponding to the optimal approximation. These are directly revealed by the expression of $g$, due to the function's parametrization.
    
    \begin{therm}
         Let $A= \langle \balpha, \A, \bbeta \rangle$ be a minimal WFA with $n$ states over a one-letter alphabet. Let $A$ be in its SVA form. The optimal spectral-norm approximation of rank $k$ is given by the WFA $\widehat{A}_k= \langle \widehat{\balpha}_+, \widehat{\A}_+, \widehat{\bbeta}_+ \rangle$.
    \end{therm}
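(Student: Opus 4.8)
The plan is to show that the final statement is a direct synthesis of the three preceding results, so the proof amounts to chaining them together and verifying that the object they produce is genuinely the Hankel matrix of the WFA $\widehat{A}_k$. The starting point is the immediately preceding theorem, which established that $g= \widehat{\balpha}_+^{\top}(z\mat{1}-\widehat{\A}_+)^{-1}\widehat{\bbeta}_+$ is the rational component of the optimal symbol $\psi$ obtained from Theorem~\ref{Theorem:maintrm}, and that $g \in \mathcal{R}_k$, \emph{i.e.} $g$ is strictly proper with exactly $k$ poles inside the unit disc. Since $\psi$ solves Equation~\ref{eqsymbol} (Theorem~\ref{theorem:aaksymb}) and decomposes as $\psi = l + g$ with $l \in \mathcal{H}^{\infty}$, Corollary~\ref{corollary:stable} applies and tells us that the Hankel operator $H_g$ is exactly the unique optimal operator of Theorem~\ref{theorem:aakop}. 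Hence the matrix $\mat{G} = \H_g$ satisfies $\norm{\H - \mat{G}} = \sigma_k$.

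First I would confirm that $g$ is in fact the negative Hardy symbol of the WFA $\widehat{A}_k = \langle \widehat{\balpha}_+, \widehat{\A}_+, \widehat{\bbeta}_+ \rangle$. By construction of the block decomposition in Equation~\ref{defplus}, all eigenvalues of $\widehat{\A}_+$ have modulus strictly less than one, so $\rho(\widehat{\A}_+) < 1$ and $\widehat{A}_k$ is irredundant. This is precisely the hypothesis needed to invoke Equation~\ref{eq:symbolsum}, which yields the geometric-series identity
\begin{equation*}
    \mathbb{P}_- \phi_{\widehat{A}_k} = \sum_{m \geq 0} \widehat{\balpha}_+^{\top} \widehat{\A}_+^{m} \widehat{\bbeta}_+ \, z^{-m-1} = \widehat{\balpha}_+^{\top}(z\mat{1}-\widehat{\A}_+)^{-1}\widehat{\bbeta}_+ = g.
\end{equation*}
Reading off the Fourier coefficients through Equation~\ref{eq:symbol}, the entries of the Hankel matrix of $\widehat{A}_k$ are $f_{\widehat{A}_k}(i+j) = \widehat{g}(-(i+j)-1)$, so this Hankel matrix coincides with $\H_g = \mat{G}$. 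Combining this with the previous paragraph gives $\norm{\H - \mat{G}} = \sigma_k(\H)$, which is exactly the defining equality of the optimal spectral-norm approximate minimization problem.

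The last thing to check -- and the only point that requires a little care -- is that $\widehat{A}_k$ is a legitimate rank-$k$ solution, \emph{i.e.} that it is minimal with exactly $k$ states. The preceding theorem already showed that $\widehat{\A}_+$ carries the $k$ eigenvalues of modulus below one, so $\widehat{A}_k$ has $k$ states. To see it is minimal I would invoke Kronecker's theorem (Theorem~\ref{theorem:Kronecker}): since $g$ is strictly proper with exactly $k$ poles inside the unit disc, $\rank(\H_g) = k$, and by the Fliess characterization (Theorem~\ref{fliess}) the minimal WFA realizing $\mat{G}$ has $k$ states. Thus $\widehat{A}_k$, which has $k$ states and realizes $g$, is minimal, so no spurious states survive and $\widehat{A}_k$ is the sought optimal approximation. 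I expect the main obstacle to lie not in any single inequality but in making the symbol-to-WFA identification watertight: one must be sure that the irredundancy of $\widehat{\A}_+$ is what licenses the passage from the resolvent expression for $g$ to the Hankel matrix $\mat{G}$, and that discarding the unstable block $\widehat{\A}_-$ (the $\mathcal{H}^{\infty}$ part $l$) leaves the negative-Hardy component, and hence the spectral-norm error, unchanged.
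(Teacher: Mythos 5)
Your proposal is correct and follows essentially the same route as the paper's own proof: invoke Corollary~\ref{corollary:stable} to identify $H_g$ as the unique optimal Hankel operator, then use the resolvent expansion $g=\sum_{m\geq 0}\widehat{\balpha}_+^{\top}\widehat{\A}_+^m\widehat{\bbeta}_+ z^{-m-1}$ to recognize the Fourier coefficients of $g$ as the function computed by $\widehat{A}_k$, so that its Hankel matrix is exactly $\mat{G}$. Your additional checks -- that $\rho(\widehat{\A}_+)<1$ licenses the series expansion, and that Kronecker's and Fliess's theorems force $\widehat{A}_k$ to be minimal with exactly $k$ states -- are sound refinements of details the paper leaves implicit, not a different argument.
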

    \begin{proof}
        From Corollary~\ref{corollary:stable} we know that $g$ is the rational function associated with the Hankel matrix of the best approximation. Given the correspondence between the Fourier coefficients of $g$ and the entries of the matrix, we have:
        \begin{equation}
            g= \widehat{\balpha}_+^{\top}(z\mat{1}-\widehat{\A}_+)^{-1}\widehat{\bbeta}_+ = \sum_{k\geq 0} \widehat{\balpha}_+^{\top}\widehat{\A}_+^k \widehat{\bbeta}_+ z^{-k-1} =\sum_{k\geq 0}\bar{f}(k) z^{-k-1}
        \end{equation}
    where $\bar{f}:\Sigma^* \rightarrow \R$ is the function computed by $\widehat{A}_k$ and $\widehat{\balpha}_+, \widehat{\A}_+, \widehat{\bbeta}_+$ are the parameters.
    \end{proof}

\subsection{Error Analysis}

    Thanks to the use of AAK theory, the method outlined in the previous sections is guaranteed to return the rank $k$ optimal spectral-norm approximation of a WFA satisfying our assumptions, and the singular number $\sigma_k$ provides the error. As noticed before, since the Hankel matrix has finite rank and we can derive the Gramian matrices of the WFA, the singular number corresponding to the error can be computed precisely, even though the Hankel matrix is infinite.
    
    Similarly to the case of SVA truncation~\citep{Balle19}, owing to the ordering of the singular numbers, the error decreases when $k$ increases, meaning that allowing $\widehat{A}_k$ to have more states guarantees a better approximation of $A$. Note that  the solution we propose is optimal in the spectral norm, but it might not be the case in other norms. Nonetheless, we have the following bound between $\ell^2$ norm and spectral norm. 
  
    \begin{therm}\label{l2bound}
        Let $A$ be a minimal WFA computing $f:\Sigma^* \rightarrow \R$, with matrix $\H$. Let $\widehat{A}_k$ be its optimal spectral-norm approximation, computing $g:\Sigma^* \rightarrow \R$, with matrix $\mat{G}$. Then:
        \begin{equation}
            \norm{f-g}_{\ell^2} \leq \norm{\H - \mat{G}} = \sigma_k.
        \end{equation}
    \end{therm}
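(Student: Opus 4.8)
The plan is to use the elementary fact that, over a one-letter alphabet, the zeroth column of a Hankel matrix is exactly the sequence it encodes; this lets me bound the $\ell^2$ norm of $f-g$ by the operator (spectral) norm of $\H-\mat{G}$, after which the equality $\norm{\H-\mat{G}}=\sigma_k$ is handed to me by AAK theory.

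First I would dispose of the equality. Because $\widehat{A}_k$ was constructed precisely so that its Hankel matrix $\mat{G}$ realizes the optimal rank-$k$ Hankel approximation of $\H$, Theorem~\ref{theorem:aakop} gives $\norm{\H-\mat{G}}=\sigma_k$ directly. Hence the only nontrivial part is the inequality $\norm{f-g}_{\ell^2}\le\norm{\H-\mat{G}}$.

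For the inequality, recall that under the one-letter identification $\Sigma^*\cong\N$ every Hankel matrix of a sequence $h:\N\to\R$ has entries $(i,j)\mapsto h(i+j)$ (Equation~\ref{eq:symbol} aligns this with the Hardy-space convention). Applying this to $h=f-g$, whose Hankel matrix is $\H-\mat{G}$, the zeroth column is
\begin{equation*}
(\H-\mat{G})(:,0)=\big((f-g)(0),\,(f-g)(1),\,(f-g)(2),\dots\big)^{\top},
\end{equation*}
that is, the sequence $f-g$ itself. Writing $e_0=(1,0,0,\dots)^{\top}\in\ell^2$ for the first basis vector (the constant function $1\in\mathcal{H}^2$), this means $(\H-\mat{G})e_0=f-g$ as an element of $\ell^2$, so $\norm{f-g}_{\ell^2}=\norm{(\H-\mat{G})e_0}_2$.

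Finally, invoking the identification of the spectral norm with the operator norm recalled in the Preliminaries, and using $\norm{e_0}_2=1$,
\begin{equation*}
\norm{f-g}_{\ell^2}=\norm{(\H-\mat{G})e_0}_2\le\norm{\H-\mat{G}}\,\norm{e_0}_2=\norm{\H-\mat{G}}=\sigma_k,
\end{equation*}
which is the assertion. I expect no genuine obstacle here: the argument is essentially a one-line operator-norm estimate. The only points requiring care are bookkeeping --- confirming that $e_0$ is a legitimate unit vector in the domain $\mathcal{H}^2$ and that $H-H_g$ is bounded (both guaranteed by the compactness and irredundancy assumptions already in force), and keeping the index convention $\H(i,j)=f(i+j)$ straight against the Hardy-space indexing $\widehat{\phi}(-i-j-1)$.
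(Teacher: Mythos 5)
Your proposal is correct and follows essentially the same route as the paper's own proof: both identify $f-g$ with the image $(\H-\mat{G})\mat{e}_0$ of the first standard basis vector (the zeroth column of the difference Hankel matrix), bound its $\ell^2$ norm by the operator norm of $\H-\mat{G}$, and obtain the equality $\norm{\H-\mat{G}}=\sigma_k$ from the AAK optimality of $\widehat{A}_k$. Your additional remarks on boundedness and indexing conventions are sound but not needed beyond what the paper already assumes.
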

    
    \begin{proof}
        Let $\mat{e}_0=\begin{pmatrix} 1 & 0 & \cdots \end{pmatrix}^{\top}$, $f:\Sigma^*\rightarrow \R$, $g:\Sigma^*\rightarrow \R$ with Hankel matrices $\H$ and $\mat{G}$, respectively. We have:
        \begin{align*}
            \norm{f-g}_{\ell^2} &=\left(\sum_{n=0}^{\infty}|f_n-g_n|^2 \right)^{1/2}\\
            &=\norm{(\H-\mat{G}) \mat{e}_0}_{\ell^2}\\
            &\leq \sup_{\norm{\mat{x}}_{\ell^2}=1}\norm{(\H-\mat{G})\mat{x}}_{\ell^2}\\
            &=\norm{\H-\mat{G}}=\sigma_k 
        \end{align*}
        where the second equation follows by definition and by observing that matrix difference is computed entry-wise. 
    \end{proof}

\section{Algorithm}\label{sec:algorithm}

    We now use the results obtained in the previous sections to define Algorithm~\ref{alg:approx}, that we call \texttt{AAKapproximation}.
    
    The algorithm takes as input a target number of states $k<n$, a minimal irredundant WFA $A$ $n$ states and in SVA form, and its Gramian $\mat{P}$. We assume $\balpha_2 \neq 0$. If $\balpha_2 = 0$, it is enough to substitute the Steps $4,5,6$ with the analogues from Equation~\ref{eq:balphazero}. As mentioned in Section~\ref{assumptions}, the constraints on the WFA $A$ to be minimal and in SVA form are not essential. In fact a WFA with $n$ states can be minimized in time $O(n^3)$~\citep{rationalseries}, and the SVA computed in $O(n^3)$~\citep{Balle19}. The algorithm applies the results of Theorem~\ref{Theorem:maintrm} in order to derive the parameters of the optimal WFA. The output of the algorithm is the WFA $\widehat{A}_k$ corresponding to the unique optimal spectral-norm approximation of $A$.

    \begin{algorithm}[t]
    \caption{\texttt{AAKapproximation}}\label{alg:approx}
        \SetAlgoVlined
        \DontPrintSemicolon
        \SetKwInOut{Input}{input}
        \SetKwInOut{Output}{output}
        \Input{A minimal WFA $A$, with $\balpha_2\neq 0$, $n$ states and in SVA form, \newline its Gramian $\mat{P}$, a target number of states $k<n$}
        \Output{A WFA $\widehat{A}_k$ with $k$ states}
        Let $\balpha_1,\balpha_2,\bbeta_1,\bbeta_2,\A_{11},\A_{12},\A_{22},\bm{\Sigma}$ be the blocks defined in Eq.~\ref{eq:setup}\;
        Let $(\balpha_2^{\top})^+= \frac{\balpha_2}{\balpha_2^{\top}\balpha_2}$, $(\bbeta_2^{\top})^+= \frac{\bbeta_2}{\bbeta_2^{\top}\bbeta_2}$\;
        Let $\mat{R}=\sigma_k^2\mat{1}-\bm{\Sigma}^2$\;
        Let $\widehat{\A}=(\A_{11}^{\top}- \A_{21}^{\top}(\bbeta_2^{\top})^{+}\bbeta_1^{\top})^{-1}$\;
        Let $\widehat{\balpha} = \widehat{\A}^{\top}\mat{R}\A_{12}(\balpha_2^{\top})^{+}$\;
        Let $\widehat{\bbeta} = - \widehat{\A}\A_{21}^{\top}(\bbeta_2^{\top})^{+}$\;
        Let $\widehat{A}=\langle \widehat{\balpha}, \widehat{\A}, \widehat{\bbeta} \rangle$\;
        Let $\widehat{A}_k \leftarrow$ \texttt{BlockDiagonalize}($\widehat{A}$)\;
        \Return $\widehat{A}_k$ 
    \end{algorithm}
    
    \paragraph*{Block Diagonalization}\label{blockdiag}
    The algorithm involves a call to Algorithm~\ref{alg:blockd}, \texttt{BlockDiagonalize}. This algorithm corresponds to the steps
    necessary to derive the WFA $\widehat{A}_k$ associated to the rational function $g$. One way to solve the problem is to compute the Jordan form of the matrix. Unfortunately, this problem is ill-conditioned, so it is not suitable for our algorithmic purposes. Following an idea of \citet{Glover}, we compute the Schur decomposition, \emph{i.e.} we find an orthogonal matrix $\mat{U}$ such that the matrix $\mat{U}^{\top}\widehat{\A}\mat{U}$ is upper triangular, with the eigenvalues of $\widehat{\A}$ on the diagonal. We obtain:
    \begin{equation}\label{eq:T}
        \mat{T}=\mat{U}^{\top}\widehat{\A}\mat{U}= \begin{pmatrix}
                      \widehat{\A}_{+}     & \widehat{\A}_{12}               \\
                      \mat{0}                  & \widehat{\A}_{-}     
                      \end{pmatrix}
    \end{equation}
    where the eigenvalues are arranged in increasing order of modulus, and the modulus of those in $\widehat{\A}_{+}$ (resp. $\widehat{\A}_{-}$) is smaller (resp. greater) than one. To transform this upper triangular matrix into a block-diagonal one, we use the following result.
    \begin{therm}[\cite{Roth}]\label{bart}
        Let $\mT$ be the matrix defined in Equation~\ref{eq:T}. The matrix $\mat{X}$ is a solution of the equation $\widehat{\A}_{+}\mat{X}- \mat{X}\widehat{\A}_{-} +\widehat{\A}_{12}=\mat{0}$ if and only if the matrices
        \begin{equation}
            \mat{M}=\begin{pmatrix}
                      \mat{1}     & \mat{X}               \\
                      \mat{0}                  & \mat{1}     
                      \end{pmatrix}, \quad \text{and} \quad \mat{M}^{-1}=\begin{pmatrix}
                      \mat{1}     & -\mat{X}               \\
                      \mat{0}                  & \mat{1} \end{pmatrix}
        \end{equation}
        satisfy:
    \begin{equation}
        \mat{M}^{-1}\mat{T}\mat{M}=\begin{pmatrix}
                      \widehat{\A}_{+}     & \mat{0}               \\
                      \mat{0}                  & \widehat{\A}_{-}     
                      \end{pmatrix},
    \end{equation}
    where $\mT$ is the matrix defined in Equation~\ref{eq:T}.
    \end{therm}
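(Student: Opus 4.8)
The plan is to prove this equivalence by a direct block-matrix computation, since the claim is simply that conjugating the upper-triangular matrix $\mT$ by the unipotent matrix $\mat{M}$ removes its off-diagonal block precisely when $\mat{X}$ solves the stated Sylvester equation. No deep machinery is needed; the whole statement is an explicit algebraic identity relating $\mat{M}^{-1}\mT\mat{M}$ to the equation $\widehat{\A}_{+}\mat{X}-\mat{X}\widehat{\A}_{-}+\widehat{\A}_{12}=\mat{0}$.

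First I would check that the proposed $\mat{M}^{-1}$ really is the inverse of $\mat{M}$: multiplying the two unipotent upper-triangular blocks, the $\pm\mat{X}$ contributions cancel and one recovers the identity, so the notation is justified. Then I would carry out the conjugation in two stages. Computing $\mT\mat{M}$ first, the diagonal blocks $\widehat{\A}_{+}$, $\widehat{\A}_{-}$ and the zero lower-left block are preserved, while the upper-right block becomes $\widehat{\A}_{+}\mat{X}+\widehat{\A}_{12}$. Multiplying on the left by $\mat{M}^{-1}$ then leaves the diagonal and lower-left blocks unchanged and turns the upper-right block into
\begin{equation*}
    \widehat{\A}_{+}\mat{X}+\widehat{\A}_{12}-\mat{X}\widehat{\A}_{-}.
\end{equation*}

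The conclusion follows at once: the conjugated matrix $\mat{M}^{-1}\mT\mat{M}$ is block-diagonal if and only if this single off-diagonal block vanishes, i.e.\ if and only if $\widehat{\A}_{+}\mat{X}-\mat{X}\widehat{\A}_{-}+\widehat{\A}_{12}=\mat{0}$, which is exactly the equation in the statement. Since block-diagonality is controlled by this one algebraic identity, both directions of the ``if and only if'' are obtained simultaneously from the same computation, with no case analysis required.

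I do not expect a genuine obstacle here, as the argument is a routine verification. The only point demanding care is the non-commutativity of matrix multiplication: one must keep $\mat{X}$ on the \emph{right} of $\widehat{\A}_{+}$ and on the \emph{left} of $\widehat{\A}_{-}$, since it is precisely this ordering that yields the Sylvester structure $\widehat{\A}_{+}\mat{X}-\mat{X}\widehat{\A}_{-}$ rather than a commutator. It is also worth noting, though not strictly part of the stated equivalence, that the disjointness of the spectra of $\widehat{\A}_{+}$ and $\widehat{\A}_{-}$ guaranteed by the Schur ordering ensures that a (unique) solution $\mat{X}$ to this Sylvester equation exists, which is what makes the block-diagonalization step of the algorithm well-posed.
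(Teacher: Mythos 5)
Your proof is correct and is exactly the intended argument: the paper gives no proof of this statement (it cites it as Roth's result), and the statement as phrased reduces precisely to the block-multiplication identity you verify, namely that the upper-right block of $\mat{M}^{-1}\mT\mat{M}$ equals $\widehat{\A}_{+}\mat{X}-\mat{X}\widehat{\A}_{-}+\widehat{\A}_{12}$ while the other blocks are untouched. Your closing remark that the Schur ordering makes the spectra of $\widehat{\A}_{+}$ and $\widehat{\A}_{-}$ disjoint, guaranteeing a unique solution $\mat{X}$, is a correct and worthwhile addition concerning the well-posedness of the block-diagonalization step, though it is not needed for the stated equivalence.
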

    Setting $\boldsymbol{\Gamma}=\begin{pmatrix}\mat{1}_k & \mat{0} \end{pmatrix}$ we can now derive the rational component of the WFA:
    \begin{align}
        &\widehat{\A}_+= \boldsymbol{\Gamma}\mat{M}^{-1}\mat{U}^{\top}\widehat{\A}\mat{U}\mat{M}\boldsymbol{\Gamma}^{\top}\\
        &\widehat{\balpha}_+= \boldsymbol{\Gamma} \mat{M}^{\top}\mat{U}^{\top}\widehat{\balpha}\\    
        &\widehat{\bbeta}_+= \boldsymbol{\Gamma}\mat{M}^{-1}\mat{U}^{\top}\widehat{\bbeta}.
    \end{align}
    The algorithm \texttt{BlockDiagonalize} corresponds to the implementation of this procedure, and Step $2$ can be performed using the Bartels-Stewart algorithm~\citep{BartelStew}.

    \begin{algorithm}[t]
    \caption{\texttt{BlockDiagonalize}}\label{alg:blockd}
        \SetAlgoVlined
        \DontPrintSemicolon
        \SetKwInOut{Input}{input}
        \SetKwInOut{Output}{output}
        \Input{A WFA $\widehat{A}$}
        \Output{A WFA $\widehat{A}_k$ wit $\rho<1$}
        \If{$\dim{\widehat{\A}}=k$}
        {\Return $\widehat{A}_k$}
        \Else
         {Compute the Schur decomposition of $\widehat{\A}=\mat{U}\mat{T}\mat{U}^{\top}$, where $|T_{11}|\leq |T_{22}| \leq \dots$\;
        Solve $\widehat{\A}_{11}\mat{X}- \mat{X}\widehat{\A}_{22}+\widehat{\A}_{12}=\mat{0}$ for $\mat{X}$\;
        Let $\mat{M}=\begin{pmatrix}
                      \mat{1}     & \mat{X}               \\
                      \mat{0}                  & \mat{1}     
                      \end{pmatrix}$ and $\mat{M}^{-1}=\begin{pmatrix}
                      \mat{1}     & -\mat{X}               \\
                      \mat{0}                  & \mat{1} \end{pmatrix}$\;
        Let $\boldsymbol{\Gamma}=\begin{pmatrix}\mat{1}_k & \mat{0} \end{pmatrix}$\;
        Let $\widehat{\A}_+= \boldsymbol{\Gamma}\mat{M}^{-1}\mat{U}^{\top}\widehat{\A}\mat{U}\mat{M}\boldsymbol{\Gamma}^{\top}$\;
        Let $\widehat{\balpha}_+= \boldsymbol{\Gamma} \mat{M}^{\top}\mat{U}^{\top}\widehat{\balpha}$\;
        Let $\widehat{\bbeta}_+= \boldsymbol{\Gamma}\mat{M}^{-1}\mat{U}^{\top}\widehat{\bbeta}$\;
        Let $\widehat{A}_k=\langle \widehat{\balpha}_+, \widehat{\A}_+, \widehat{\bbeta}_+ \rangle$\;
        \Return $\widehat{A}_k$}
    \end{algorithm}

\subsection{Computational Cost}
    
    The running time of \texttt{BlockDiagonalize} with input a WFA $\widehat{A}$ with $(n-r)$ states is thus in $O((n-r)^3)$, where $r$ is the multiplicity of the singular value considered. The running time of \texttt{AAKapproximation} for an input WFA $\widehat{A}$ with $n$ states is in $O((n-r)^3)$. In particular, it is possible to analyze the cost associated to each step of the algorithms~\citep{Computationalcost}:
    
    \begin{itemize}
        \item The product of two $n\times n$ matrices can be computed in time $O(n^3)$ using a standard iterative algorithm.
        \item The inversion of a $n\times n$ matrix can be computed in time $O(n^3)$ using Gauss-Jordan elimination.
        \item The computation of the Schur decomposition of a $n\times n$ matrix can be done with a two-step algorithm, where each step takes $O(n^3)$, using the Hessenberg form of the matrix. 
        \item The Bartels-Stewart algorithm applied to upper triangular matrices to find a matrix of size $m\times n$ takes $O(mn^2+nm^2)$.
    \end{itemize}

\section{Example} \label{sec:example}

    We consider the following weighted finite automaton with three states over a one-letter alphabet, represented in SVA form:
         \begin{equation*}
            \A= \begin{pmatrix}
                    0.579 & 0.461 & 0.046    \\
                    -0.461 & -0.192 & 0.225   \\
                    0.046 & -0.225 & -0.387
                \end{pmatrix} , \quad
            \balpha= \begin{pmatrix}
                1.650     \\
                -0.851 \\
                0.038
                \end{pmatrix}, \quad
            \bbeta= \begin{pmatrix}
                1.650     \\
                0.851 \\
                0.038
                \end{pmatrix},
        \end{equation*}
    
    The objective is to find the WFA with two states solving the approximate minimization problem optimally.

    We first note that $\A$ has spectral radius strictly smaller than $1$, having eigenvalues:
        \begin{equation}
            \lambda_{1,2}=0.0162324 \pm 0.0297233 i \quad\quad \lambda_3=0.0324648.
        \end{equation}
     Therefore, the assumptions listed Section \ref{assumptions} are satisfied, and we can apply Theorem \ref{Theorem:maintrm}. We compute the gramian matrices and obtain, according to the partition in Equation \ref{eq:setup}, the following matrix: 
        \begin{equation*}
            \mP= \mQ=\begin{pmatrix}
                    4.67 & 0 & 0    \\
                    0 & 1.79 & 0   \\
                    0 & 0 & 0.12
                \end{pmatrix},
        \end{equation*}
        so that $\sigma_2^2=0.12$ and:
        \begin{equation*}
            \bm{\Sigma}=\begin{pmatrix}4.67 & 0    \\
                    0 & 1.79
            \end{pmatrix}.
        \end{equation*}
         We then proceed by partitioning $\A$, $\balpha$ and $\bbeta$ and obtain:
        \begin{equation*}
            \A_{11}= \begin{pmatrix}
                    0.579 & 0.461    \\
                    -0.461 & -0.192
                \end{pmatrix}, \quad \A_{1,2}= \begin{pmatrix} 0.046    \\
                0.225   
                \end{pmatrix}, \quad \A_{2,1}^{\top}= \begin{pmatrix} 0.046    \\
                -0.225   
                \end{pmatrix}, \quad \A_{22}=-0.387.
        \end{equation*}
        \begin{equation*}
            \balpha_1= \begin{pmatrix}
                1.650     \\
                -0.851
                \end{pmatrix}, \quad \bbeta_1= \begin{pmatrix}
                1.650     \\
                0.851
                \end{pmatrix}, \quad \balpha_2=\bbeta_2=0.038.
        \end{equation*}

        Since $\balpha_2\neq0$, we can use Equation \ref{eq:icalpsol} to find the coefficients of the auxiliary WFA $\widehat{A}=\langle\widehat{\balpha},\widehat{\A},\widehat{\bbeta}\rangle$.
        
        We have:
        \begin{align*}
            &\begin{cases}
                \widehat{\bbeta} = - \widehat{\A}\A_{21}^{\top}(\bbeta_2^{\top})^{+} \\
                \widehat{\balpha} = \widehat{\A}^{\top}\mat{R}\A_{12}(\balpha_2^{\top})^{+}\\
                \widehat{\A}(\A_{11}^{\top}- \A_{21}^{\top}(\bbeta_2^{\top})^{+}\bbeta_1^{\top})=\mat{1}
            \end{cases}\\
            &\begin{cases}
                \widehat{\bbeta} = - \widehat{\A}\begin{pmatrix} 0.046    \\
                -0.225   
                \end{pmatrix}(0.038)^{-1} \\
                \widehat{\balpha} = \widehat{\A}^{\top}\left(\begin{pmatrix}0.12 & 0    \\
                    0 & 0.12
            \end{pmatrix}-\begin{pmatrix}4.67 & 0    \\
                    0 & 1.79
            \end{pmatrix}^2\right)\begin{pmatrix} 0.046    \\
                0.225   
                \end{pmatrix}(0.038)^{-1}\\
                \widehat{\A}\left(\begin{pmatrix}
                    0.579 & 0.461    \\
                    -0.461 & -0.192
                \end{pmatrix}^{\top}- \begin{pmatrix} 0.046    \\
                -0.225   
                \end{pmatrix}(0.038)^{-1}\begin{pmatrix}
                1.650     \\
                0.851
                \end{pmatrix}^{\top}\right)=\mat{1}
            \end{cases}
        \end{align*}
        so we get:
        \begin{equation*}
            \widehat{\A}= \begin{pmatrix} 0.578 & 0.178 \\ -1.221 & -0.169 
            \end{pmatrix} , \quad
            \widehat{\balpha}=  \begin{pmatrix} 7.105    \\
                 -1.579
                \end{pmatrix}, \quad
            \widehat{\bbeta}= \begin{pmatrix} 0.353    \\
                 0.474
                \end{pmatrix}.
        \end{equation*}
        
        Now, we want to extract the rational component in order to find the optimal approximation. To do so, we block-diagonalize the transition matrix $\widehat{\A}$ and look at the modulus of its eigenvalues. We have:
        \begin{equation*}
            \lambda_{1,2}=0.204593 \pm 0.278322 i.
        \end{equation*}
        As we can see, both eigenvalues have modulus smaller than one. This means that the WFA $\widehat{A}$ is exactly the optimal approximation of size two that we are looking for, and there aren't any components that need to be discarded. Following the notation introduced in the previous section, we have: $\widehat{A}_k= \langle \widehat{\balpha}_+, \widehat{\A}_+, \widehat{\bbeta}_+ \rangle= \langle\widehat{\balpha},\widehat{\A},\widehat{\bbeta}\rangle$.

\section{Related Work}\label{relatedwork}

    The problem of minimizing automata has been an important subject of research since the 1950s. There is a remarkable algorithm due to Brzozowski \citep{Brzozowski62,Brzozowski64} that reduces a DFA to a minimal one. However, its worst-case running time is exponential in the number of states. Despite this shortcoming, this algorithm has seen a resurgence recently, mainly because it can be generalized to new models, such as weighted automata \citep{droste}. This line of algorithms is based on a new understanding of Brzozowski's algorithm from the point of view of duality \citep{Bonchi12,Bonchi12b,Bonchi14,Bezhanishvili12} and extend readily to other settings. In the context of quantitative systems, like weighted or probabilistic automata, it becomes meaningful to investigate the approximate minimization problem. The study of this problem and of its applications are fairly recent, and only a few works have been published on the subject. A problem analogous to approximate minimization is addressed by Kulesza, Jiang, and Singh for the spectral algorithm. The authors provide a bound on the loss of the learned low-rank model in terms of the singular values that are discarded during training \citep{kulesza2015}. In a previous work, the same group of authors connected spectral learning to the approximation problem of a small class of Hidden Markov models, bounding the error in terms of the total variation distance~\citep{kulesza14}. Still in the context of Hidden Markov models, Kotsalis and Shamma provide bounds for the model reduction problem using the spectral norm as a measure of the error \citep{HMMSHankel}. We remark that the framework of Hidden Markov models is encompassed by weighted automata \citep{denishmm}. Balle, Panangaden, and Precup are the first authors to formalize the approximate minimization problem for WFAs~\citep{Balle15,Balle19}. The technique presented in their paper relies on the construction (and truncation) of the singular value automaton, a canonical expression for WFAs arising from the singular value decomposition of the corresponding Hankel matrix. Their method can be viewed as a generalization to multi-letter alphabets of the balanced realization approach from control theory \citep{antoulas}. The authors conclude their analysis by providing bounds on the approximation error in the $\ell^2$ norm. The result is supported by strong theoretical guarantees and applies to a large class of WFAs. This method has later been extended to the setting of weighted tree automata in~\citet{ballerabusseau}. The main limitation of these approaches based on SVA truncation is that the approximation obtained is not optimal in any norm. We partially address this point in this work, where we obtain an algorithm for the optimal approximation in the spectral norm for the same class of WFAs considered by Balle, Panangaden, and Precup, but restricted to a one-letter alphabet. Part of this results where presented in \citep{AAK-WFA}. In \cite{AAK-RNN} we extend this results to the more general setting of black-box models trained for language modelling over one-letter alphabets. In \cite{AAK-Learnaut,mythesis} we analyze the problem of extending the method presented in this paper to the case of multi-letter alphabets.
    
    The control theory community has largely studied approximate minimization in the context of linear time-invariant systems~\citep{antoulas}. A parallel with these results can be drawn by noting that the impulse response of a discrete Single-Input-Single-Output SISO system can be parametrized as a WFA over a one-letter alphabet. \citet{Glover} presents a state-space solution for the case of continuous Multi-Input-Multi-Output MIMO systems. His method led to a widespread application of these results, thanks to its computational and theoretical simplicity. This stems from the structure of the continuous Lyapunov equations. For discrete systems, though, the quadratic nature of the Lyapunov equations does not allow for a simple closed form formula for the state space solution~\citep{discreteH}. Thus, most of the results for the discrete case work with a suboptimal version of the problem~\citep{balldiscrete, Al-Hussari, Ionescu}. A solution for the SISO case can be found using a polynomial approach, but it does not provide an explicit representation of the state space nor it generalizes to the MIMO setting. The first to actually extend Glover results is Gu, who provides an elegant solution for the MIMO discrete problem~\citep{gu}. Glover and Gu's solutions rely on building an
    \emph{all-pass system}, equivalent to the WFA $E$ in our case. Part of our contribution is the adaptation of some of the control theory tools to WFAs.

\section{Extensions and Future Work}\label{future_work}
    
    In this section we examine possible extensions of our method by relaxing some of the hypothesis.

\subsection{Removing the Finite-Rank Assumption}

    The proof of Theorem~\ref{theorem:aakop} is constructive for any compact Hankel operator. In the setting of this paper, compactness is guaranteed, as the operator corresponding to an irredundant WFA has finite rank and is bounded. While boundness is necessary for compactness, the finite-rank hypothesis is not. Therefore, an interesting extension of this work is to investigate other classes of models by relaxing the finite-rank (or finite state) assumption. An example of models corresponding to infinite-rank Hankel matrices are Recurrent Neural Networks (RNNs)~\citep{Schmidhuber}. Recently, particular attention has been given to the problem of extracting, from an RNN, a weighted finite automaton~\citep{Ayache2018,Rabusseau19,WeissWFA19,Takamasa,eyraud2020,probmod,zhangaaai}. In this sense, the knowledge distillation task~\citep{hinton2015distilling} is very similar to an approximate minimization problem, since WFAs are a less expensive alternative to RNNs, while still being expressive and suited for sequence modelling and prediction~\citep{denishmm,cortes}. In \citet{AAK-RNN}, we investigated the use of AAK theory on black-box models trained for language modelling on sequential data. In particular, we showed that compactness is automatically respected by black boxes for language modelling, and proposed an algorithm for the one-letter setting, based on AAK theory. This particular extension of the method presented in this paper constitutes a first fundamental step towards developing provable approximation algorithms for black box models.

\subsection{Removing the Spectral Radius Assumption}

    One could consider a WFA over a one-letter alphabet with $\rho(\A)\neq 1$, \emph{i.e.} not necessarily irredundant. In this case, the method proposed in the previous sections can be extended and the quality of the approximation can be estimated, but the result is not optimal in the spectral norm. 
    Once again, we draw inspiration from the control theory literature, where some theoretical work has been done to study an analogous approach for continuous time systems and their approximation error~\citep{Glover}.
    
    The key idea is to block-diagonalize $\A$ like we did in Section~\ref{sec:rational_comp}. This way, we obtain two components, $\A_+$ and $\A_-$, with the property that $\rho<1$ and $\rho>1$, respectively. We tackle each component separately. The case of $A_+=\langle \balpha_+, \A_+, \bbeta_+ \rangle$, the component having $\rho(\A)<1$, can be dealt with in the way presented in the previous sections. This means that we can find an optimal spectral-norm approximation of the desired size for $A_+$. Then, we can consider the second component, $A_-=\langle \balpha_-, \A_-, \bbeta_- \rangle$. In this case, we apply the transformation
    \begin{equation*}
        z^{j-1} \mapsto z^{-j} \quad\text{for}\,\, j\geq 1
    \end{equation*}
    to the symbol $\phi'(z)$ associated to $A_-$. Then, the function
    \begin{equation*}
        \phi'(z^{-1})= \sum_{k\geq 0} \balpha_-^{\top}\A_-^k z^k \bbeta_- = \balpha_-^{\top}(\mat{1}-z\A_-)^{-1}\bbeta_-
    \end{equation*}
    is well defined, as the series converges for $z$ with small enough modulus. The use of this transformation allows us to obtain a function having poles only inside the unit disc, and to apply the method presented in this chapter. We remark that in this case an important choice to make is the size of the target approximation of $A_-$, as it can influence the quality of the result. Analyzing the effects of this parameter on the approximation error is an interesting direction for future work, both on the theoretical and experimental side.

\subsection{Removing the One-Letter Assumption}

    The most pressing direction for future work is undoubtedly to extend our results to a multi-letter setting. The work of Adamyan, Arov and Krein provides us with a powerful theory connecting sequences to the study of complex functions. Unfortunately, this approach cannot be directly generalized to the multi-letter case, when $\Sigma^*$ is a noncommutative monoid, as it requires to generalize standard harmonic analysis results to the non-abelian case. A recent line of work in multivariable operator theory has been centered around extending results of standard operator theory to the case of noncommutative operators defined on Fock spaces \cite{frazho,bunce,popescu_arias,popescu1,popescu_old,popescu3,popescuthesis,popescu_oponfock,popescu,popescu_free,popescu_domain,popescu2,ball_bolotnikov_2021,NCrational}. In particular, a noncommutative definition of Hankel operator, and a noncommutative version of the AAK theorem are presented in a recent work of Popescu \cite{popescu}, but its proof is not constructive. Therefore, solving the approximate minimization problem for multi-letter alphabets using AAK theory comes with two distinct challenges:
    \begin{itemize}
        \item \emph{Finding a noncommutative Hankel operator:} given a WFA and its Hankel matrix, we need to find a way to reformulate the approximation problem using multivariable operators. In particular, we need to find a noncommutative analogue of the Hardy space and of the symbol.
        \item \emph{Making AAK constructive:} the proof of the noncommutative version of the AAK theorem does not provide us with an expression for the optimal approximation. An interesting direction would be to explore ways to extend the proof to a constructive one.
    \end{itemize}
    
    In \citet{AAK-Learnaut}, we proposed a framework to associate a noncommutative Hankel operator (defined on an noncommutative version of the Hardy space) and a noncommutative rational function to the Hankel matrix computed by a model on sequential data, solving the first point listed above. In the one-letter case, obtaining the framework allowed us to reformulate the approximation problem in terms of functional analysis, and to solve it using the constructive proof of AAK theorem. In \citet{mythesis}, we tried to address the question of whether or not the proof of the noncommutative AAK theorem can be made constructive. While we did not manage to provide a definitive answer, we laid out possible approaches that can be used to tackle the problem of making the proof of the noncommutative version of AAK theorem constructive.

\section{Conclusion}\label{conclusion}

    In this paper we applied the AAK theory for Hankel operators and complex functions to the framework of WFAs in order to construct the optimal approximation to an automaton given a bound on the size. We propose an algorithm to find the parameters of the best WFA approximation in the spectral norm, and derive bounds on the error. Our method applies to real irredundant WFAs defined over a one-letter alphabet. These alphabets have proven to be of independent interest when dealing with automata, as in this case the classes of regular and context-free languages collapse~\citep{Pighizzini}. 
    
    We think the spectral norm has desirable characteristics, making it a solid candidate for the approximate minimization task. For example, it can be minimized in polynomial time and a global minimum for the error can be computed accurately. Moreover, the fact that this norm is independent on the specific architecture or model considered, facilitate future applications of this method, as it can be used to compare different classes of models. Nonetheless, a limitation of this work is that we do not have a clear picture of how effective it is to use the spectral norm to evaluate the approximation of WFAs and black boxes. Concretely, we do not know how the spectral norm performs with respect to behavioral metrics, or other metrics coming from natural language processing~(e.g., word error rate and normalized discounted cumulative gain). To some extent, this problem is a collateral effect of the size of the alphabet: the comparison between spectral norm and other kind of norms is possible only in the multi-letter setting. Obtaining algorithms for the multi-letter case will thus open the possibility of evaluating the quality of the spectral norm. 

    While the one-letter setting is certainly restricted, we believe that this work constitutes a first fundamental step in the direction of optimal approximation.
    Furthermore, the use of AAK techniques has proven to be very fruitful in related areas like control theory; we think that automata theory can also benefit from it. The use of such methods can help deepen the understanding of the behaviour of rational functions. This paper highlights and strengthens the interesting connections between functional analysis, automata theory and control theory, unifying tools from different domains in one formalism.

\section*{Acknowledgments}
    This research has been supported by NSERC Canada (C. Lacroce, P. Panangaden) and Canada CIFAR AI chairs program (G. Rabusseau). The authors would like to thank Tianyu Li, Harsh Satija and Alessandro Sordoni for feedback on earlier drafts of this work, Gheorghe Comanici and Robert Robere for a detailed review, Florence Clerc for help with the submission, and Maxime Wabartha for fruitful discussions and comments on proofs.
    
\newpage

\bibliography{bibliography}
\newpage
\appendix

\section{Technical Results}\label{apd:first}

\paragraph*{Singular functions}
    We show how to compute the functions corresponding to  a $\sigma_k$-Schmidt pair, and that their quotient is indeed unimodular.
   \begin{therm}\label{theorem:singval}
        Let $\sigma_k$ be a singular number of the Hankel operator $H$. The singular functions associated with the $\sigma_k$-Schmidt pair $\{\boldsymbol{\xi}_k, \boldsymbol{\eta}_k\}$ of $H$ are:
        \begin{align}
            \xi^{+}_k(z)&=\sigma_k^{-1/2}\bbeta^{\top}(\mat{1} - z\A)^{-1}\mat{e}_k\\
            \eta^{-}_k(z)&=\sigma_k^{-1/2}\balpha^{\top}(z\mat{1}-\A^{\top})^{-1}\mat{e}_k.
        \end{align}
        If $\psi$ is the best approximation to the symbol, then $\sigma_k^{-1}e$ has modulus $1$ almost everywhere on the unit circle (\emph{i.e.} it is unimodular).
    \end{therm}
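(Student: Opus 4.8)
The plan is to realise the abstract $\sigma_k$-Schmidt pair concretely through the SVA and then transport it to the Hardy spaces via the embedding of Equation~\ref{eq:hardynotation}. Writing the SVD of the Hankel matrix as $\H=\mat{U}\mat{D}\mat{V}^{\top}$, a $\sigma_k$-Schmidt pair is given by the $k$-th right and left singular vectors, $\boldsymbol{\xi}_k=\mat{V}\mat{e}_k$ and $\boldsymbol{\eta}_k=\mat{U}\mat{e}_k$, since these satisfy $\H\boldsymbol{\xi}_k=\sigma_k\boldsymbol{\eta}_k$ and $\H^{\top}\boldsymbol{\eta}_k=\sigma_k\boldsymbol{\xi}_k$ by definition of the SVD. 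Because $A$ is in SVA form we have $\mat{F}_A=\mat{U}\mat{D}^{1/2}$ and $\mat{B}_A=\mat{V}\mat{D}^{1/2}$, so the $j$-th coordinates of the singular vectors are $(\boldsymbol{\xi}_k)_j=\sigma_k^{-1/2}(\A^{j}\bbeta)_k$ and $(\boldsymbol{\eta}_k)_j=\sigma_k^{-1/2}(\balpha^{\top}\A^{j})_k$. Feeding these into the maps $\boldsymbol{\mu}\mapsto\mu^{+}$ and $\boldsymbol{\mu}\mapsto\mu^{-}$ and summing the matrix geometric series $\sum_j z^{j}\A^{j}=(\mat{1}-z\A)^{-1}$ and $\sum_j z^{-j-1}\A^{j}=(z\mat{1}-\A)^{-1}$ produces the two resolvent expressions; the one-letter symmetry of $\H$, equivalently the SVA relations $\balpha=\mat{S}\bbeta$ and $\A^{\top}=\mat{S}\A\mat{S}$ with $\mat{S}=\operatorname{diag}(\operatorname{sgn}\lambda_i)$, is then used to rewrite them in the symmetric form stated in the theorem (up to the overall sign that is free in the choice of Schmidt pair).

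For the second claim I would reduce unimodularity of $\sigma_k^{-1}e$ to the pointwise identity $|\xi^{+}_k(z)|=|\eta^{-}_k(z)|$ for almost every $z\in\mathbb{T}$. If $\psi$ is the best approximation, Corollary~\ref{corollary:unimodular} gives $(\phi-\psi)\xi^{+}_k=\sigma_k\eta^{-}_k$, hence $\sigma_k^{-1}e=\eta^{-}_k/\xi^{+}_k$; as $\xi^{+}_k$ is a nonzero rational function it vanishes only on a finite set, so equality of the moduli forces $|\sigma_k^{-1}e|=1$ almost everywhere. To establish $|\xi^{+}_k|=|\eta^{-}_k|$ on $\mathbb{T}$ I would compute each squared modulus from its resolvent form and eliminate $\bbeta\bbeta^{\top}$ and $\balpha\balpha^{\top}$ using the Lyapunov (Gramian) equations $\mat{D}-\A\mat{D}\A^{\top}=\bbeta\bbeta^{\top}$ and $\mat{D}-\A^{\top}\mat{D}\A=\balpha\balpha^{\top}$ valid in SVA form. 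Writing $\mat{L}=(\mat{1}-z\A)^{-1}$ and exploiting the resolvent identity $\A\mat{L}=z^{-1}(\mat{L}-\mat{1})$, the substitution telescopes and, on $|z|=1$, leaves $|\xi^{+}_k(z)|^2=2\operatorname{Re}\!\big(\mat{e}_k^{\top}(\mat{1}-z\A)^{-1}\mat{e}_k\big)-1$; the symmetric computation for $\eta^{-}_k$ with the second Lyapunov equation gives $|\eta^{-}_k(z)|^2=2\operatorname{Re}\!\big(z\,\mat{e}_k^{\top}(z\mat{1}-\A^{\top})^{-1}\mat{e}_k\big)-1$. On $\mathbb{T}$ the relation $z(z\mat{1}-\A^{\top})^{-1}=(\mat{1}-\bar z\A^{\top})^{-1}$, combined with the fact that $\A$ is real (so conjugation sends $(\mat{1}-z\A)^{-1}$ to $(\mat{1}-\bar z\A)^{-1}$) and that these diagonal matrix elements are invariant under transposition, shows that the two real parts coincide and the moduli agree.

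The bookkeeping in the first paragraph is routine; the delicate step is the modulus computation, and I expect the main obstacle to be orchestrating the telescoping cancellation while keeping the transposes and conjugations consistent on the circle — one must use simultaneously that $\A$ is real, that $\bar z=z^{-1}$ on $\mathbb{T}$, and that $\mat{D}$ is diagonal with $\mat{D}\mat{e}_k=\sigma_k\mat{e}_k$, in order to collapse both squared moduli to the same real part. A secondary point to check is that $\xi^{+}_k\not\equiv 0$, which follows from minimality of $A$ and guarantees that the exceptional set where $\xi^{+}_k$ vanishes is null, so that the conclusion indeed holds almost everywhere rather than only off the zeros of $\xi^{+}_k$.
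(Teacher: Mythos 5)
Your proposal's first half is essentially the paper's own argument. The paper also concretizes the Schmidt pair through the SVA/FB factorization --- it reduces $\H^{\top}\H\boldsymbol{\xi}_k=\sigma_k^2\boldsymbol{\xi}_k$ to the eigenproblem of $\mQ\mP$ and obtains $\boldsymbol{\xi}_k$ proportional to $\mat{B}\mat{e}_k$, which is exactly your $\mat{V}\mat{e}_k=\mat{B}\mat{D}^{-1/2}\mat{e}_k$ --- and then sums the geometric series through Equation~\ref{eq:hardynotation} to get the two resolvents. The transpose/sign ambiguities you flag (the formulas agree with the statement only up to $\A\leftrightarrow\A^{\top}$ and an overall sign) are present, and silently glossed over, in the paper's own derivation, so your handling of them is if anything more careful.

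Where you genuinely diverge is the unimodularity claim, and here there is a subtlety worth naming. The paper's argument is one line: writing $\sigma_k^{-1}e=\eta^{-}_k/\xi^{+}_k$ from Corollary~\ref{corollary:unimodular}, it invokes $\balpha_i=\operatorname{sgn}(\lambda_i)\bbeta_i$ and $\A^{\top}=\mat{S}\A\mat{S}$ (with $\mat{S}$ the diagonal sign matrix) to conclude that numerator and denominator have equal modulus on $\mathbb{T}$: indeed these relations give $\eta^{-}_k(z)=\pm\sigma_k^{-1/2}\bbeta^{\top}(z\mat{1}-\A)^{-1}\mat{e}_k$, and on $|z|=1$ one has $(z\mat{1}-\A)^{-1}=\bar z(\mat{1}-\bar z\A)^{-1}$, so that $|\eta^{-}_k(z)|=\bigl|\overline{\xi^{+}_k(z)}\bigr|=|\xi^{+}_k(z)|$. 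Your Lyapunov/telescoping route is the classical all-pass computation from control theory, and the identity $|\xi^{+}_k(z)|^2=2\operatorname{Re}\bigl(\mat{e}_k^{\top}(\mat{1}-z\A)^{-1}\mat{e}_k\bigr)-1$ does hold: the substitution $\bbeta\bbeta^{\top}=\mat{D}-\A\mat{D}\A^{\top}$ telescopes exactly as you say, \emph{provided} you write $\xi^{+}_k(z)=\sigma_k^{-1/2}\mat{e}_k^{\top}(\mat{1}-z\A)^{-1}\bbeta$, so that the resolvents sandwich $\bbeta\bbeta^{\top}$ with $\A$ on the left and $\A^{\top}$ on the right, matching the Lyapunov equation. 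But the ``symmetric computation'' for $\eta^{-}_k$ is not symmetric: in every pairing of $\eta^{-}_k$ with its conjugate, the resolvents again sandwich $\balpha\balpha^{\top}$ with $\A$ on the left and $\A^{\top}$ on the right, whereas the second Lyapunov equation $\mat{D}-\A^{\top}\mat{D}\A=\balpha\balpha^{\top}$ has the transposes the other way around, so the cross terms do not cancel and the telescoping fails as stated. The repair is to first apply the sign symmetry to rewrite $\eta^{-}_k$ in terms of $\bbeta$ and $\A$ --- at which point the entire computation becomes unnecessary, because that rewriting already yields $|\eta^{-}_k|=|\xi^{+}_k|$ on $\mathbb{T}$, i.e.\ the paper's one-line argument. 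In short: your plan is sound and the stated identities are true, but the sign symmetry is not bookkeeping to be ``kept consistent''; it is the crux of the second claim, and once invoked it short-circuits the Lyapunov telescoping you propose.
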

    
    \begin{proof}
    
        Let $\mat{F}$ and $\mat{B}$ be the forward and backward matrices, respectively, with $\H=\mat{F}\mat{B}^{\top}$, $\mP=\mat{F}^\top \mat{F}, \mQ=\mat{B}^\top \mat{B}$. We consider the $\sigma_k$-Schmidt pair $\{\boldsymbol{\xi}_k, \boldsymbol{\eta}_k\}$. By definition, $\H^{\top}\H\boldsymbol{\xi}_k=\sigma_k^2\boldsymbol{\xi}_k$. By rewriting in terms of the FB factorization, we obtain:
        \begin{align}
            &\H^{\top}\H\boldsymbol{\xi}_k=\sigma_k^2\boldsymbol{\xi}_k\\
            &\mat{B}\mat{F}^{\top}\mat{F}\mat{B}^{\top}\boldsymbol{\xi}_k=\sigma_k^2\boldsymbol{\xi}_k\\
            &\mat{B}\mP\mat{B}^{\top}\boldsymbol{\xi}_k=\sigma_k^2\boldsymbol{\xi}_k\\
            &\mat{B}\mP\mat{e}_k=\sigma_k^2\boldsymbol{\xi}_k
        \end{align} 
        where in the last step we set $\mat{e}_k= \mat{B}^ \top \boldsymbol{\xi}_k$, to reduce the SVD problem of $\H$ to the one of $\mQ\mP$. Note that, since $\mP$ and $\mQ$ are diagonal, $\mat{e}_k$ is the $k$-th coordinate vector $(0,\dots,0,1,0,\dots,0)^{\top}$. Since $\mat{e}_k$ is an eigenvector of $\mQ\mP$ for $\sigma_k^2$, we get:
        \begin{align}
            &\mat{B}\mQ^{-1}\mQ\mP\mat{e}_k=\sigma_k^2\boldsymbol{\xi}_k\\
            &\mat{B} \mQ^{-1} \mat{e}_k= \boldsymbol{\xi}_k.
        \end{align}
        Moreover, $\H$ is symmetric, so we have that the singular vectors $\boldsymbol{\eta}_k$ and $\boldsymbol{\xi}_k$ have the same coordinates up to the sign of the corresponding eigenvalues.
        We obtain:
        \begin{align}
            \xi^+_k(z)&=\sum_{j=0}^{\infty} \sigma_k^{-1/2}\bbeta^{\top}\A^j \mat{e}_k z^{j}= \sigma_k^{-1/2}\bbeta^{\top}(\mat{1} - z\A)^{-1}\boldsymbol{e}_k\\
            \eta^{-}_k(z)&=\sum_{j=0}^{\infty}\sigma_k^{-1/2}\balpha^{\top}\A^{j\top} \mat{e}_k z^{-j-1}=\sigma_k^{-1/2}\balpha^{\top}(z\mat{1}-\A^\top)^{-1}\boldsymbol{e}_k
        \end{align}
        where the singular functions have been computed following Equation~\ref{eq:hardynotation}. If $r$ is the multiplicity of $\sigma_k$, from Corollary~\ref{corollary:unimodular} we get the following fundamental equation:
        \begin{equation*}
            (\phi-\psi)\bbeta^{\top}(\mat{1} - z\A)^{-1}\mat{V}= \sigma_k \balpha^{\top}(z\mat{1}-\A^\top)^{-1}\mat{V}
        \end{equation*}
        where $\mat{V}=\begin{pmatrix}
            \mat{0} & \mat{1}_r
          \end{pmatrix}^\top$ is a $n\times r$ matrix.
        Consequently, we obtain the function:
        \begin{equation*}
            \sigma_k^{-1}e= \frac{\balpha^{\top}(z\mat{1}-\A^\top)^{-1}\mat{V}}{\bbeta^{\top}(\mat{1} - z\A)^{-1}\mat{V}}
        \end{equation*}
        which is unimodular, since $\balpha_i=\operatorname{sgn}(\lambda_i)\bbeta_i$, and $\A=\operatorname{sgn}(\lambda_i)\A^\top$.
    \end{proof}  
    
     \paragraph*{Proof of Theorem~\ref{Theorem:allpass}.} In order to prove Theorem~\ref{Theorem:allpass} we need an auxiliary lemma \cite[Lemma 6.1]{discreteH}. These are the analogous of some control theory results, rephrased in terms of WFAs. The original theorem and lemma, together with the corresponding proofs, can be found in~\cite{discreteH}. Hence, we only provide a sketch of the proofs.
    
    \begin{lemma}[\cite{discreteH}]\label{lemma:T}
        Let $E=\langle \balpha_e, \A_e, \bbeta_e \rangle$ be a minimal WFA. Let $e(z)= \balpha_e^{\top}(z\mat{1}-\A_e)^{-1} \bbeta_e - C$, if $\sigma_k^{-1}e(z)$ is unimodular, then there exist a unique invertible symmetric matrix $\mT$ satisfying:
        \begin{itemize}
            \item[(a)] $\A_e^{\top}\mT\bbeta_e=\balpha_e C$
            \item[(b)] $\sigma_k^2\balpha_e^{\top}\mT^{-1}\A_e^{\top}=C\bbeta_e^{\top}$
            \item[(c)] $\A_e^{\top}\mT\A_e-C^{-1}\A_e^{\top}\mT\bbeta_e\balpha_e^{\top}=\mT$
        \end{itemize}
    \end{lemma}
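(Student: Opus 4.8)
The plan is to convert the analytic unimodularity hypothesis into a purely algebraic identity and then to read off the three relations~(a)--(c) as the residue-matching (Kalman--Yakubovich--Popov) conditions that this identity forces on a minimal realization. Since $E$ has real weights and $z^{-1}=\bar z$ on $\mathbb{T}$, the statement that $\sigma_k^{-1}e$ has modulus $1$ almost everywhere on the unit circle is equivalent to the rational-function identity $e(z)\,e(1/z)=\sigma_k^2$, holding identically in $z$. I would take this as the starting point, substituting $e(z)=\balpha_e^{\top}(z\mat{1}-\A_e)^{-1}\bbeta_e-C$ together with its reflection $e(1/z)=z\,\balpha_e^{\top}(\mat{1}-z\A_e)^{-1}\bbeta_e-C$, where the latter uses $(z^{-1}\mat{1}-\A_e)^{-1}=z(\mat{1}-z\A_e)^{-1}$.

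First I would clear denominators and compare the two sides as rational functions. The product $e(z)e(1/z)$ has poles exactly at the eigenvalues of $\A_e$ and at their reciprocals, so for it to collapse to the constant $\sigma_k^2$ every residue must cancel. The natural device to encode this cancellation is a matrix $\mT$ intertwining the realization $(\A_e,\bbeta_e,\balpha_e)$ with its reflection: one posits a symmetric $\mT$ and checks that residue-matching at the poles of $(z\mat{1}-\A_e)^{-1}$ is captured by the Stein-type relation hidden in~(c). Indeed, substituting the coupling~(a), namely $C^{-1}\A_e^{\top}\mT\bbeta_e=\balpha_e$, into~(c) reduces it to the clean Stein equation $\mT-\A_e^{\top}\mT\A_e=-\balpha_e\balpha_e^{\top}$, which is precisely what a residue computation produces and which identifies $\mT$ with a signed analogue of the observability Gramian $\mQ_e$ of Theorem~\ref{Theorem:allpass}; equation~(a) is then the matching of the cross term coupling the pole part to the feedthrough $C$, and~(b) is its dual, obtained by the same argument applied after inverting $\mT$ (equivalently, by exploiting the symmetry of the identity under $z\mapsto 1/z$ together with the duality $\balpha_e\leftrightarrow\bbeta_e$, $\A_e\leftrightarrow\A_e^{\top}$).

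For the structural claims I would invoke minimality of $E$. Reachability and observability make the realization unique up to similarity, so the intertwining matrix between a minimal realization and its reflected transfer function is uniquely determined, which pins down $\mT$ and rules out the degeneracy one would otherwise face because the allpass eigenvalues occur in reciprocal pairs (so the Stein equation alone need not have a unique solution). Symmetry follows from invariance of the defining identity under $z\mapsto 1/z$: $\mT$ and $\mT^{\top}$ solve the same system~(a)--(c), so uniqueness forces $\mT=\mT^{\top}$. Invertibility follows because a nontrivial kernel of $\mT$ would expose an unreachable or unobservable direction through~(a)--(b), contradicting minimality. As the result is quoted from~\cite[Lemma 6.1]{discreteH}, I would present this as a sketch rather than carry out the full residue bookkeeping.

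The hard part will be the residue-matching step itself: performing the partial-fraction expansion of $e(z)e(1/z)$ and verifying, term by term, that the vanishing of every residue is captured by exactly the triple~(a)--(c) with a single symmetric invertible $\mT$. An equivalent but cleaner route is to build the associated symplectic (Hamiltonian) pencil from $\A_e,\bbeta_e,\balpha_e$ and $C$ and to identify $\mT$ with the Gram matrix of the relevant invariant subspace; making that identification rigorous, and in particular ensuring that the feedthrough $C$ enters consistently in both the primal coupling~(a) and the dual coupling~(b), is where the genuine work lies.
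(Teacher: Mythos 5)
Your starting point coincides with the paper's: for real weights, unimodularity is the rational identity $e(z)e(1/z)=\sigma_k^2$, and several of your structural observations are correct --- the reduction of (c) via (a) to the Stein equation $\mT-\A_e^{\top}\mT\A_e=-\balpha_e\balpha_e^{\top}$, the identification of $\mT$ with a signed version of $\mQ_e$ from Theorem~\ref{Theorem:allpass}, and the appeal to minimality for uniqueness. The genuine gap is in your existence mechanism for $\mT$. You posit $\mT$ as an intertwiner ``between the realization $(\A_e,\bbeta_e,\balpha_e)$ and its reflection,'' but those two realizations compute \emph{different} functions ($e(z)$ and $e(1/z)$ respectively), so minimality gives no similarity between them and the state-space isomorphism argument cannot be applied to that pair. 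The construction you fall back on instead --- partial-fraction expansion of $e(z)e(1/z)$ and term-by-term residue cancellation, or alternatively a symplectic-pencil argument --- is precisely the step you defer as ``the hard part,'' so the existence of a single invertible $\mT$ satisfying all of (a)--(c) simultaneously is never actually established in your sketch.

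The paper closes exactly this gap with one computational move you never invoke: rewrite unimodularity as $e^*(\bar z^{-1})=\sigma_k^2\,e(z)^{-1}$, and apply the matrix inversion lemma to $e(z)^{-1}$, which yields an explicit WFA-type realization of the inverse with transition matrix $\A_e+C^{-1}\bbeta_e\balpha_e^{\top}$; a second elementary manipulation realizes the reflection $e^*(\bar z^{-1})$ with transition matrix $\A_e^{-\top}$. Unimodularity says precisely that these two realizations compute the \emph{same} rational function, so minimality of $E$ yields a unique invertible change of basis $\mT$ relating them, and the similarity relations, once rearranged, are exactly the system (a)--(c). No residue bookkeeping is needed, and eigenvalue multiplicities are handled automatically. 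Two smaller points: your symmetry argument (``$\mT$ and $\mT^{\top}$ solve the same system, so uniqueness forces $\mT=\mT^{\top}$'') is not immediate, since the system (a)--(c) is not manifestly invariant under transposition --- the paper instead verifies symmetry by direct computation; and your invertibility argument, while plausible, is subsumed by the isomorphism argument, which produces $\mT$ invertible from the outset.
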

    \begin{proof}
    
        Since $\sigma_k^{-1}e(z)$ is unimodular, we have that:
        \begin{equation}
            e(z)e^*(\bar{z}^{-1})=\sigma_k^2\mat{1}    
        \end{equation}
        where we denote with $e^*$ the adjoint function. From the equation above, we obtain:
        \begin{align}
            e^*(\bar{z}^{-1})&=\sigma_k^2e^{-1}(z)=\sigma_k^2(-C+\balpha_e^{\top}(z\mat{1}-\A_e)^{-1}\bbeta_e)^{-1}\\
                             &=-\sigma_k^2C^{-1}-\sigma_k^2 C^{-1}\balpha_e^{\top}((z\mat{1}-(\A_e+C^{-1}\bbeta_e\balpha_e))^{-1}\bbeta_eC^{-1}\label{eq:inv}
        \end{align}
        where we used the matrix inversion lemma. On the other hand we have:
        \begin{align}
             e^*(\bar{z}^{-1})&=-C+\bbeta_e^{\top}(z^{-1}\mat{1}-\A_e^{\top})^{-1}\balpha_e\\
                             &= -C+\bbeta_e^{\top}(-\A_e^{-\top}(\mat{1}-z\A_e^{\top})+\A_e^{-\top})(\mat{1}-z\A_e^{\top})^{-1}\balpha_e\\                             &=-(C-\bbeta_e^{\top}\A_e^{-\top}\balpha_e)- \bbeta_e^{\top}\A_e^{-\top}(z\mat{1}-\A_e^{-\top})^{-1}\A_e^{-\top}\balpha_e \label{eq:conj}
        \end{align}
        where we used again the matrix inversion lemma before grouping the terms. If the quantities in Equation~\ref{eq:inv} and Equation~\ref{eq:conj} have to be equal, we need their constant term to be the same. Then, we want the $\mathcal{H}^{\infty}_-$-components to correspond, so we consider the corresponding Hankel matrices. It is easy to see that we can once again associate the coefficients of these complex functions to the parameters of a WFA. From the minimality of $E$ we obtain:
        \begin{equation}
            \begin{cases}
                \sigma_k^2C^{-1}\balpha_e^{\top} =\bbeta_e^{\top}\A_e^{-\top}\mT \\
                \A_e+C^{-1}\bbeta_e\balpha_e= \mT^{-1}\A_e^{-\top}\mT\\
                \bbeta_eC^{-1}=\mT^{-1}\A_e^{-\top}\balpha_e
            \end{cases}
        \end{equation}
        where $\mT$ is an invertible matrix~\citep{BalleCLQ14}. This system is equivalent to:
        \begin{equation}\label{eq:systemlemma}
            \begin{cases}
                \sigma_k^2\balpha_e^{\top}\mT^{-1}\A_e^{\top}=C\bbeta_e^{\top}\\
                \A_e^{\top}\mT\A_e-C^{-1}\A_e^{\top}\mT\bbeta_e\balpha_e^{\top}=\mT\\
                \A_e^{\top}\mT\bbeta_e=\balpha_e C
            \end{cases}
        \end{equation}
        To conclude the proof it remains to check that $\mT$ is symmetric, and this can be done by direct computations.
    \end{proof}
    
    \begin{proof}[\textbf{Proof of Theorem~\ref{Theorem:allpass}}]
        This proof follows easily from Lemma~\ref{lemma:T} by setting $\mP=-\sigma^2_k\mT^{-1}$ and $\mQ=-\mT$. We obtain point $(c)$ by direct multiplication. Then, we substitute the last equation in~\ref{eq:systemlemma} into the second one, and we obtain:
        \begin{equation}
            \A_e^{\top}\mT\A_e-\balpha_e\balpha_e^{\top}=\mT    
        \end{equation}
        which verifies point $(b)$ with $\mQ=-\mT$. Point $(a)$ can be obtained analogously combining the first and second equations in~\ref{eq:systemlemma}.
    \end{proof}

\end{document}